\newtheorem{definition}{Definition}
\newtheorem{theorem}{Theorem}
\newtheorem{problem}{Problem}
\newtheorem{proposition}{Proposition}
\newtheorem{lemma}{Lemma}
\newtheorem{remark}{Remark}
\newtheorem{example}{Example}
\newcommand{\lfteqn}{\begin{eqnarray} \begin{array}{lllllll}}
\newcommand{\ndeqn}{\end{array} \nonumber \end{eqnarray}}
\newcommand{\Lfteqn}{\begin{eqnarray} \begin{array}{lllllll}}
\newcommand{\Ndeqn}{\end{array}  \end{eqnarray}}
\begin{document}

\title{\LARGE \bf Minimization of  Sensor Activation in Discrete-Event Systems with Control Delays and Observation Delays
}

\author{Yunfeng~Hou, Ching-Yen~Weng, and Peng~Li
\thanks{Yunfeng Hou (yunfenghou@usst.edu.cn) is with the Institute of Machine Intelligence, University of Shanghai for Science and Technology, Shanghai, 200093, China.
Ching-Yen Weng (weng0025@e.ntu.edu.sg) is with the Robotics Research
Centre, Nanyang Technological University, Singapore.
Peng Li (peng.li@hit.edu.cn) is with the Harbin Institute of Technology (Shenzhen), Shenzhen, 518055, China.}}

\maketitle

\begin{abstract}

In discrete-event systems, to save sensor resources, the agent continuously adjusts sensor activation decisions according to a sensor activation policy based on the changing observations.
However, new challenges arise for sensor activations in networked discrete-event systems, where observation delays and control delays exist between the sensor systems and the agent.
 In this paper, a new framework for activating sensors in networked discrete-event systems is established.
 In this framework, we construct a communication automaton that explicitly expresses the interaction process between the agent and the sensor systems over the observation channel and the control channel.
Based on the communication automaton, we can define dynamic observations of a communicated string.
  To guarantee that a sensor activation policy is physically implementable and insensitive to non-deterministic control delays and observation delays, we further introduce the definition of delay feasibility.
We show that a delay feasible sensor activation policy can be used to dynamically activate sensors even if  control delays and observation delays exist.
A set of algorithms are developed to minimize sensor activations in a transition-based domain while ensuring a given specification condition is satisfied.
A practical example is also provided to show the application of the proposed framework.
Finally, we briefly discuss how to extend the proposed framework to a decentralized observation setting.
\end{abstract}

\begin{IEEEkeywords}
Discrete-event systems,\ sensor activation,\  control delays,\ observation delays, \ fault diagnosis.
\end{IEEEkeywords}

\section{Introduction}

%\IEEEPARstart{A}{n} operation fault is a deviation from the expected or desired
%behavior. 
%For safety and reliability, many practical systems need to tolerate certain levels of operation faults. 
%In this paper, we adopt a discrete-event system (DES) that is modeled as an automaton.
%In the context of DESs, fault diagnosis has attracted notable attention from the scientific and industrial communities over the past several decades \cite{lin94deds,sampath1995diagnosability,jiang04tac,yoo2002polynomial,carvalho13ac,carvalho17ac,cao21tcns,yin19tac,yin19tc,yin16tsmcs,yin16ac,marcos22nahs,viana19ac,viana19tac,wang20is}.
%A system is said to be diagnosable \cite{lin94deds,sampath1995diagnosability} (respectively, $K$-diagnosable) if any string that
%contains any type of hidden fault event can be distinguished from strings that do not contain that type of fault event within finite (respectively, $K$) steps. 

\IEEEPARstart{S}{ystem} observations are achieved via sensors. 
For the sake of security, battery power, sensor availability, and lifespan, sensor activations are costly. 
It is necessary to turn on and off sensors to efficiently make observations for fulfilling certain tasks, such as fault diagnosis and control.
The problem of sensor activation optimization in discrete-event systems (DESs) has been extensively investigated for more than twenty years.
It was first investigated under static observations \cite{yoo02tac,jiang03tac}, where an optimal set of events whose occurrences must be sensed was calculated.
To reduce sensor costs, the concept of sensor activation was introduced in \cite{cassez08fi,cassez07acsd,cassez07tase}, where sensors for the same events can be turned on and off dynamically after different observed event strings.
%After that, many approaches have been proposed for minimizing the sensor activations.
%Generally, approaches for sensor activation can be divided into language-based and transition-based according to the solution domain of the sensor activation policy (SAP) to be synthesized.
To compute a minimal language-based SAP, the authors in \cite{dallal14tac} proposed a game structure called the most permissive observer (MPO), which embeds all the valid SAPs for fault diagnosis. 
The MPO was further investigated in \cite{yin19ac} by considering a general class of properties.
%In \cite{yin19ac}, the correct diagnosis is required to be announced no later than $K$ event occurrences when a fault event occurs.
How to minimize sensor activations in a language-based solution domain was also investigated in
\cite{wang19tac}, where the solutions do not need to be limited to a game structure.
The authors in \cite{shu13tase,wang2016online} developed online approaches for the minimization of sensor activations.
Sensor activations for the purpose of control or fault diagnosis were also considered in a decentralized observation setting in \cite{lafortune10tac,wang10ac,yin18tac} where the system is monitored by several local agents, and each of them makes its own observation and sensor activation decision.
Arguing that the data of a DES are usually stored as transitions, the authors in \cite{lafortune10tac,weilin16tac} optimized sensor activation policies by limiting the solution domain to the transitions of the systems such that strings ending up at the same state must be followed by the same sensor activation decisions.

\begin{figure}
	\begin{center}
		\includegraphics[width=7.0cm]{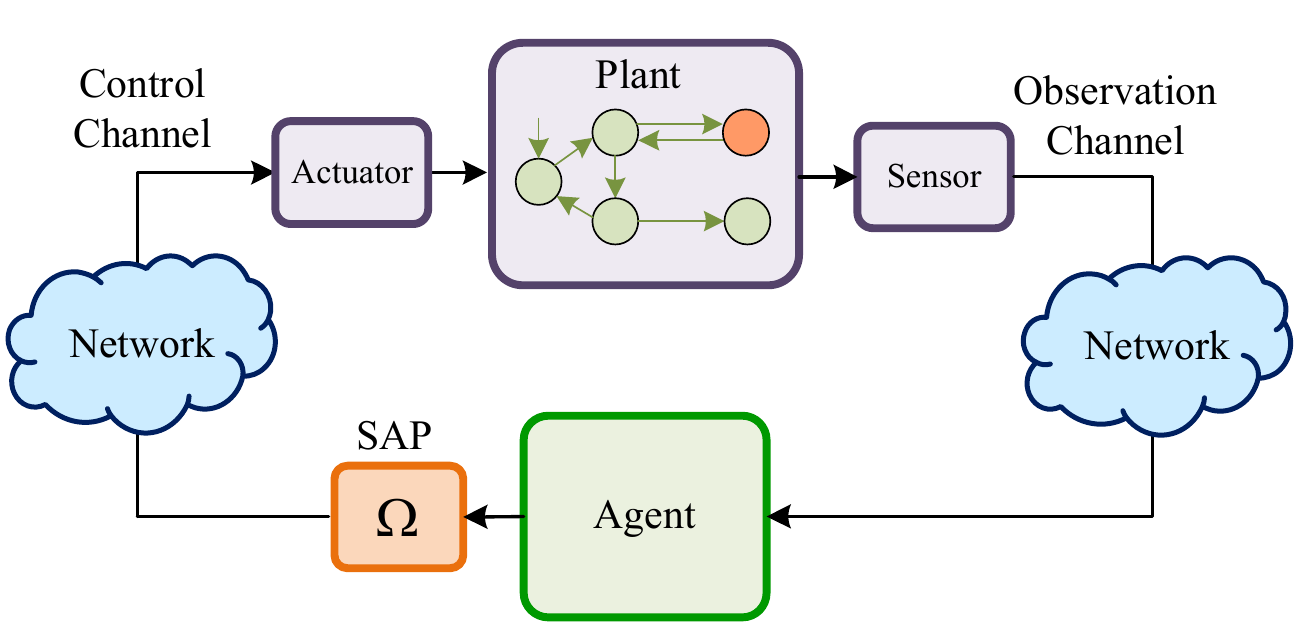}   
		\caption{Sensor activations for networked DESs.} 
		\label{Fig1}
	\end{center}
\end{figure}

\textcolor{blue}{Nowadays, in many modern applications, the system and the agent are connected via networks to communicate sensor readings and  control decisions.
We call such a system as a networked DES, which has drawn considerable attention for supervisory control in the DES community over the past few years \cite{lin2014control,liu21tac, shu17tac1,zhu19cdc,tai22ac,yunfeng23tcns,hou20tac,xu20tcns,zhao17tcns,alves20tac,weilin22ac}. 
Compared with non-networked DESs, the networked structure provides a more flexible and agile way to control a system. 
For example, the agent can be an edge computing node, which can share the computing resources with other edge computing nodes.
However, communicating data over networks is challenging since non-deterministic observation delays and control delays are unavoidable.
On the other hand, even for the non-networked DESs, implementing sensor activation commands could involve preparations after commands are received, which incurs control delays, and detecting whether an event has occurred takes time, which incurs observation delays.
Therefore, a key problem for sensor activations in DESs is how to overcome the  possible delays existing between the agent and the plant.}

In this paper, we propose a new framework to compute a minimal SAP for decision-making, if possible, that is physically implementable and insensitive to possible observation delays and control delays. 
As depicted in Fig. \ref{Fig1}, it is assumed that 1) the occurrences of observable events are detected by sensors and the observations are sent to the agent over the networks; 2) based on the  observations, the corresponding sensor activation decisions made by the agent are sent to the actuator of the plant over the networks.
Due to network characteristics, communication delays exist in both the observation channel and the control channel. 
These delays are non-deterministic and upper bounded but do not change the order of observations and controls.
\textcolor{blue}{As far as we know, this is the first study to address the sensor activation issue of DESs subject to non-deterministic observation delays and control delays.}

To track possible observation delays and control delays, we first model the observation channel and the control channel as an observation channel configuration and a control channel configuration, respectively.
To determine which sensor activation decision is taking effect in the presence of observation delays and control delays, a communication automaton is constructed to dynamically track the plant state, the observation channel configuration, and the control channel configuration.
To guarantee that an SAP is insensitive to observation delays and control delays, it is required that (i) if the SAP intends to activate or deactivate a sensor after the communication of a string, this sensor can be activated or deactivated after this string for all possible control delays and observation delays.
To guarantee that an SAP is physically implementable, it is required that (ii) the sensor activation decision made for every event after two indistinguishable communicated strings must be consistent.
We say that an SAP is delay feasible if both (i) and (ii) are satisfied.
We prove that the union of any two delay feasible SAPs is also delay feasible.
We also prove that the anti-monotonicity property holds for a delay feasible SAP, i.e., the more sensors a delay feasible SAP activates, the fewer string pairs an agent confuses under observation delays and control delays.
We present a set of algorithms to compute a minimal delay feasible transition-based  SAP while ensuring that a given specification condition is satisfied.
Finally, we briefly discuss how to extend the proposed framework to a decentralized observation setting, where several agents are connected to the plant over networks and jointly make sensor activation decisions based on their own sensor readings.

The rest of this paper is organized as follows. 
Section II presents some preliminary concepts. 
Section III discusses how to activate sensors under communication delays.  
Section IV provides algorithms to compute a minimal delay feasible transition-based SAP.
We use a practical example to illustrate the application of the proposed framework in Section V.
Section VI extends the proposed framework to a decentralized system.
Section VII concludes this paper.
%Due to the limitation of the space, all the proofs are omitted in this paper and can be found in the Appendix of .
%For brevity, all the proofs of this paper are provided in Appendix.

\section{Preliminaries}

\subsection{Preliminaries}
We model a DES using a deterministic finite-state automaton $G=(Q,\Sigma,\delta,q_0)$, where $Q$ is the set of states, $\Sigma$ is the set of events, $\delta: Q\times \Sigma \rightarrow Q$ is the transition function, and $q_0$ is the initial state. 
$\delta$ is extended to $Q \times \Sigma^*$ in the usual way \cite{lafortune07book}.  
The language generated by $G$ is denoted by $\mathcal{L}(G)$.
$\varepsilon$ is the empty string.
``$!$'' means ``is defined'', and ``$\neg$!'' means ``is not defined''.
We use $\Sigma_o \subseteq \Sigma$ to denote the set of events being potentially
observable and  $\Sigma_{uo}=\Sigma \setminus \Sigma_o$ to denote the set of events always unobservable.
For each $\sigma \in \Sigma_o$, there is a sensor that can be activated to make it observable (``turning on'' the sensor). 
We denote $TR(G)=\{(q,\sigma) \in Q \times \Sigma:\delta(q,\sigma)!\}$ by the set of all the transitions of $G$.

Given a string $s$, let $|s|$ be the length of a string $s$.
The set of all the prefixes of $s$ is denoted as $\overline{\{s\}}=\{s':(\exists s'')s's''=s\}$.
Given a string $s=\sigma_1\cdots\sigma_k\in \Sigma^*$, we write $s^0=\varepsilon$ and $s^i=\sigma_1\cdots\sigma_i$ for $i=1,\ldots,k$.
Let $s_{-i}$ be the string in $\overline{\{s\}}$ with $|s_{-i}| = \max\{0, |s|-i\}$.
The prefix-closure of a language $L\subseteq \Sigma^*$ is denoted as $\overline{L}$.
$L$ is said to be prefix-closed if $\overline{L}=L$.
$\mathbb{N}$ is the set of natural numbers.
Given a $n \in \mathbb{N}$, let $\Sigma^{\le n}=\{s\in \Sigma^*:|s|\le n\}$ be the set of strings in $\Sigma^*$ whose lengths are less than or equal to $n$.
Given automata $G_1$ and $G_2$, the parallel composition of $G_1$ and $G_2$ is denoted by $G_1||G_2$ \cite{lafortune07book}.
Let $|V|$ be the cardinality of a set $V$.

\textcolor{blue}{The occurrences of transitions of $G$ take time. To describe this, we define the minimum occurring time function $t_{\downarrow}: Q \times \Sigma \rightarrow \mathbb{N}$. Formally, for each transition $(q,\sigma)\in TR(G)$, $t_{\downarrow}(q,\sigma)$ means that event $\sigma$ cannot complete its occurrence at state $q$ unless $t_{\downarrow}(q,\sigma)$ units of time (seconds, for example) have elapsed since it starts to occur.
We extend function $t_{\downarrow}$ to a string in the following manner.
Given a string $s=\sigma_1\cdots \sigma_k \in \Sigma^*$ that is defined at some $q\in Q$, i.e., $\delta(q,s)!$, we write $\delta(q,s^i)=q_i$ for $i=0,1,\ldots,k$.
Then, the minimum occurring time of $s$ at state $q$ is defined as: $t_{\downarrow}(q,s)=t_{\downarrow}(q_0,\sigma_1)+\cdots + t_{\downarrow}(q_{k-1},\sigma_k).$ 
Only a finite number of events can occur in one unit of time.
In this paper, we say that an event occurs if this event completes its occurrence.}

\subsection{Sensor Activation}

%\textcolor{blue}{Given a transition $(q,\sigma)\in TR(G)$, we assume the interval time between the occurrence of $\sigma$ (at $q$) and the occurrence of the next event (that is active at  $\delta(q,\sigma)$) has both an upper bound and a lower bound.
%Specifically, we denote $(q,\sigma)^{\uparrow T}$  and $(q,\sigma)^{\downarrow T}$ by the upper bound and the lower bound of the interval time, respectively.
%Clearly, $(q,\sigma)^{\downarrow T} \le (q,\sigma)^{\uparrow T}$.}

In the context of sensor activation, an occurrence of an observable event can no longer be sensed by the agent unless the corresponding sensor is activated when this event happens.
The process of sensor activations in networked DESs is depicted in Fig. \ref{Fig1}, where the event occurrences are communicated to the agent and the sensor activation commands are delivered to the actuator of the plant over a shared network subject to observation delays and control delays.
We make the following assumptions. 1) \textcolor{blue}{The observation delay is upper bounded by $N_o$ units of time, i.e., an event must have been communicated to the agent before no more than $N_o$ units of time since it occurred;} 2) \textcolor{blue}{The control delay is upper bounded by $N_c$ units of time, i.e., a sensor activation command must have been executed by the actuator of the plant before no more than $N_c$ units of time since it was issued;} 3) Both the control channel and the observation channel satisfy the first-in-first-out (FIFO) property, i.e., the sensor activation commands are executed in the same order as they were issued, and the observable event occurrences are received in the same order as they occurred; 4) The actuator always uses the most recently received sensor activation command, and the initial sensor activation command can be executed without any delays since the initial command can be executed beforehand.

The sensor activation command is issued according to an  SAP denoted by $\Omega: \mathcal{L}(G)\rightarrow 2^{\Sigma_o}$.
Specifically, given any $s, s\sigma\in \mathcal{L}(G)$, the sensor for $\sigma\in \Sigma$ after the communication of $s$ should be activated if $\sigma \in \Omega(s)$. 
We call such an SAP $\Omega$ a language-based (or string-based) SAP because the domain of $\Omega$ is the system language of $\mathcal{L}(G)$.
Note that we distinguish ``the occurrence of an event'' and ``the communication of an event'' in this paper.
This is because when observation delays exist, an event may be delayed at the observation channel and not be communicated to the agent immediately after its occurrence. 
We recall the conventional definition of information mapping from \cite{lafortune10tac,wang10ac,dallal14tac,yin18tac,yin19ac} as follows.
Given an SAP $\Omega$, the corresponding information mapping $\theta^{\Omega}:\mathcal{L}({G})\to \Sigma_o^*$ is iteratively defined as: $\theta^{\Omega}(\varepsilon)=\varepsilon$, and for all $s,s\sigma \in \mathcal{L}({G})$, $\theta^{\Omega}(s\sigma)=\theta^{\Omega}(s)\sigma$, if $\sigma \in \Omega(s)$, and $\theta^{\Omega}(s\sigma)=\theta^{\Omega}(s)$, if $\sigma \notin \Omega(s)$.

\textcolor{blue}{To determine whether a sensor for an event is activated, we need to know the sensor activation command taking effect when this event occurs.
If there exist no delays, the sensor activation command taking effect can be uniquely determined based on the current observation.
However, if there are delays, the sensor activation commands taking effect between two successive event occurrences are non-deterministic.
A sensor activation command may take effect after a string for some delays, but may not take effect for the other delays.
%That is, the observation to an occurred string is non-deterministic.
It is not an easy task to activate or deactivate a sensor in networked DESs because of the non-deterministic observation delays and control delays. 
We use  Example \ref{Exa1} to illustrate this.}

\begin{figure}
	\begin{center}
		\includegraphics[width=7.4cm]{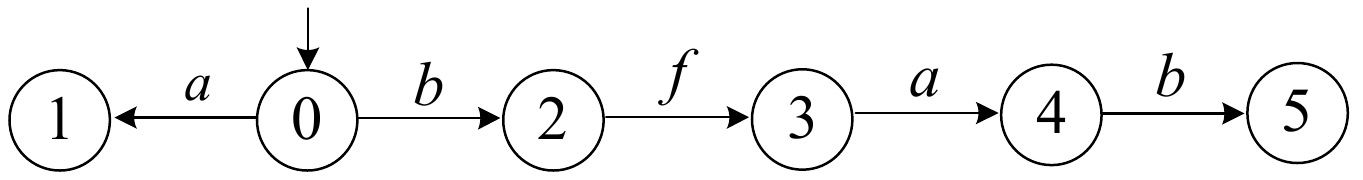}   
		\caption{Automaton $G$ in Example \ref{Exa1}.} 
		\label{Fig2}
	\end{center}
\end{figure}

\begin{example}\label{Exa1}
\textcolor{blue}{Let us consider the system $G$ depicted in Fig. \ref{Fig2}.
Let $\Sigma_o=\{a,b\}$ and $\Sigma_{uo}=\{f\}$.
Let $t_{\downarrow}(2,f)=1$ and $t_{\downarrow}(q,\sigma)=2$ for all $(q,\sigma)\in TR(G) \setminus \{(2,f)\}$.
We want to infer the occurrence of $f$ when $bfa$ is communicated.
To distinguish $bfa$ from $a$, we must activate the sensor for $b$ when the system is in state $0$.
In addition, to distinguish $bfa$ from $b$, we must active the sensor for $a$ when the system is in state $3$.}

\textcolor{blue}{If there are no observation and control delays, we can first activate the sensor for $b$, and then activate the sensor for $a$ after observing $b$.
It can be inferred that event $f$ must have occurred when we see $ba$.
We write the SAP as: $\Omega(\varepsilon)=\{b\}$, $\Omega(b)=\Omega(bf)=\{a\}$, and  $\Omega(bfa)=\Omega(bfab)=\emptyset$.}

\textcolor{blue}{However, the SAP $\Omega$ may fail to work under observation delays and control delays.
For example, let $N_o=1$ and $N_c=2$.
Since $t_{\downarrow}(2,f)=1 \le N_o$ and $t_{\downarrow}(3,a)=2\le N_c$,  let us consider the case that 1) the occurrence of $b$ does not be communicated to the agent until $f$ occurs after $b$; 2) the sensor activation decision made after the communication of $b$ does not be executed until $a$ occurs after $bf$.
If so, the sensor activation command taking effect when event $a$ starts to occur at state 3 is $\Omega(\varepsilon)$.
Since $a \notin \Omega(\varepsilon)$,  we cannot detect the occurrence of $a$ after $bf$.
Moreover, since $f\in\Sigma_{uo}$, we cannot distinguish $bfa$ and $b$ and thus does not know if $f$ has occurred when $bfa$ is communicated.
Furthermore, since $a \in \Omega(b)$ and $a \notin \Omega(bfa)$, we do not know if the sensor for $a$ should be activated when we observe $b$.
Thus, it is necessary to develop a new framework to determine a SAP that is physically feasible and insensitive to observation delays and control delays.}
%Thus, it is necessary to develop a new framework for sensor activation with possible control delays and observation delays.}

\end{example}

\section{Sensor Activation for Networked DESs}

In this section, we develop a framework for sensor activations in networked DESs.
In the proposed framework, we explicitly model  the ``dynamics of the observation channel'' and ``dynamics of the control channel''.
We then construct a communication automaton to describe how the agent and the plant interact with each other over the observation channel and the control channel.
Based on the communication model, we can define the observation mapping of an occurred string, even if there exist observation delays and control delays.

\subsection{Modeling Communication Channels}

We first introduce the definition of observation channel configuration as follows.

\begin{definition}\label{DefI}
\textcolor{blue}{(Observation Channel Configuration)} The {observation channel configuration} is defined as a sequence of pairs $\theta_{o}=(\sigma_1,n_1)\cdots (\sigma_k,n_k) \in (\Sigma \times [0,N_o])^*$, where $\sigma_1\cdots\sigma_k$ is a string (in the same order as they were generated) that has occurred but not been communicated due to observation delays, and $n_i$ is the minimum occurring time of the string occurring when $\sigma_i$ is delayed at the observation channel.
\end{definition}

We denote by $\Theta_{o} \subseteq (\Sigma\times [0,N_o])^{\le N}$ the set of all the possible observation channel configurations, where $N$ is the maximum length of an $\theta_o\in \Theta_o$. 
Given an  $\theta_o \in \Theta_o$, if $\theta_o=(\sigma_1,n_1)\cdots(\sigma_k,n_k)\neq\varepsilon$, we let $\mathbf{MAX}^{obs}(\theta_o)=n_1$ be the maximal observation delays, and if $\theta_o=\varepsilon$, we let $\textbf{MAX}^{obs}(\theta_o)=0$.
\begin{remark}\label{Rem1}
\textcolor{blue}{Since the observation delays are upper bounded by $N_o$ units of time,  $N$ is determined by the maximum length of the string whose  minimum occurring time is no larger than $N_o$.
Specifically, let $s \in \Sigma^*$ be a string that is defined at some $q\in Q$ in $G$, i.e., $(\exists q\in Q)\delta(q,s)!$.
Without loss of generality, suppose that $s$ is the longest such string with $t_{\downarrow}(q,s) \le N_o$.
By assumption, if an event $\sigma$ occurs, it will be communicated in the worst case when a string with the length of $|s|$ has occurred after $\sigma$.
Thus, by Definition \ref{DefI}, the maximum length of an observation channel configuration is $|s|+1$, i.e., $N=|s|+1$. 
%since $\sigma_1$ must have been communicated to the agent before the next event occurrence (after $s$).
%Therefore, we have $N=|s|$ in the worst case.
Since only a finite number of events can occur in $N_o$ units of time, $|s|$ is finite.
Therefore, $N=|s|+1$ is finite.}
\end{remark}
%Since the observation delays are upper bounded by $N_o$, $N$ is finite.
%{Since the observation delays are upper bounded by $N_o$, when an event $\sigma\in \Sigma$ occurs, there are at most $N_o$ additional event occurrences before $\sigma$ is communicated to the agent.
%Thus, the length of a string that has occurred but not been communicated to the agent can be $N_o+1$ at most, i.e., $N=N_o+1$.} 

To update $\theta_o$, we introduce the following two operators.

\begin{enumerate}
\item 
When an event $\sigma \in \Sigma$ occurs at state $q$, to update $\theta_o$, we define the operator $\textbf{IN}^{obs}:\Theta_o  \times Q \times \Sigma \rightarrow \Theta_o$ as follows: for all $\theta_{o} \in \Theta_o$ and all $q \in Q$ and all $\sigma \in \Sigma$, 
  \Lfteqn\label{Eq2}
  \textbf{IN}^{obs}(\theta_o,q,\sigma)= \begin{cases}
   \theta_o'  &\text{if}\ \textbf{MAX}^{obs}(\theta_o^+) \le N_o\\ 
\neg !  &\text{if}\ \textbf{MAX}^{obs}(\theta_o^+) > N_o
  \end{cases}
  \Ndeqn  
where if $\theta_o=(\sigma_1,n_1)\cdots(\sigma_k,n_k) \neq \varepsilon$, then $\theta_o'=(\sigma_1,n_1+t_{\downarrow}(q,\sigma))\cdots(\sigma_k,n_k+t_{\downarrow}(q,\sigma))$, and if $\theta_o=\varepsilon$, then $\theta_o'=(\sigma,0)$.

\item  When a new event $\sigma \in \Sigma$ is communicated to the agent, to update $\theta_o$, we define the operator $\textbf{OUT}^{obs}:\Theta_o \times \Sigma \rightarrow \Theta_o$ as follows: for all $\theta_{o} \in \Theta_o$ and all $\sigma \in \Sigma$,
 \Lfteqn\label{Eq3}
  \textbf{OUT}^{obs}(\theta_o,\sigma)= \begin{cases}
   \theta_o'  &\text{if}\ \theta_o=(\sigma_1,n_1)\cdots\\
& (\sigma_k,n_k) \neq \varepsilon \wedge \sigma=\sigma_1\\
  \neg ! & \text{otherwise}
  \end{cases}
  \Ndeqn  
where $\theta'_o=(\sigma_2,n_2)\cdots(\sigma_k,n_k)$.
\end{enumerate}

{By assumption, the minimum occurring time of the string happening when $\sigma_1$ is delayed at the observation channel should never exceed $N_o$.
%Meanwhile, by definition, $\textbf{MAX}^{obs}(\theta_o)$ records the minimum execution time of this string. 
Thus, as shown in (\ref{Eq2}), $\textbf{IN}^{obs}(\theta_o,q,\sigma)$ is defined iff $\textbf{MAX}^{obs}(\theta_o^+) \le N_o$.
%Otherwise, if $\textbf{MAX}^{obs}(\theta_o) > N_o$, $\sigma_1$ must have been communicated before the occurrence of $\sigma$. 
When an event $\sigma \in \Sigma$ occurs, by FIFO, $\textbf{IN}^{obs}(\theta_o,q,\sigma)$ adds $(\sigma,0)$ to the end of $\theta_o$.
Additionally, we add $t_{\downarrow}(q,\sigma)$ to $n_i$ to update the minimum occurring time after the occurrence of $\sigma$.}
%If $n_i+t_{\downarrow}(q,\sigma)>N_o$, $\sigma_i$ must have been communicated to the agent during the occurrence of $\sigma$.
%Thus,  we define $\textbf{OUT}^{obs}(\cdot)$ to remove all the communicated events from $\theta_o$.} 
On the other hand, if an event $\sigma$ is communicated, by FIFO, $\sigma$ must be the first component of $\theta_o$. 
%Specifically, by FIFO, events are communicated in the same order as they were generated.
As shown in (\ref{Eq3}), $\textbf{OUT}^{obs}(\theta_o,\sigma)!$ iff $\theta_{o}=(\sigma_1,n_1) \cdots (\sigma_k,n_k) \neq\varepsilon \wedge \sigma_1=\sigma$.
When $\sigma$ is communicated,  $\textbf{OUT}^{obs}(\theta_o,\sigma)$ removes $(\sigma_1,n_1)$ from the head of $\theta_o$.

\begin{example}\label{Exa2}
{We continue with Example \ref{Exa1}.
%We have $N=N_o+1=2$.
Let ${\theta}_o=(b,0)\in \Theta_o$.
Since $t_{\downarrow}(2,f)=1$, when  $f$ occurs after $b$, by (\ref{Eq2}), $\theta'_o=\mathbf{IN}^{obs}({\theta}_o,2,f)=(b,1)(f,0)$.
Since $t_{\downarrow}(3,a)=2$, event $a$ cannot occur after $bf$ unless event $b$ has been communicated to the agent.
When the occurrence of $b$ is communicated, by (\ref{Eq3}),  $\mathbf{OUT}^{obs}({\theta}'_o,b)=(f,0)$.}
\end{example}

To model the dynamics of the control channel, we define the control channel configuration as follows.

\begin{definition}\label{Def2}
\textcolor{blue}{(Control Channel Configuration)} The {control channel configuration} is defined as a sequence of pairs: $\theta_{c}=(\sigma_1,m_1) \cdots (\sigma_k,m_k) \in (\Sigma \times [0,N_c])^*$, where $\sigma_1\cdots \sigma_k \in \Sigma^*$ is a string that has been communicated, but the sensor activation commands made after these communications are not executed, and $m_i$ is the minimum occurring time of the string occurring when the sensor activation decision made after the communication of $\sigma_i$ is delayed at the control channel.
\end{definition}

We denote by $\Theta_{c} \subseteq (\Sigma \times [0,N_c])^{\le M}$  the set of all the possible control channel configurations, where $M\in\mathbb{N}$ is the maximum length of $\theta_c\in\Theta_c$.
Given an $\theta_c \in \Theta_c$, if $\theta_c=(\sigma_1,m_1)\cdots(\sigma_k,m_k)\neq\varepsilon$, let $\mathbf{MAX}^{ctr}(\theta_{c})=m_1$ be the maximal control delays, and if $\theta_c=\varepsilon$, let $\textbf{MAX}^{ctr}(\theta_{c})=0$.

\begin{remark}
\textcolor{blue}{By assumption, an occurred event can be communicated before no more than $N_o$ units of time. When this event is communicated, a sensor activation command is made and can be executed before no more than $N_c$ units of time. 
Thus, a sensor activation command delayed at the control channel can be the one made after an event occurring in the past $N_o+N_c$ units of time.
As discussed in Remark \ref{Rem1}, $M$ is a finite constant and determined by the maximum length of a string whose minimum occurring time is no larger than $N_o+N_c$.}
\end{remark}

To update $\theta_c$, we introduce the following three operators.
\begin{enumerate}
\item 
When a new event $\sigma$ occurs at state $q$, we define $\textbf{PLUS}:\Theta_c \times Q \times \Sigma \rightarrow \Theta_c$ as follows: for all $\theta_c \in \Theta_c$ and all $q \in Q$ and all $\sigma \in \Sigma$,
  \Lfteqn\label{Eq4}
  \textbf{PLUS}(\theta_c,q,\sigma)= \begin{cases}
   \theta_c'  &\text{if}\ \textbf{MAX}^{ctr}(\theta'_{c})\le N_c \\
\neg !  &\text{if}\ \textbf{MAX}^{ctr}(\theta'_c)> N_c
  \end{cases}
  \Ndeqn   
where if $\theta_c=(\sigma_1,m_1)\cdots(\sigma_k,m_k)\neq\varepsilon$,  $\theta_c'=(\sigma_1,m_1+t_{\downarrow}(q,\sigma))\cdots(\sigma_k,m_k+t_{\downarrow}(q,\sigma))$, and if $\theta_c=\varepsilon$,  $\theta_c'=\varepsilon$.

\item 
When an event $\sigma \in \Sigma$ is communicated, we define $\textbf{IN}^{ctr}:\Theta_c \times \Sigma \rightarrow \Theta_c$ as follows: for all $\theta_c \in \Theta_c$ and all $\sigma \in \Sigma$,  
\Lfteqn\label{Eq5} 
\textbf{IN}^{ctr}(\theta_c,\sigma)=\theta_c(\sigma,0).
\Ndeqn 
\item When the sensor activation command made after the communication of $\sigma$ is executed, we define  $\textbf{OUT}^{ctr}:\Theta_c \times \Sigma \rightarrow \Theta_c$ as follows: for all $\theta_c \in \Theta_c$ and all $\sigma \in \Sigma$,
 \Lfteqn\label{Eq6}
  \textbf{OUT}^{ctr}(\theta_c,\sigma)= \begin{cases}
  \theta_c'  &\text{if}\ \theta_c=(\sigma_1,m_1)\cdots\\
&(\sigma_k,m_k)\neq\varepsilon \wedge  \sigma=\sigma_1\\
    \neg ! & \text{otherwise}
  \end{cases}
  \Ndeqn
where $\theta'_c=(\sigma_2,m_2)\cdots(\sigma_k,m_k)$.
\end{enumerate}

When a new event $\sigma$ occurs at state $q$, as shown in (\ref{Eq4}),  $\textbf{PLUS}(\theta_c,q,\sigma)$ adds $t_{\downarrow}(q,\sigma)$ to all the natural numbers in $\theta_c$.
Since control delays are upper bounded by $N_c$, $\textbf{PLUS}(\theta_c,q,\sigma)$ is defined iff $\textbf{MAX}^{ctr}(\theta_{c}')\le N_c$.
%, i.e., the minimum occurring time of the string occurring when the sensor activation decision made after the communication of $\sigma_1$ is delayed at the control channel is no larger than $N_c$.
\textcolor{blue}{When a new event $\sigma$ (observable or not) is communicated to the agent, a sensor activation command can be issued. 
Then, by Definition \ref{Def2}, we should update the control channel configuration by adding $(\sigma,0)$ to the end of $\theta_c$.
As shown in (\ref{Eq5}), we define $\textbf{IN}^{ctr}(\theta_c,\sigma)$ to achieve this.
Note that an issued sensor activation command can be new only if a new observable event is communicated.}
When a sensor activation command is executed, by FIFO, it must be the one made right after the communication of the first component of $\theta_c$.
Thus, $\textbf{OUT}^{ctr}(\theta_c,\sigma)$ is defined if and only if $\theta_c=(\sigma_1,m_1)\cdots(\sigma_k,m_k)\neq \varepsilon \wedge  \sigma=\sigma_1$.
When the sensor activation command issued after the communication of $\sigma$ is executed, as shown in (\ref{Eq6}), $\textbf{OUT}^{ctr}(\theta_c,\sigma)$ removes the first component from the head of $\theta_c$.

\begin{example}\label{Ex3}
We continue with Example \ref{Exa1}.
Let ${\theta}_c=(b,0)\in \Theta_c$.
Since $t_{\downarrow}(2,f)=1$, when  $f$ occurs after $b$, by (\ref{Eq4}), $\theta'_c=\mathbf{PLUS}(\theta_c,2,f)=(b,1)$.
Then, if  $f$ is communicated, by (\ref{Eq5}), $\theta''_c=\mathbf{IN}^{ctr}(\theta'_c,f)=(b,1)(f,0)$.
Since $t_{\downarrow}(3,a)=2$ and $N_c=2$, by (\ref{Eq4}), event $a$ cannot occur at state 3 unless the command made after the communication of $b$ is executed.
When this command is executed, by (\ref{Eq6}), $\mathbf{OUT}^{ctr}(\theta''_c,b)=(f,0)$.
\end{example}

\subsection{Communication Automaton}

Next, we build a communication automaton to track observation delays and control delays that may occur when the agent communicates with the plant.
We  first introduce two special types of events.

\begin{enumerate}
\item  To keep track of what has been successfully communicated thus far, define bijection $h:\Sigma \rightarrow \Sigma_h$ such that $\Sigma_h=\{h(\sigma):\sigma \in \Sigma\}$. 
\textcolor{blue}{For all $\sigma \in \Sigma$,  we use $h(\sigma)$ to denote that the event occurrence of $\sigma$ has been communicated to the agent.}

\item  To model which sensor activation command is taken,  define bijection $g: \Sigma \rightarrow \Sigma_g$ such that $\Sigma_g=\{g(\sigma):\sigma \in \Sigma\}$. 
\textcolor{blue}{For all $\sigma \in \Sigma$,  we use $g(\sigma)$ to denote that the sensor activation command made immediately after the communication of $\sigma$ has been executed.}
\end{enumerate}
We define the inverse mapping $g^{-1}:\Sigma_g^* \to \Sigma^*$ as,  for all $g(\sigma) \in \Sigma_g$, $g^{-1}(g(\sigma))=\sigma$.
We extend $g$ and $g^{-1}$ to a set of strings in the usual way.
\textcolor{blue}{Since we distinguish ``the occurrence of an event", ``the communication of an event", and ``the execution of a sensor activation command'' in this paper, we assume that $\Sigma$, $\Sigma_h$, and $\Sigma_g$ are mutually disjoint.}

Formally,  we construct the communication automaton $\tilde{G}=(\tilde{Q}, \tilde{\Sigma}, \tilde{\delta}, \tilde{q}_{0})$,  where  $\tilde{Q} \subseteq Q \times \Theta_o \times \Theta_c$ is the state space;
$\tilde{q}_{0}=(q_0,\varepsilon,\varepsilon)$ is the initial state; 
$\tilde{\Sigma} \subseteq \Sigma \cup \Sigma_h \cup \Sigma_g$ is the event set;
The transition function $\tilde{\delta}:\tilde{Q} \times \tilde{\Sigma} \rightarrow \tilde{Q}$ is defined as:

\begin{itemize}
  \item For all $\tilde{q}=(q,  \theta_o, \theta_c) \in \tilde{Q}$ and all $\sigma \in \Sigma$, 

  \Lfteqn\label{Eq7}
  \tilde{\delta} (\tilde{q}, \sigma)= \begin{cases}
  \tilde{q}'  & \text{if}\ \delta(q,\sigma)! \wedge \textbf{IN}^{obs}(\theta_o,q,\sigma)! \\
& \wedge\textbf{PLUS}(\theta_c,q,\sigma)!\\
   \neg ! & \mbox{otherwise,}
  \end{cases}
  \Ndeqn 
where $\tilde{q}'= (\delta(q,\sigma), \textbf{IN}^{obs}(\theta_o,q,\sigma), \textbf{PLUS}(\theta_c,q,\sigma))$;

\item For all $\tilde{q}=(q,\theta_o, \theta_c) \in \tilde{Q}$ and all $h(\sigma) \in \Sigma_h$, 

  \Lfteqn\label{Eq8}
  \tilde{\delta}(\tilde{q}, h(\sigma))= \begin{cases}
   \tilde{q}'  &\text{if}\ \textbf{OUT}^{obs}(\theta_o,\sigma)!\\ 
   \neg ! & \text{otherwise,}
  \end{cases}
  \Ndeqn 
where $\tilde{q}'=(q, \textbf{OUT}^{obs}(\theta_o,\sigma), \textbf{IN}^{ctr}(\theta_c,\sigma))$;

\item For all $\tilde{q}=(q, \theta_o, \theta_c) \in \tilde{Q}$ and all $g(\sigma) \in \Sigma_g$, 

  \Lfteqn\label{Eq9}
  \tilde{\delta}(\tilde{q},  g(\sigma))= \begin{cases}
   \tilde{q}'  &\text{if}\ \textbf{OUT}^{ctr}(\theta_c,\sigma)!\\ 
   \neg ! & \text{otherwise,}
  \end{cases}
  \Ndeqn  
 where $\tilde{q}'=(q, \theta_o, \textbf{OUT}^{ctr}(\theta_c,\sigma))$.
\end{itemize}

We denote each state of $\tilde{G}$ by a triplet $\tilde{q}=(q,\theta_o,\theta_c) \in  Q \times \Theta_o \times \Theta_c$ such that $q \in Q$ tracks the plant state, $\theta_o \in \Theta_o$ tracks the observation channel configuration, and $\theta_c \in \Theta_c$ tracks the control channel configuration. 
For any $(q,\theta_o,\theta_c) \in \tilde{Q}$, by assumption, an event $\sigma \in \Sigma$ can occur at $(q,\theta_o,\theta_c)$ iff $\sigma$ is active at $q$, i.e., $\delta(q,\sigma)!$, and the observation delays and the control delays after the occurrence of $\sigma$ are no larger than $N_o$ and $N_c$, respectively, i.e., $\textbf{IN}^{obs}(\theta_o,q,\sigma)!  \wedge \textbf{PLUS}(\theta_c,q,\sigma)!$. 
Thus,  in (\ref{Eq7}),  $\tilde{\delta} (\tilde{q}, \sigma)!$ iff $\delta(q,\sigma)!\wedge \textbf{IN}^{obs}(\theta_o,q,\sigma)!  \wedge \textbf{PLUS}(\theta_c,q,\sigma)!$.
When $\sigma$ occurs, we set $q \leftarrow \delta(q,\sigma)$.
Meanwhile,  by (\ref{Eq2}) and (\ref{Eq4}), we set $\theta_o \leftarrow \textbf{IN}^{obs}(\theta_o,q,\sigma)$ and $\theta_c \leftarrow \textbf{PLUS}(\theta_c,q,\sigma)$.

In (\ref{Eq8}), for any $(q, \theta_o,\theta_c) \in  \tilde{Q}$, by FIFO, $\sigma$ can be communicated iff the first event of $\theta_o$ is $\sigma$, i.e.,   $\textbf{OUT}^{obs}(\theta_o,\sigma)!$.
When $\sigma$ is communicated, by (\ref{Eq3}), $\theta_o \leftarrow \textbf{OUT}^{obs}(\theta_o,\sigma)$.
Upon the communication of $\sigma$, a sensor activation command is made.
By (\ref{Eq5}),  $\theta_c\leftarrow \textbf{IN}^{ctr}(\theta_c,\sigma)$.
Note that a new sensor activation command can be made only if $\sigma$ is observable, i.e., the sensor for $\sigma$ is activated at the time it occurs.

In (\ref{Eq9}), for any $(q,\theta_o,\theta_c) \in \tilde{Q}$, by FIFO, the sensor activation command made after the communication of $\sigma$ can be executed iff the first component of $\theta_c$ is $\sigma$, i.e., $\textbf{OUT}^{ctr}(\theta_c,\sigma)!$.
When this command is executed, by (\ref{Eq6}), $\theta_c\leftarrow \textbf{OUT}^{ctr}(\theta_c,\sigma)$.

\begin{figure}
	\begin{center}
		\includegraphics[width=8.6cm]{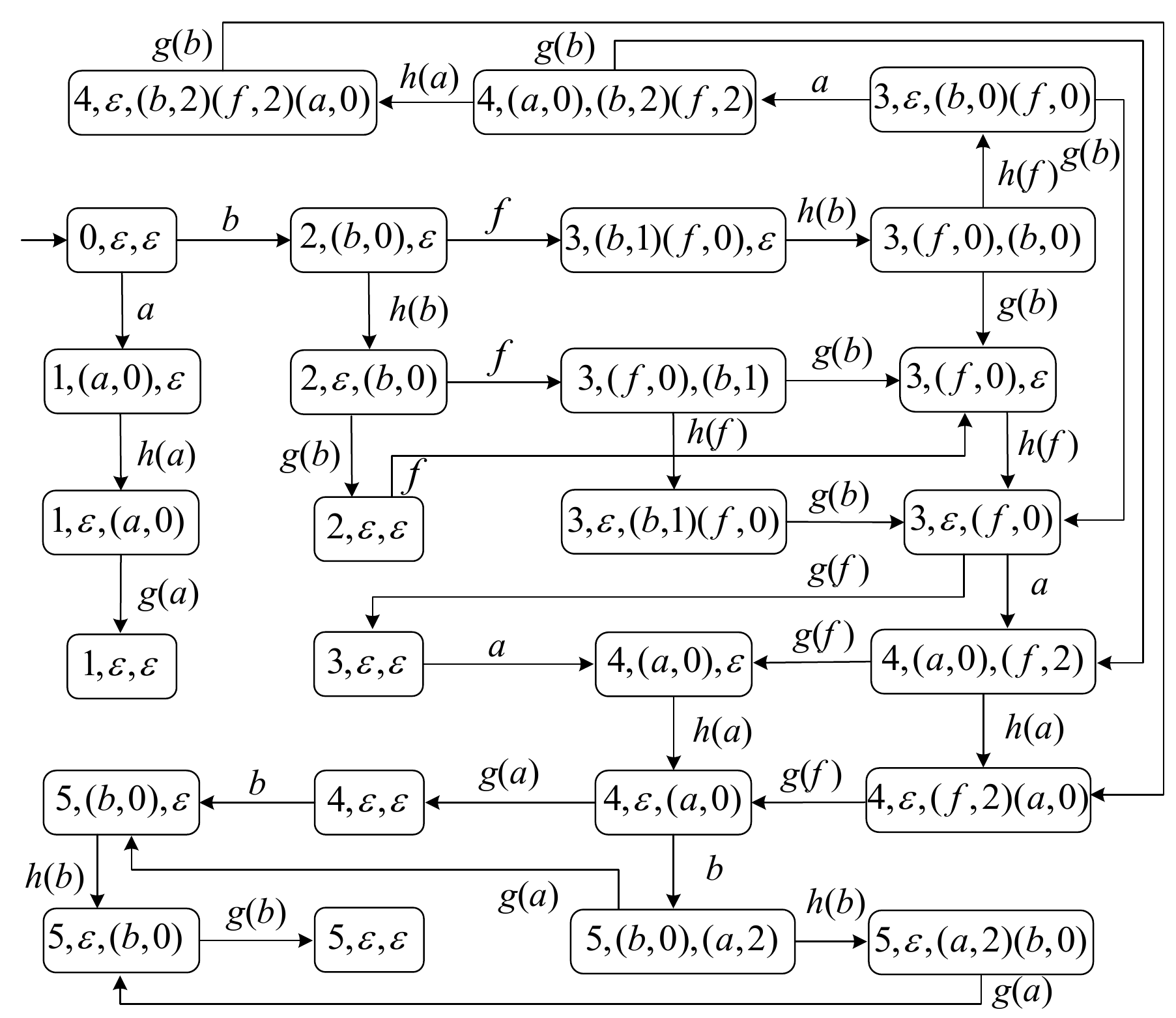}   
		\caption{The communication automaton $\tilde{G}$ in Example
\ref{Exa4}.} 
		\label{Fig3}
	\end{center}
\end{figure}

%\begin{figure}
%\centering \subfigure[Automaton
%$G$]{\label{Fig31}\includegraphics[width=4.1cm]{fig31.pdf}}
%\subfigure[Automaton
%$\tilde{G}$]{\label{Fig32}\includegraphics[width=8.8cm]{fig32.pdf}}
%\caption{Automata $G$ and $\tilde{G}$ in Example
%\ref{Ex4}.} \label{Fig3}
%\end{figure}

\begin{example}\label{Exa4}
We consider again Example \ref{Exa1}.
The system $G$ is given by Fig. \ref{Fig2}. 
%Suppose that the control delays and observation delays are both upper bounded by 1,  i.e.,
We have $N_o=1$ and $N_c=2$. 
The communication automaton $\tilde{G}$ is shown in Fig. \ref{Fig3}.

The initial state of $\tilde{G}$ is $\tilde{q}_0=(0,\varepsilon,\varepsilon)$. 
By Fig. \ref{Fig2},  both $a$ and $b$ are active at state $0$, i.e., $\delta(0,a)=1 \wedge \delta(0,b)=2$.
When $a$ or $b$ occurs, by (\ref{Eq2}), we have $\theta_{o}=\mathbf{IN}^{obs}(\varepsilon,0,a)=(a,0)$ or $\theta'_{o}=\mathbf{IN}^{obs}(\varepsilon,0,b)=(b,0)$.
%Moreover, since $\mathbf{MAX}(\varepsilon^+)=0\le N_c$, by (\ref{Eq4}), $\mathbf{PLUS}(\varepsilon^+)=\varepsilon$.
Thus,  by (\ref{Eq7}), $\tilde{\delta}(\tilde{q}_0,a)=\tilde{q}_1=(1,\theta_{o},\varepsilon)$ or $\tilde{\delta}(\tilde{q}_0,b)=\tilde{q}_2=(2,\theta'_{o},\varepsilon)$.
We next consider state  $\tilde{q}_1=(1,\theta_{o},\varepsilon)$.
Since $\theta_{o}=(a,0)$, by (\ref{Eq3}), the event occurrence of $a$ may be communicated to the agent, i.e., $\mathbf{OUT}^{obs}(\theta_{o},a)!$.
When the occurrence of $a$ is communicated,  $\theta_o \leftarrow \mathbf{OUT}^{obs}(\theta_{o},a)=\varepsilon$.
Meanwhile, following the communication of $a$, by (\ref{Eq5}), $\theta_c \leftarrow \mathbf{IN}^{ctr}(\varepsilon,a)=(a,0)$.
Thus, by (\ref{Eq8}), $\tilde{\delta}(\tilde{q}_1,h(a))=(1,\varepsilon,(a,0))$.
In this way, we can define all the transitions of $\tilde{G}$.
\end{example}

\begin{remark}
It is worth mentioning that the communication automaton $\tilde{G}$ is built for an arbitrary SAP.
This is the reason why the control channel is modeled by a string of communicated events instead of sensor activation commands issued according to a SAP.
Based on the strings of $\mathcal{L}(\tilde{G})$, we can determine which sensor activation command is taking effect.
For example, we consider a string $\mu=bh(b)f h(f)g(b)g(f)\in \mathcal{L}(\tilde{G})$ in Fig. \ref{Fig3}. 
Since $f$ occurs after $h(b)$,  $b$ is communicated before the occurrence of $f$.
However, since $g(b)$ occurs after $h(f)$,  the sensor activation decision made after the communication of $b$ is delayed and is not executed until the event occurrence of $f$ is communicated to the agent.
Therefore, we can determine that the commands taking effect right before the occurrences of $b$ and $f$ are both the initial sensor activation command.
\end{remark}

\begin{remark}
\textcolor{blue}{The computational complexity for the construction of $\tilde{G}$ is determined by the number of states of $\tilde{G}$, which is upper bounded by $|Q| \times |\Theta_o| \times |\Theta_c|$. Since $\Theta_o\subseteq (\Sigma \times [0,N_o])^{\le N}$ and $\Theta_c \subseteq (\Sigma \times [0,N_c])^{\le M}$, the number of states of $\tilde{G}$ is upper bounded by $|Q|\times |\Sigma|^{N+M} \times (N_o+1)^N \times (N_c+1)^M$.
Thus, the computational complexity for the construction of $\tilde{G}$ is polynomial with respect to (w.r.t.) $|Q|$, $|\Sigma|$, $N_o$, and $N_c$, but is exponential w.r.t. $N$ and $M$.}
\end{remark}

Given a string $\mu \in \mathcal{L}(\tilde{G})$,  let $\psi(\mu)$ and $\psi^g(\mu)$ be the strings obtained by removing all the occurrences of events in $\Sigma_h \cup \Sigma_g$ and $\Sigma \cup \Sigma_h$, respectively,  without changing the order of the remaining event occurrences in $\mu$. 
For example, let us consider $\mu=bfh(b)g(b)\in \mathcal{L}(\tilde{G})$ in Fig. \ref{Fig3}. We have $\psi(\mu)=bf$ and $\psi^g(\mu)=g(b)$.
We extend $\psi$ to a set of strings $L \subseteq \tilde{\Sigma}^*$ as: $\psi(L)=\{\psi(\mu):\mu \in L\}$.
Intuitively, for any $\mu \in \mathcal{L}(\tilde{G})$,
$\psi(\mu)$ records the string that has occurred in the plant, and $g^{-1}(\psi^g(\mu))$ tracks the string such that the sensor activation command issued after its communication is now taking effect.
%Thus, if $\mu\sigma \in \mathcal{L}(\tilde{G})$ with $\sigma \in \Sigma$, the occurrence of $\sigma$ after $\psi(\mu)$ can be sensed by the agent (maybe delayed) if $\sigma \in \Omega(g^{-1}(\psi^g(\mu)))$.
For example, we consider $\mu=bh(b)f \in \mathcal{L}(\tilde{G})$ in Fig. \ref{Fig3}. 
%We know that $f$ occurs after $\psi(bh(b))=b$.
Since $g^{-1}(\psi^g(bh(b)))=\varepsilon$, the sensor activation command being in use when $f$ occurs after $b$ is $\Omega(\varepsilon)$.
%Thus, if $c\in \Omega(\varepsilon)$ ($c\notin \Omega(\varepsilon)$), the occurrence of $c$ can (cannot) be sensed in this case.

Given an SAP $\Omega$, the observation mapping $P_{\Omega}:\mathcal{L}(\tilde{G})\to \Sigma_o^*$ is iteratively defined as $P_{\Omega}(\varepsilon)=\varepsilon$, and for all $\mu,\mu\sigma \in \mathcal{L}(\tilde{G})$,
  \Lfteqn\label{Eq10}
  P_{\Omega}(\mu\sigma)= \begin{cases}
  P_{\Omega}(\mu)\sigma  & \text{if}\ \sigma \in \Omega(g^{-1}(\psi^g(\mu))) \\
  P_{\Omega}(\mu) & \mbox{otherwise.}
  \end{cases}
  \Ndeqn
For any $\mu \in \mathcal{L}(\tilde{G})$, $P_{\Omega}(\mu)$ describes what the agent will see when the occurrence of  $\psi(\mu)$ is communicated to the agent.

We show that $\tilde{G}$ does not change the system language of $G$.

\begin{proposition}\label{Prop1}
Given a networked DES $G$, let $\tilde{G}$ be constructed as described above.
Then $\psi(\mathcal{L}(\tilde{G}))=\mathcal{L}(G)$.
\end{proposition}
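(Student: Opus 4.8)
The plan is to prove both inclusions $\psi(\mathcal{L}(\tilde{G})) \subseteq \mathcal{L}(G)$ and $\mathcal{L}(G) \subseteq \psi(\mathcal{L}(\tilde{G}))$ separately, with the core of the argument being an induction on the length of strings over the alphabet $\tilde{\Sigma}$. Before starting the induction I would isolate the single structural fact that drives everything: among the three transition rules defining $\tilde{\delta}$, only rule (\ref{Eq7}) — the one for genuine events $\sigma \in \Sigma$ — changes the plant component $q$ of a state $\tilde{q}=(q,\theta_o,\theta_c)$, and it does so exactly by applying $\delta(q,\sigma)$; the rules (\ref{Eq8}) and (\ref{Eq9}) for $h(\sigma)\in\Sigma_h$ and $g(\sigma)\in\Sigma_g$ leave $q$ untouched. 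Since $\psi$ is precisely the projection that erases $\Sigma_h\cup\Sigma_g$ and retains $\Sigma$, this says the plant-coordinate of the $\tilde{G}$-state evolves under $\mu$ exactly as $G$ evolves under $\psi(\mu)$.

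For the forward inclusion, I would prove by induction on $|\mu|$ the invariant: for every $\mu\in\mathcal{L}(\tilde{G})$, writing $\tilde{\delta}(\tilde{q}_0,\mu)=(q,\theta_o,\theta_c)$, we have $\psi(\mu)\in\mathcal{L}(G)$ and $\delta(q_0,\psi(\mu))=q$. The base case $\mu=\varepsilon$ is immediate from $\tilde{q}_0=(q_0,\varepsilon,\varepsilon)$ and $\psi(\varepsilon)=\varepsilon$. For the inductive step, write $\mu=\mu'x$ with $x\in\tilde{\Sigma}$ and split on whether $x\in\Sigma$, $x\in\Sigma_h$, or $x\in\Sigma_g$. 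In the $\Sigma_h$ and $\Sigma_g$ cases, $\psi(\mu)=\psi(\mu')$ and the plant coordinate is preserved by (\ref{Eq8})/(\ref{Eq9}), so the invariant transfers unchanged. In the $\Sigma$ case, the guard of (\ref{Eq7}) forces $\delta(q',x)!$ where $q'$ is the plant coordinate after $\mu'$; by the induction hypothesis $q'=\delta(q_0,\psi(\mu'))$, so $\delta(q_0,\psi(\mu')x)!$ and $\psi(\mu)=\psi(\mu')x\in\mathcal{L}(G)$ with the new plant coordinate equal to $\delta(q_0,\psi(\mu))$, closing the induction.

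For the reverse inclusion I would show that every $s\in\mathcal{L}(G)$ is realizable as $\psi(\mu)$ for some $\mu\in\mathcal{L}(\tilde{G})$, again by induction on $|s|$. The natural witness is the \emph{eager} execution in which each event, once it occurs, is immediately communicated and its command immediately executed before the next event occurs: concretely, I would take $\mu = \sigma_1\, h(\sigma_1)\, g(\sigma_1)\,\sigma_2\, h(\sigma_2)\, g(\sigma_2)\cdots$ for $s=\sigma_1\cdots\sigma_k$. The point of interleaving each $\sigma_i$ with its $h(\sigma_i)$ and $g(\sigma_i)$ before emitting $\sigma_{i+1}$ is that the channels are flushed back to $\varepsilon$ after every event, so the configurations never accumulate delay: one checks that $\textbf{IN}^{obs}(\varepsilon,q,\sigma)=(\sigma,0)$ with $\textbf{MAX}^{obs}=0\le N_o$ and $\textbf{PLUS}(\varepsilon,q,\sigma)=\varepsilon$ with $\textbf{MAX}^{ctr}=0\le N_c$, so the guards of (\ref{Eq7}) reduce exactly to $\delta(q,\sigma)!$, and the guards of (\ref{Eq8}), (\ref{Eq9}) are satisfied because the single pending component matches. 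Tracking the state, after the block for $\sigma_i$ the configuration returns to $(\delta(q_0,s^i),\varepsilon,\varepsilon)$, so the next block is enabled precisely when $\delta(\delta(q_0,s^i),\sigma_{i+1})!$, i.e. exactly when $s^{i+1}\in\mathcal{L}(G)$. Since $\psi$ erases the inserted $h(\sigma_i),g(\sigma_i)$, we recover $\psi(\mu)=s$.

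The main obstacle is \textbf{verifying that the guards never block along the eager witness}, since this is where the bound assumptions $N_o,N_c$ and the definedness conditions on $\textbf{IN}^{obs}$ and $\textbf{PLUS}$ actually enter. The delicacy is that $G$ may force an unobservable or observable event to occur before the previous one has cleared the channels; the eager schedule sidesteps this by emptying both channels between events, but one must confirm that the emptied-channel guards $\textbf{MAX}^{obs}(\varepsilon^+)\le N_o$ and $\textbf{MAX}^{ctr}(\varepsilon)\le N_c$ hold trivially (they evaluate to $0$) and that no FIFO ordering constraint is violated — which holds because at each block exactly one event is pending in each channel and it is the one being communicated/executed. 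Once this definedness bookkeeping is discharged, both inductions close and the two inclusions together give $\psi(\mathcal{L}(\tilde{G}))=\mathcal{L}(G)$.
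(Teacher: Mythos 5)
Your proposal is correct and follows essentially the same route as the paper: the forward inclusion by induction on $|\mu|$ with the invariant that the plant coordinate of $\tilde{\delta}(\tilde{q}_0,\mu)$ equals $\delta(q_0,\psi(\mu))$, and the reverse inclusion via the eager witness $\sigma_1 h(\sigma_1) g(\sigma_1)\cdots\sigma_k h(\sigma_k) g(\sigma_k)$. Your explicit verification that the guards of (\ref{Eq7})--(\ref{Eq9}) never block along this witness (the channels being flushed to $\varepsilon$ after each block) is a welcome elaboration of the step the paper dismisses with ``it can be easily verified.''
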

\begin{proof}
Please see Appendix-$A$.
\end{proof}

\begin{remark}
\textcolor{blue}{In some practical applications, there may exist communication losses between the plant and the agent. 
For example, some observable transitions may be lost when they are communicated to the agent. 
We denote the set of transitions of $G$ by $\delta_G=\{(q,\sigma,q'):\delta(q,\sigma)=q'\}$ and the set of observable transitions of $G$ by $\delta_{G,o}=\{(q,\sigma,q'):\delta(q,\sigma)=q' \wedge \sigma \in \Sigma_o\}$.
We partition $\delta_{G,o}$ into $\delta_{G,L}$ and $\delta_{G,o} \setminus \delta_{G,L}$, where $\delta_{G,L}$ is the set of transitions whose corresponding event occurrences are either observed without losses or observed with losses.
%For all $\sigma \in \Sigma_o$, we use $\sigma_{\varepsilon}$ to denote that the observation of $\sigma$ has been lost.
We use a new event $\sigma_{\varepsilon}$ to denote that the occurrence of $\sigma$ has been lost when it is communicated to the agent. 
To model possible observation losses, we refine the structure of $G$ by adding parallel $\sigma_{\varepsilon}$-transitions to all the transitions $(q,\sigma,q') \in \delta_{G,L}$ in $G$, and obtain $G'=(Q,\Sigma \cup \Sigma_{\varepsilon},\delta',q_0)$, where  $\Sigma_{\varepsilon}=\{\sigma_{\varepsilon}:\exists (q,\sigma,q') \in \delta_{G,L}\}$ and $\delta'=\delta_G \cup \{(q,\sigma_{\varepsilon},q'):(q,\sigma,q')\in\delta_L\}$.
Using techniques developed in this section, we can construct $\tilde{G}'$.
Although the occurrence of  $\sigma_{\varepsilon}$ cannot be sensed by the agent, we should add $t_{\downarrow}(q,\sigma)$ to $\theta_c$ and $\theta_o$  when $\sigma_{\varepsilon}$ occurs (as an event $\sigma$ has occurred but  been lost). 
Similar to (\ref{Eq10}), we can determine the observation mapping of $P'_{\Omega}$ for $G'$ using $\tilde{G}'$.
And $P'_{\Omega}$ considers observation losses besides observation delays and control delays.
In this paper, we focus on dealing with the non-deterministic delays. The formal approaches for sensor activations under communication delays and losses are beyond the scope of this paper, yet a fruitful area for future exploration.}
\end{remark}

\subsection{Delay Feasibility}

Not all arbitrary SAPs $\Omega$ can be used to activate sensors in networked DESs.
Due to observation delays and control delays, the command being in use after the occurrence of a string $s$ may not be $\Omega(s)$.
If $\sigma \in \Omega(s)$ (or $\sigma \notin \Omega(s)$), to guarantee that $\sigma$ can (or cannot) be observed by the agent after the communication of $s$, it is required that all the sensor activation commands that may be taking effect when $\sigma$ occurs after $s$ must activate (or deactivate) the sensor for $\sigma$.
In other words, the SAP should be insensitive to non-deterministic observation delays and control delays.
Additionally, to guarantee that $\Omega$ is practically feasible,  any two communicated strings that can confuse the agent must be followed by the same sensor activation mappings.
%Formally, we have the following definition of delay feasibility.

\begin{definition}\label{Def3}
\textcolor{blue}{(Delay Feasibility)} A given SAP $\Omega$ is said to be delay feasible w.r.t. $\tilde{G}$ if the following two conditions are true. 
\begin{align*}
\textup{C}1:\ &(\forall \sigma \in \Sigma)(\forall \mu\sigma \in \mathcal{L}(\tilde{G}))\sigma \in \Omega(\psi(\mu)) \Leftrightarrow\\ &\sigma \in \Omega(g^{-1}(\psi^g(\mu)));\\
\textup{C}2:\ &(\forall \sigma \in \Sigma)(\forall \mu, \mu' \in \mathcal{L}(\tilde{G}))P_{\Omega}(\mu)=P_{\Omega}(\mu') \Rightarrow \\
&[\sigma \in \Omega(\psi(\mu)) \Leftrightarrow \sigma \in \Omega(\psi(\mu'))].
\end{align*}
\end{definition}

Condition $\textup{C}1$ states that if  an SAP $\Omega$ intends to activate an event after a string, i.e., $\sigma \in \Omega(\psi(\mu))$, then this event can be activated  after this string for all possible observation delays and control delays,  i.e., $\sigma \in \Omega(g^{-1}(\psi^g(\mu)))$.
Condition $\textup{C}2$ states that any two communicated strings that are of the same observation mapping must be followed by the same activation decision for every event.
\textcolor{blue}{In principle, to check delay feasibility, we first calculate $P_{\Omega}$ from $\Omega$ and $\tilde{G}$, and then check if $\textup{C}1$ and $\textup{C}2$ hold.}

We discuss some properties of delay feasibility.
We first show that delay feasibility is closed under union.
%To prove this, we need the following lemma.

\begin{lemma}\label{Lem1}
Given two SAPs $\Omega_1$ and $\Omega_2$, if $\Omega_1 \supseteq \Omega_2$ and $\Omega_2$ is delay feasible, then for all $\mu,\mu' \in \mathcal{L}(\tilde{G})$,
\begin{align}P_{\Omega_1}(\mu)=P_{\Omega_1}(\mu') \Rightarrow P_{\Omega_2}(\mu)=P_{\Omega_2}(\mu').\end{align}
\end{lemma}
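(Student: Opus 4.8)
The plan is to prove the implication by strong induction on $|\mu|+|\mu'|$. The base case $\mu=\mu'=\varepsilon$ is immediate, since $P_{\Omega_2}(\varepsilon)=\varepsilon$. For the inductive step, at least one of the two strings is nonempty, and by the symmetry of the statement I would assume without loss of generality that $\mu\ne\varepsilon$ and write $\mu=\nu\sigma$. The whole argument then splits on whether the final symbol $\sigma$ \emph{contributes} to $P_{\Omega_1}(\mu)$, i.e. whether $P_{\Omega_1}(\nu\sigma)=P_{\Omega_1}(\nu)\sigma$ or $P_{\Omega_1}(\nu\sigma)=P_{\Omega_1}(\nu)$, as dictated by (\ref{Eq10}). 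Two routine facts to record first are that symbols of $\Sigma_h\cup\Sigma_g$ and all unobservable symbols never contribute to any $P_{\Omega}$ (because $\Omega(\cdot)\subseteq\Sigma_o$), and that $\psi$, $\psi^g$, and $g^{-1}$ are the same regardless of the SAP, so the command $g^{-1}(\psi^g(\cdot))$ in effect at a given prefix is identical under $\Omega_1$ and $\Omega_2$.

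The easy direction relies on the inclusion $\Omega_1\supseteq\Omega_2$. If the final symbol of $\mu$ does not contribute to $P_{\Omega_1}(\mu)$ — which happens when $\sigma\in\Sigma_h\cup\Sigma_g$, or when $\sigma\in\Sigma$ but $\sigma\notin\Omega_1(g^{-1}(\psi^g(\nu)))$ — then since $\Omega_2(g^{-1}(\psi^g(\nu)))\subseteq\Omega_1(g^{-1}(\psi^g(\nu)))$, the symbol $\sigma$ does not contribute to $P_{\Omega_2}(\mu)$ either, so $P_{\Omega_2}(\mu)=P_{\Omega_2}(\nu)$. Because $P_{\Omega_1}(\nu)=P_{\Omega_1}(\mu)=P_{\Omega_1}(\mu')$ and $\mathcal{L}(\tilde{G})$ is prefix-closed, the induction hypothesis applied to the strictly shorter pair $(\nu,\mu')$ yields $P_{\Omega_2}(\nu)=P_{\Omega_2}(\mu')$, hence the claim. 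The symmetric reduction via $(\mu,\nu')$ disposes of the case where the final symbol of $\mu'$ does not contribute. This leaves only the case in which both final symbols contribute.

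The crux is exactly that remaining case. Here $P_{\Omega_1}(\mu)=P_{\Omega_1}(\nu)\sigma$ and, writing $\mu'=\nu'\sigma'$, $P_{\Omega_1}(\mu')=P_{\Omega_1}(\nu')\sigma'$; equating these two equal observations forces $\sigma=\sigma'$ and $P_{\Omega_1}(\nu)=P_{\Omega_1}(\nu')$, so the induction hypothesis on $(\nu,\nu')$ gives $P_{\Omega_2}(\nu)=P_{\Omega_2}(\nu')$. It then suffices to show that $\sigma$ contributes to $P_{\Omega_2}(\mu)$ precisely when it contributes to $P_{\Omega_2}(\mu')$, i.e. that $\sigma\in\Omega_2(g^{-1}(\psi^g(\nu)))\Leftrightarrow\sigma\in\Omega_2(g^{-1}(\psi^g(\nu')))$. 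This is where delay feasibility of $\Omega_2$ is invoked: since $\nu\sigma,\nu'\sigma\in\mathcal{L}(\tilde{G})$, condition $\textup{C}1$ rewrites each side in terms of the intended decision, $\sigma\in\Omega_2(g^{-1}(\psi^g(\nu)))\Leftrightarrow\sigma\in\Omega_2(\psi(\nu))$ and likewise for $\nu'$; then condition $\textup{C}2$, fed with $P_{\Omega_2}(\nu)=P_{\Omega_2}(\nu')$ from the induction hypothesis, gives $\sigma\in\Omega_2(\psi(\nu))\Leftrightarrow\sigma\in\Omega_2(\psi(\nu'))$. Chaining the three equivalences closes the case: if $\sigma$ contributes to both, $P_{\Omega_2}(\mu)=P_{\Omega_2}(\nu)\sigma=P_{\Omega_2}(\nu')\sigma=P_{\Omega_2}(\mu')$; if to neither, $P_{\Omega_2}(\mu)=P_{\Omega_2}(\nu)=P_{\Omega_2}(\nu')=P_{\Omega_2}(\mu')$.

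The main obstacle is this final case, and it is where the two feasibility conditions earn their role: $\textup{C}1$ converts the delay-dependent ``command actually in effect'' $g^{-1}(\psi^g(\cdot))$ into the delay-free ``intended command'' $\psi(\cdot)$, and $\textup{C}2$ then transfers the activation decision across observationally equivalent strings. By contrast, the inclusion $\Omega_1\supseteq\Omega_2$ is needed only in the non-contributing reductions, to push ``$\sigma$ invisible under $\Omega_1$'' down to ``$\sigma$ invisible under $\Omega_2$.'' I expect no genuine difficulty beyond carefully tracking that every reduction strictly decreases $|\mu|+|\mu'|$ while preserving the hypothesis $P_{\Omega_1}(\cdot)=P_{\Omega_1}(\cdot)$, so that the induction is well founded.
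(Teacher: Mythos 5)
Your proof is correct and is essentially the paper's argument in different clothing: the paper takes a minimal counterexample and derives a contradiction, which is logically the same as your strong induction on $|\mu|+|\mu'|$, and the substance — reducing the case where a final symbol does not contribute via $\Omega_1\supseteq\Omega_2$, then in the case where both final symbols contribute deducing $\sigma=\sigma'$ and $P_{\Omega_2}(\nu)=P_{\Omega_2}(\nu')$, and chaining conditions $\textup{C}1$ and $\textup{C}2$ to transfer the activation decision — matches the paper's proof step for step.
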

\begin{proof}
Please see Appendix-$B$.
\end{proof}

Using the above lemma, we now prove that the union of any two delay feasible SAPs is also delay feasible.
\begin{theorem}\label{Prop2}
Given two SAPs $\Omega_1$ and $\Omega_2$, if $\Omega_1$ and $\Omega_2$ are  both delay feasible, then $\Omega_1\cup \Omega_2$ is also delay feasible.
\end{theorem}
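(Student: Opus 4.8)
The plan is to verify directly that $\Omega := \Omega_1 \cup \Omega_2$ satisfies both conditions C1 and C2 of Definition \ref{Def3}, handling them separately. Throughout I use that for the union SAP, $\sigma \in \Omega(s) \Leftrightarrow \sigma \in \Omega_1(s) \vee \sigma \in \Omega_2(s)$ for every string $s$ and every $\sigma$, together with the elementary logical fact that $A_1 \Leftrightarrow B_1$ and $A_2 \Leftrightarrow B_2$ imply $(A_1 \vee A_2) \Leftrightarrow (B_1 \vee B_2)$.

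Condition C1 is the easy half and I expect it to be purely logical. Fixing $\sigma \in \Sigma$ and $\mu\sigma \in \mathcal{L}(\tilde{G})$, I would invoke C1 for each of $\Omega_1$ and $\Omega_2$ to get $\sigma \in \Omega_i(\psi(\mu)) \Leftrightarrow \sigma \in \Omega_i(g^{-1}(\psi^g(\mu)))$ for $i = 1,2$, and then take the disjunction over $i$. By the logical identity above, this immediately yields $\sigma \in \Omega(\psi(\mu)) \Leftrightarrow \sigma \in \Omega(g^{-1}(\psi^g(\mu)))$, which is C1 for $\Omega$.

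The substantive part is C2, and this is where the main obstacle lies: the observation mapping $P_{\Omega}$ is defined relative to the union SAP, so the hypothesis $P_{\Omega}(\mu) = P_{\Omega}(\mu')$ cannot be fed directly into the C2 properties of $\Omega_1$ and $\Omega_2$, which are stated in terms of $P_{\Omega_1}$ and $P_{\Omega_2}$. This is exactly the gap that Lemma \ref{Lem1} is designed to close. Since $\Omega \supseteq \Omega_1$ and $\Omega_1$ is delay feasible, Lemma \ref{Lem1} gives $P_{\Omega}(\mu) = P_{\Omega}(\mu') \Rightarrow P_{\Omega_1}(\mu) = P_{\Omega_1}(\mu')$; applying the same lemma with $\Omega \supseteq \Omega_2$ gives $P_{\Omega}(\mu) = P_{\Omega}(\mu') \Rightarrow P_{\Omega_2}(\mu) = P_{\Omega_2}(\mu')$. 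Thus a single indistinguishability hypothesis under the coarser observation $P_{\Omega}$ is converted into indistinguishability under each of the finer observations simultaneously.

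With that reduction the conclusion follows just as in C1. Fixing $\sigma$ and assuming $P_{\Omega}(\mu) = P_{\Omega}(\mu')$, I obtain $P_{\Omega_1}(\mu) = P_{\Omega_1}(\mu')$ and $P_{\Omega_2}(\mu) = P_{\Omega_2}(\mu')$ from the previous step, then apply C2 for $\Omega_1$ and C2 for $\Omega_2$ to get $\sigma \in \Omega_1(\psi(\mu)) \Leftrightarrow \sigma \in \Omega_1(\psi(\mu'))$ and $\sigma \in \Omega_2(\psi(\mu)) \Leftrightarrow \sigma \in \Omega_2(\psi(\mu'))$. Taking the disjunction over the two policies and using the same logical identity as before yields $\sigma \in \Omega(\psi(\mu)) \Leftrightarrow \sigma \in \Omega(\psi(\mu'))$, which is C2 for $\Omega$. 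In short, the only non-routine ingredient is recognizing that the anti-monotonicity encapsulated in Lemma \ref{Lem1} is precisely what lets indistinguishability under $P_{\Omega}$ descend to indistinguishability under both $P_{\Omega_1}$ and $P_{\Omega_2}$; everything else is the preservation of logical equivalence under disjunction.
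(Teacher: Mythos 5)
Your proposal is correct and follows essentially the same route as the paper's proof: both reduce conditions C1 and C2 for $\Omega_1\cup\Omega_2$ to the corresponding conditions of the individual policies, with Lemma \ref{Lem1} serving as the key ingredient that converts indistinguishability under $P_{\Omega_1\cup\Omega_2}$ into indistinguishability under each $P_{\Omega_i}$. The only difference is presentational — the paper argues by contradiction (picking the single index $i$ whose policy is violated), while you argue directly by taking disjunctions of the componentwise biconditionals — and these are merely contrapositive formulations of the same argument.
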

\begin{proof}
Please see Appendix-$C$.
\end{proof}

By Theorem \ref{Prop1}, among all the delay feasible subpolicies $\{\Omega_1,\ldots,\Omega_k\}$ of an SAP $\Omega$, there always exists a supremal one
$\Omega^{\uparrow}=\Omega_1\cup\cdots\cup\Omega_k$. 
In this paper, we call $\Omega^{\uparrow}$ the $($unique$)$ maximum delay feasible subpolicy of $\Omega$.
Next, we consider the observations of an agent under a delay feasible SAP.
The following proposition states that if an SAP $\Omega$ is delay feasible, then $\Omega$ is insensitive to observation delays and control delays.

\begin{proposition}\label{Prop3}
Given a delay feasible SAP $\Omega$, we have
$[(\forall \mu \in \mathcal{L}(\tilde{G}))\psi(\mu)=s] \Rightarrow P_{\Omega}(\mu)=\theta^{\Omega}(s).$
\end{proposition}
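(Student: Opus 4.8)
The plan is to restate the claim in the cleaner equivalent form that $P_\Omega(\mu)=\theta^\Omega(\psi(\mu))$ holds for every $\mu\in\mathcal{L}(\tilde{G})$, since the displayed implication is precisely the assertion that $P_\Omega(\mu)$ depends on $\mu$ only through $s=\psi(\mu)$ and then equals the delay-free information mapping $\theta^\Omega(s)$. I would establish this by induction on $|\mu|$, the length of $\mu$ in $\tilde{G}$ (not on $|s|$), because both $P_\Omega$ and $\psi$ are defined recursively by appending one symbol of $\tilde{\Sigma}$ at a time. The base case $\mu=\varepsilon$ is immediate: $\psi(\varepsilon)=\varepsilon$ and $P_\Omega(\varepsilon)=\varepsilon=\theta^\Omega(\varepsilon)$. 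Note also that by Proposition \ref{Prop1} any prefix $\mu\sigma\in\mathcal{L}(\tilde{G})$ with $\sigma\in\Sigma$ satisfies $\psi(\mu)\sigma=\psi(\mu\sigma)\in\mathcal{L}(G)$, so every application of $\theta^\Omega$ below is well defined.

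For the inductive step, assume $P_\Omega(\mu)=\theta^\Omega(\psi(\mu))$ and consider $\mu\sigma\in\mathcal{L}(\tilde{G})$ with $\sigma\in\tilde{\Sigma}$. I would split into two cases according to the type of the appended symbol. If $\sigma\in\Sigma_h\cup\Sigma_g$, then $\psi(\mu\sigma)=\psi(\mu)$ leaves the occurred plant string unchanged; moreover, since $\Omega$ maps into $2^{\Sigma_o}$ and $\Sigma$, $\Sigma_h$, $\Sigma_g$ are mutually disjoint, we have $\sigma\notin\Sigma_o$ and hence $\sigma\notin\Omega(g^{-1}(\psi^g(\mu)))$, so by the definition in (\ref{Eq10}) $P_\Omega(\mu\sigma)=P_\Omega(\mu)$. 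Both sides are therefore unchanged and the induction hypothesis closes this case.

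The crux is the case $\sigma\in\Sigma$, where $\psi(\mu\sigma)=\psi(\mu)\sigma$. Here the observation mapping $P_\Omega$ appends $\sigma$ if and only if $\sigma\in\Omega(g^{-1}(\psi^g(\mu)))$, i.e., if and only if the command \emph{actually} in effect (subject to the delays recorded in $\mu$) activates the sensor for $\sigma$, whereas the delay-free mapping $\theta^\Omega$ appends $\sigma$ if and only if $\sigma\in\Omega(\psi(\mu))$, i.e., if and only if the ideal command based on the full occurred prefix $\psi(\mu)$ activates $\sigma$. The whole point is that these two conditions coincide, which is exactly condition $\textup{C}1$ of delay feasibility, applicable here because $\mu\sigma\in\mathcal{L}(\tilde{G})$ with $\sigma\in\Sigma$. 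Combining the equivalence $\sigma\in\Omega(\psi(\mu))\Leftrightarrow\sigma\in\Omega(g^{-1}(\psi^g(\mu)))$ with the induction hypothesis $P_\Omega(\mu)=\theta^\Omega(\psi(\mu))$ shows that $P_\Omega$ and $\theta^\Omega$ append or omit $\sigma$ in lockstep, yielding $P_\Omega(\mu\sigma)=\theta^\Omega(\psi(\mu)\sigma)=\theta^\Omega(\psi(\mu\sigma))$. The main obstacle is conceptual rather than computational: recognizing that $\textup{C}1$ performs exactly this bridging between the delayed and delay-free decisions at each plant event. I expect condition $\textup{C}2$ to play no role in this proof, since the statement concerns a single interleaving $\mu$ and not the distinguishability of two of them.
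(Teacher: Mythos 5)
Your proposal is correct and matches the paper's own proof essentially step for step: induction on $|\mu|$, a case split on whether the appended symbol lies in $\Sigma$ or in $\Sigma_h\cup\Sigma_g$, and an application of condition $\textup{C}1$ in the plant-event case to make $P_{\Omega}$ and $\theta^{\Omega}$ append or omit $\sigma$ in lockstep. Your added remarks (well-definedness via Proposition \ref{Prop1}, and the disjointness argument for why $P_{\Omega}$ ignores symbols in $\Sigma_h\cup\Sigma_g$) are minor refinements of details the paper leaves implicit, not a different route.
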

\begin{proof}
Please see Appendix-$D$.
\end{proof}

By Proposition \ref{Prop3}, under a delay feasible SAP,  what the agent may see after the communication of a string is equal to the information mapping of this string.
Whenever a string $s$ occurs in $G$, due to observation delays, one of the strings in $\{s_{-i}:i=0,1,\ldots, N_o\}$ can be communicated to the agent. 
Thus, the set of delayed observation mappings associated with $s$ and a delay feasible SAP $\Omega$, denoted by $\Theta_{\Omega}^{N_o}(s)$, can be obtained by:
$\Theta_{\Omega}^{N_o}(s)=\{\theta^{\Omega}(s_{-i}):i=0,1,\ldots, N_o \}.$
Let us define \begin{align*}
&\mathcal{T}_{conf,G}(\Theta_{\Omega}^{N_o})=\{(q,q')\in Q\times Q:(\exists s,s'\in \mathcal{L}(G))\\
&\Theta_{\Omega}^{N_o}(s)\cap\Theta_{\Omega}^{N_o}(s')\neq \emptyset \wedge q=\delta(q_0,s)\wedge q'=\delta(q_0,s')\}
\end{align*} as the set of confusable state pairs of $G$ under $\Theta_{\Omega}^{N_o}$.
Note that $(q,q')$ is equal to  $(q',q)$ in $\mathcal{T}_{conf,G}(\Theta_{\Omega}^{N_o})$.

\begin{proposition}\label{Prop4}
Given two delay feasible SAPs $\Omega_1$ and $\Omega_2$, then
$\Omega_1 \supseteq \Omega_2 \Rightarrow \mathcal{T}_{conf,G}(\Theta_{\Omega_1}^{N_o}) \subseteq \mathcal{T}_{conf,G}(\Theta_{\Omega_2}^{N_o}).$
\end{proposition}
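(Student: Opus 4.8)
The plan is to establish the containment $\mathcal{T}_{conf,G}(\Theta_{\Omega_1}^{N_o}) \subseteq \mathcal{T}_{conf,G}(\Theta_{\Omega_2}^{N_o})$ by taking an arbitrary confusable state pair under the larger policy $\Omega_1$ and exhibiting it as a confusable pair under the smaller policy $\Omega_2$, reusing the very same witness strings and delay indices. Concretely, I would start from an arbitrary $(q,q') \in \mathcal{T}_{conf,G}(\Theta_{\Omega_1}^{N_o})$. By the definition of the confusable set there are strings $s,s' \in \mathcal{L}(G)$ with $\delta(q_0,s)=q$, $\delta(q_0,s')=q'$ and $\Theta_{\Omega_1}^{N_o}(s)\cap\Theta_{\Omega_1}^{N_o}(s')\neq\emptyset$; unpacking the latter produces indices $i,j\in\{0,\dots,N_o\}$ with $\theta^{\Omega_1}(s_{-i})=\theta^{\Omega_1}(s'_{-j})$. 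The whole problem then reduces to proving $\theta^{\Omega_2}(s_{-i})=\theta^{\Omega_2}(s'_{-j})$, because this places $\theta^{\Omega_2}(s_{-i})\in\Theta_{\Omega_2}^{N_o}(s)$ and $\theta^{\Omega_2}(s'_{-j})\in\Theta_{\Omega_2}^{N_o}(s')$ into a common value, so $\Theta_{\Omega_2}^{N_o}(s)\cap\Theta_{\Omega_2}^{N_o}(s')\neq\emptyset$ with the same $s,s'$, placing $(q,q')$ in $\mathcal{T}_{conf,G}(\Theta_{\Omega_2}^{N_o})$.

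The key idea is to transport this implication into the communication automaton, where the already-proved anti-monotonicity (Lemma \ref{Lem1}) lives, and then transport it back via Proposition \ref{Prop3}. First, since $s_{-i}$ and $s'_{-j}$ are prefixes of strings in the prefix-closed language $\mathcal{L}(G)$, they themselves lie in $\mathcal{L}(G)$, so by Proposition \ref{Prop1} there exist $\mu,\mu'\in\mathcal{L}(\tilde{G})$ with $\psi(\mu)=s_{-i}$ and $\psi(\mu')=s'_{-j}$. Because $\Omega_1$ is delay feasible, Proposition \ref{Prop3} gives $P_{\Omega_1}(\mu)=\theta^{\Omega_1}(s_{-i})$ and $P_{\Omega_1}(\mu')=\theta^{\Omega_1}(s'_{-j})$, so the hypothesis becomes $P_{\Omega_1}(\mu)=P_{\Omega_1}(\mu')$. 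I would then invoke Lemma \ref{Lem1} with $\Omega_1\supseteq\Omega_2$ and $\Omega_2$ delay feasible to obtain $P_{\Omega_2}(\mu)=P_{\Omega_2}(\mu')$. Finally, applying Proposition \ref{Prop3} once more, now to the delay feasible $\Omega_2$, yields $\theta^{\Omega_2}(s_{-i})=P_{\Omega_2}(\mu)=P_{\Omega_2}(\mu')=\theta^{\Omega_2}(s'_{-j})$, exactly the equality needed. As $(q,q')$ was arbitrary, the containment follows.

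I do not expect a serious obstacle once the translation through $\tilde{G}$ is in place; the argument is essentially a three-step ``lift, apply Lemma \ref{Lem1}, descend'' sandwich. The only points requiring care are (i) checking that both $s_{-i}$ and $s'_{-j}$ genuinely belong to $\mathcal{L}(G)$ so that Proposition \ref{Prop1} produces the interleavings $\mu,\mu'$ (immediate from prefix-closedness of $\mathcal{L}(G)$), (ii) confirming that Proposition \ref{Prop3} may be applied to \emph{both} policies, which is legitimate since $\Omega_1$ and $\Omega_2$ are both delay feasible by hypothesis, and (iii) respecting the direction of Lemma \ref{Lem1}, namely that equal $\Omega_1$-observations force equal $\Omega_2$-observations and not the reverse, which is precisely the anti-monotone direction matching $\Omega_1\supseteq\Omega_2$. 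A self-contained alternative would bypass $\tilde{G}$ altogether and prove directly, by induction on $|s_{-i}|+|s'_{-j}|$ using condition $\textup{C}2$ of delay feasibility, the anti-monotonicity of the plain information mapping $\theta^{\Omega_1}(u)=\theta^{\Omega_1}(u')\Rightarrow\theta^{\Omega_2}(u)=\theta^{\Omega_2}(u')$; however, routing through the already-established Lemma \ref{Lem1} and Proposition \ref{Prop3} is shorter and avoids redoing the inductive bookkeeping.
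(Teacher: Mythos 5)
Your proposal is correct and follows essentially the same route as the paper's own proof: take the witness strings and delay indices, lift them to $\mathcal{L}(\tilde{G})$ via Proposition \ref{Prop1}, convert $\theta^{\Omega_1}$-equality to $P_{\Omega_1}$-equality via Proposition \ref{Prop3}, apply Lemma \ref{Lem1} to get $P_{\Omega_2}$-equality, and descend back via Proposition \ref{Prop3} to conclude $\theta^{\Omega_2}(s_{-i})=\theta^{\Omega_2}(s'_{-j})$. The paper's Appendix-$E$ proof is exactly this ``lift, apply Lemma \ref{Lem1}, descend'' argument with the same witnesses reused.
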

\begin{proof}
Please see Appendix-$E$.
\end{proof}

Proposition \ref{Prop4} exploits the fact that, the more sensors a delay feasible SAP activates, the fewer state pairs an agent confuses under observation delays and control delays.
This fact will help us find a valid minimal SAP in the next section.

\section{Minimal sensor activation policy}

In this section, we first formulate the networked sensor
activation problem to be solved.
We then discuss how to verify if a given SAP satisfies delay feasibility.
Finally, we develop procedures to compute a minimal SAP. 

\subsection{Problem Formulation}

In the above sections, $\Omega$ is language-based, i.e., the domain of $\Omega$ is the system language $\mathcal{L}(G)$.
For the sake of computational efficiency, as stated in \cite{lafortune10tac,weilin16tac},  any two  strings ending up in the same state  must implement the same observation mapping, instead of depending on the entire string as in the general case (Section II-B).
Correspondingly, the SAP can be denoted by a set of transitions $\Delta \subseteq Q \times \Sigma_o$\footnote{To fulfill the sensor activation task,  we may activate an event that is not active at the current state due to observation delays and control delays.
Thus, the SAP $\Delta$ is defined over the whole $ Q \times \Sigma_o$ instead of $\{(q,\sigma) \in Q \times \Sigma_o:\delta(q,\sigma)!\}$.} such that, given
$s\sigma \in \mathcal{L}(G)$ with $\sigma \in \Sigma$, the sensor for $\sigma$ needs to be activated  immediately after the communication of $s$ if $(\delta(q_0,s),\sigma)\in \Delta$.
Since transition-based dynamic observation is a special case of language-based dynamic observation, all results in the previous sections
still hold for systems under transition-based observation.
In the remainder of this paper, we assume that the observation mapping is transition-based.

Given a transition-based SAP $\Delta$, the observation mapping $P_{\Delta}:\mathcal{L}({\tilde{G}}) \to \Sigma_o^*$ for $\tilde{G}$  is iteratively defined as $P_{\Delta}(\varepsilon)=\varepsilon$ and, for all $\mu,\mu\sigma\in \mathcal{L}(\tilde{G})$,
\lfteqn
P_{\Delta}(\mu\sigma)=
\begin{cases}
	P_{\Delta}(\mu)\sigma & \mathrm{if}\ (\delta(q_0,g^{-1}(\psi^g(\mu))),\sigma) \in \Delta\\
	P_{\Delta}(\mu) & \mathrm{if}\ (\delta(q_0,g^{-1}(\psi^g(\mu))),\sigma) \notin \Delta.
\end{cases}
\ndeqn

By Definition \ref{Def3},  $\Delta$ is delay feasible if 
\begin{align*}
\textup{C}3:\ &(\forall \sigma \in \Sigma)(\forall \mu\sigma \in \mathcal{L}(\tilde{G}))[(\delta(q_0,\psi(\mu)),\sigma)\in \Delta \\  &\Leftrightarrow (\delta(q_0,g^{-1}(\psi^g(\mu))),\sigma)\in \Delta];\\
\textup{C}4:\ &(\forall \sigma \in \Sigma)(\forall \mu, \mu' \in \mathcal{L}(\tilde{G}))P_{\Delta}(\mu)=P_{\Delta}(\mu') \Rightarrow \\
&[(\delta(q_0,\psi(\mu)),\sigma)\in \Delta \Leftrightarrow (\delta(q_0,\psi(\mu')),\sigma)\in \Delta].
\end{align*}

\textcolor{blue}{
For the purpose of control, diagnosis, or other applications, we require that the agent is able to distinguish certain pairs of states
of $G$, which is specified as a specification condition $\mathcal{T}_{spec}\subseteq Q\times Q$. 
Note that, $\mathcal{T}_{spec}$ is user-defined and problem-dependent, and many important properties, such as observability, $K$-diagnosaiblity, and detectability can be translated into state pairs disambiguation problems \cite{lafortune07scl,yin18tac}. Thus, using $\mathcal{T}_{spec}$ to describe the observation requirement of an agent is a rather general approach.}
Formally, the problem to be solved is formulated as follows.

\begin{problem}\label{Prob1}
\textcolor{blue}{(Centralized Sensor Activation Optimization)} Suppose we are given a system $G$, a specification $\mathcal{T}_{spec}$, and a set of observable events $\Sigma_o \subseteq \Sigma$.
Assume that if the agent activates all the sensors, $\mathcal{T}_{spec}$ is satisfied, i.e., $\mathcal{T}_{conf,G}(\Theta_{\Delta^{all}}^{N_o}) \cap \mathcal{T}_{spec}=\emptyset$, where $\Delta^{all}=Q\times \Sigma_o$.
We would like to find an SAP $\Delta^*\subseteq \Delta^{all}$ such that: 1) $\Delta^*$ is delay feasible, i.e., \textup{C}3 and \textup{C}4 holds under $\Delta^*$; 
2) $\mathcal{T}_{spec}$ is satisfied under $\Delta^*$, i.e., $\mathcal{T}_{conf,G}(\Theta_{\Delta^*}^{N_o}) \cap \mathcal{T}_{spec}=\emptyset$; 3) $\Delta^*$ is minimal, i.e., there is no other $\Delta'\subset \Delta^*$ satisfying 1) and 2).
\end{problem}

\subsection{\textcolor{blue}{Checking Delay Feasibility}}

Before we formally solve Problem \ref{Prob1}, we first show how to check delay feasibility of a given transition-based SAP.
{To determine which command is taking effect, we first refine the state space of the communication automaton $\tilde{G}$ as follows:}
\begin{enumerate}
\item Replace all the transitions $(q,\sigma)$ in $G$ by $(q,g(\sigma))$ to obtain a new automaton $G^g$ (See the example discussed in Section IV-C for more details);
\item  Set  $W \leftarrow \tilde{G} || G^g$ and denote by $W=(Q_W,\tilde{\Sigma},\delta_W,q_{0,W})$.
\end{enumerate}

\begin{remark}
By 1) and 2), we have the following two results.
First, automaton $W$ generates the same language as automaton $\tilde{G}$ does, i.e., $\mathcal{L}(W)=\mathcal{L}(\tilde{G})$.
Second, for any $\mu \in \mathcal{L}(W)$, we write $\delta_W(q_{0,W},\mu)=(q,\theta_o,\theta_c, x)$ for $(q,\theta_o,\theta_c)\in \tilde{Q}$ and $x\in Q$.
Then, $x=\delta(q_0,g^{-1}(\psi^g(\mu)))$ tracks the string such that the sensor activation decision made after the communication of it is taking effect.
Thus, when an event $\sigma$ occurs at $(q,\theta_o,\theta_c, x)$, it is observable iff $(x,\sigma)\in \Delta$.
\end{remark}

For a given system $G$ and an SAP $\Delta \subseteq Q \times \Sigma_o$, we define  
\begin{align}\label{Eq14}
&\mathcal{T}_{conf,G}(P_{\Delta})=\{({q}, {q}') \in Q \times Q:(\exists \mu,\mu'\in \mathcal{L}(\tilde{G})) \notag\\
&P_{\Delta}(\mu)=P_{\Delta}(\mu')\wedge {q}=\delta(q_{0},\psi(\mu)) \wedge {q}'=\delta(q_{0},\psi(\mu'))\}
\end{align}
as the set of confusable state pairs of $G$ under $P_{\Delta}$.
To calculate $\mathcal{T}_{conf,G}(P_{\Delta})$ using $W$, we first define a transition-based SAP $\hat{\Delta}$ for $W$ as: $\hat{\Delta}=\{((q,\theta_o,\theta_c,x),\sigma)\in Q_W \times \tilde{\Sigma}: (x,\sigma) \in \Delta\}.$
Let
\begin{align}\label{Eq15}
&\mathcal{T}_{conf,W}(\theta^{\hat{\Delta}})=\{({q}_W, {q}'_W) \in Q_W \times Q_W:(\exists \mu,\mu'\in \mathcal{L}(W)) \notag\\
&\theta^{\hat{\Delta}}(\mu)=\theta^{\hat{\Delta}}(\mu')\wedge {q}_W=\delta_W(q_{0,W},\mu) \wedge {q}'_W=\delta_W(q_{0,W},\mu')\}
\end{align}
be the set of confusable state pairs of $W$ under $\theta^{\hat{\Delta}}$.\footnote{The definition of $\theta^{\hat{\Delta}}$ for a transition-based SAP $\hat{\Delta}$ can be obtained in a similar way as that of $\theta^{\Omega}$ for a language-based SAP $\Omega$, as stated in \cite{lafortune10tac}.}
An algorithm for calculating $\mathcal{T}_{conf,W}(\theta^{\hat{\Delta}})$ with polynomial time complexity with respect to the sizes of state space and the event set in $W$ was proposed in \cite{lafortune07scl}.
The following lemma will be used later.

\begin{lemma}\label{Lem2}
For any $\mu \in \mathcal{L}(\tilde{G})=\mathcal{L}(W)$, $P_{\Delta}(\mu)=\theta^{\hat{\Delta}}(\mu)$.
\end{lemma}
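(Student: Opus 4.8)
The plan is to establish $P_{\Delta}(\mu)=\theta^{\hat{\Delta}}(\mu)$ by induction on the length $|\mu|$, showing that the two recursive definitions enter their ``append'' branch under exactly the same guard and append the same symbol. The entire argument hinges on the bridging fact recorded in the remark immediately following the construction of $W$: for every $\mu\in\mathcal{L}(W)$, writing $\delta_W(q_{0,W},\mu)=(q,\theta_o,\theta_c,x)$, the fourth component satisfies $x=\delta(q_0,g^{-1}(\psi^g(\mu)))$. I would invoke this as a black box, together with the companion fact from the same remark that $\mathcal{L}(W)=\mathcal{L}(\tilde{G})$, so that both mappings have the same domain.

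For the base case I would simply note $P_{\Delta}(\varepsilon)=\varepsilon=\theta^{\hat{\Delta}}(\varepsilon)$ from the two definitions. For the inductive step, fix $\mu\sigma\in\mathcal{L}(\tilde{G})=\mathcal{L}(W)$ and assume $P_{\Delta}(\mu)=\theta^{\hat{\Delta}}(\mu)$. Writing $\delta_W(q_{0,W},\mu)=(q,\theta_o,\theta_c,x)$, the definition of $\hat{\Delta}$ gives $(\delta_W(q_{0,W},\mu),\sigma)\in\hat{\Delta}\Leftrightarrow (x,\sigma)\in\Delta$, and substituting the bridging fact $x=\delta(q_0,g^{-1}(\psi^g(\mu)))$ turns this into $(\delta_W(q_{0,W},\mu),\sigma)\in\hat{\Delta}\Leftrightarrow (\delta(q_0,g^{-1}(\psi^g(\mu))),\sigma)\in\Delta$. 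The right-hand side is exactly the guard that decides whether $P_{\Delta}$ appends $\sigma$, so both mappings take the append branch simultaneously; combined with the inductive hypothesis this yields $P_{\Delta}(\mu\sigma)=\theta^{\hat{\Delta}}(\mu\sigma)$ in both the append and the no-append case.

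The only point requiring a sentence of care is that $\sigma$ may lie in $\Sigma_h\cup\Sigma_g$ rather than in $\Sigma$. But since $\Delta\subseteq Q\times\Sigma_o$ and $\Sigma$, $\Sigma_h$, $\Sigma_g$ are mutually disjoint, so that $\Sigma_h\cup\Sigma_g$ is disjoint from $\Sigma_o$, for such $\sigma$ both guards $(x,\sigma)\in\Delta$ and $(\delta(q_0,g^{-1}(\psi^g(\mu))),\sigma)\in\Delta$ are vacuously false; hence neither mapping appends anything and the equivalence above holds trivially. This covers all events of $\tilde{\Sigma}$ uniformly.

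I do not expect a genuine obstacle here: the induction is routine, and the substantive content is entirely concentrated in the bridging identity $x=\delta(q_0,g^{-1}(\psi^g(\mu)))$. Should a referee want that identity justified rather than cited, the justification follows from the semantics of $W=\tilde{G}\,\|\,G^g$, in which the shared $g(\sigma)$-transitions drive the $G^g$-component through $x\leftarrow\delta(x,\sigma)$ while the $\tilde{G}$-private events of $\Sigma\cup\Sigma_h$ leave $x$ untouched; projecting $\mu$ onto $\Sigma_g$, applying $g^{-1}$, and running the result through $\delta$ therefore reproduces $x$. Since this is precisely the remark preceding the statement, I would keep the present proof to the clean inductive verification and reference that remark for the bridging fact.
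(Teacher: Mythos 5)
Your proposal is correct and follows essentially the same route as the paper's own proof: induction on $|\mu|$, invoking the bridging identity $x=\delta(q_0,g^{-1}(\psi^g(\mu)))$ for the fourth component of $\delta_W(q_{0,W},\mu)$ together with the definition of $\hat{\Delta}$ to show both mappings take the append branch under the same guard. The only cosmetic difference is that the paper splits the inductive step into the cases $e\in\Sigma$ and $e\in\Sigma_h\cup\Sigma_g$, whereas you treat all of $\tilde{\Sigma}$ uniformly by observing that both guards are vacuously false for events outside $\Sigma_o$ — an equally valid packaging of the same argument.
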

\begin{proof}
Please see Appendix-$F$.
\end{proof}
Using Lemma \ref{Lem2}, we can calculate $\mathcal{T}_{conf,G}(P_{\Delta})$ as follows.

\begin{proposition}\label{Prop6}
Given the set of indistinguishable state pairs  $\mathcal{T}_{conf,W}(\theta^{\hat{\Delta}})$ for $W$, we compute $\mathcal{T}_{conf,G}(P_{\Delta})$ for $G$ as follows: 
\begin{align}\label{Eq16}
&\mathcal{T}_{conf,G}(P_{\Delta})=\{(q,q')\in Q \times Q: \notag\\
&\exists  ((q,\theta_o,\theta_c,x), (q',\theta'_o,\theta'_c,x')) \in \mathcal{T}_{conf,W}(\theta^{\hat{\Delta}})\}.
\end{align}
\end{proposition}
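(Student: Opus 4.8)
The plan is to prove the set equality in~(\ref{Eq16}) by establishing the two inclusions separately, using Lemma~\ref{Lem2} as the bridge between confusability in $G$ under $P_{\Delta}$ and confusability in $W$ under $\theta^{\hat{\Delta}}$. The central observation is that both confusability sets are generated by the \emph{same} witnessing strings, so the only real work is to track how the plant state $q$ is recorded inside a $W$-state.

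First I would record the two facts that make the translation routine. By the Remark following the construction of $W$, we have $\mathcal{L}(W)=\mathcal{L}(\tilde{G})$, so any witness string may be read interchangeably in either automaton. Moreover, writing $\delta_W(q_{0,W},\mu)=(q,\theta_o,\theta_c,x)$ for $\mu\in\mathcal{L}(W)$, the first coordinate satisfies $q=\delta(q_0,\psi(\mu))$, because the $\tilde{G}$-component of $W=\tilde{G}||G^{g}$ is exactly the state reached by $\mu$ in $\tilde{G}$, whose plant coordinate tracks the occurred string $\psi(\mu)$ (cf.\ the construction of $\tilde{G}$ and Proposition~\ref{Prop1}). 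Finally, Lemma~\ref{Lem2} gives $P_{\Delta}(\mu)=\theta^{\hat{\Delta}}(\mu)$ for every $\mu$.

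For the inclusion $\subseteq$, take $(q,q')\in\mathcal{T}_{conf,G}(P_{\Delta})$ with witnesses $\mu,\mu'\in\mathcal{L}(\tilde{G})$ satisfying $P_{\Delta}(\mu)=P_{\Delta}(\mu')$, $q=\delta(q_0,\psi(\mu))$, and $q'=\delta(q_0,\psi(\mu'))$. Reading $\mu,\mu'$ in $W$ and applying Lemma~\ref{Lem2} yields $\theta^{\hat{\Delta}}(\mu)=\theta^{\hat{\Delta}}(\mu')$, and by the coordinate-tracking fact the reached $W$-states have the forms $(q,\theta_o,\theta_c,x)$ and $(q',\theta'_o,\theta'_c,x')$; hence this pair lies in $\mathcal{T}_{conf,W}(\theta^{\hat{\Delta}})$, placing $(q,q')$ in the right-hand side of~(\ref{Eq16}). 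The reverse inclusion $\supseteq$ runs the identical argument backwards: a pair $((q,\theta_o,\theta_c,x),(q',\theta'_o,\theta'_c,x'))\in\mathcal{T}_{conf,W}(\theta^{\hat{\Delta}})$ supplies witnesses $\mu,\mu'\in\mathcal{L}(W)=\mathcal{L}(\tilde{G})$ with $\theta^{\hat{\Delta}}(\mu)=\theta^{\hat{\Delta}}(\mu')$, and Lemma~\ref{Lem2} converts this back to $P_{\Delta}(\mu)=P_{\Delta}(\mu')$ while the coordinate-tracking fact identifies $q=\delta(q_0,\psi(\mu))$ and $q'=\delta(q_0,\psi(\mu'))$.

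I expect no genuinely hard step; the argument is bookkeeping resting entirely on Lemma~\ref{Lem2} and $\mathcal{L}(W)=\mathcal{L}(\tilde{G})$. The one point that needs care, and the closest thing to an obstacle, is justifying that the plant coordinate of the $W$-state reached by $\mu$ is precisely $\delta(q_0,\psi(\mu))$, since the confusability sets are defined through $\delta(q_0,\psi(\cdot))$ rather than through $W$-states directly. This is where the structure of the parallel composition $W=\tilde{G}||G^{g}$ and the identity $\psi(\mathcal{L}(\tilde{G}))=\mathcal{L}(G)$ from Proposition~\ref{Prop1} must be invoked explicitly.
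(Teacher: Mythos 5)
Your proposal is correct and follows essentially the same route as the paper's proof: both inclusions are established by exchanging witnesses between $\mathcal{L}(\tilde{G})$ and $\mathcal{L}(W)$, using Lemma~\ref{Lem2} to convert $P_{\Delta}(\mu)=P_{\Delta}(\mu')$ into $\theta^{\hat{\Delta}}(\mu)=\theta^{\hat{\Delta}}(\mu')$ (and back), and using the parallel-composition structure of $W=\tilde{G}||G^{g}$ together with Proposition~\ref{Prop1} to identify the plant coordinate of the reached $W$-state with $\delta(q_0,\psi(\mu))$. The point you flag as needing care is precisely the step the paper handles the same way.
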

\begin{proof}
Please see Appendix-$G$.
\end{proof}

We next discuss how to verify delay feasibility of a given transition-based SAP $\Delta$.
\begin{proposition}\label{Proposition5}
{Given system $G$, we construct $\tilde{G}$, $G^g$, and $W$ as described above. 
Then, a transition-based SAP $\Delta$ is delay feasible if and only if the following two conditions are true.
Condition 1:
\begin{align}\label{Equation14}
&(\forall(q,\theta_o,\theta_c,x)\in Q_W)(\forall \sigma\in \Sigma)\tilde{\delta}((q,\theta_o,\theta_c),\sigma)! \Rightarrow \notag \\
&[(q,\sigma)\in \Delta \Leftrightarrow (x,\sigma) \in \Delta]; 
\end{align}
Condition 2:
\begin{align}\label{Equation15}
(\forall (q,q')\in \mathcal{T}_{conf,G}(P_{\Delta}))[(q,\sigma)\in \Delta \Leftrightarrow (q',\sigma) \in \Delta].
\end{align}}
\end{proposition}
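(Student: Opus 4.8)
The plan is to reduce delay feasibility to its two defining clauses $\textup{C}3$ and $\textup{C}4$ (the transition-based forms of $\textup{C}1$ and $\textup{C}2$) and to match them one-for-one with Condition~1 and Condition~2. The whole argument rests on the state-correspondence recorded in the preceding remark for $W=\tilde{G}||G^g$: for every $\mu\in\mathcal{L}(W)=\mathcal{L}(\tilde{G})$ with $\delta_W(q_{0,W},\mu)=(q,\theta_o,\theta_c,x)$, the plant component satisfies $q=\delta(q_0,\psi(\mu))$ (only $\Sigma$-labeled transitions move the plant coordinate of $\tilde{G}$, while $\Sigma_h$- and $\Sigma_g$-events leave it fixed) and the fourth component satisfies $x=\delta(q_0,g^{-1}(\psi^g(\mu)))$ (since $G^g$ reacts only to $\Sigma_g$-events). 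Hence the pair $(q,x)$ attached to a reachable state of $W$ encodes exactly the two plant states that $\textup{C}3$ compares.

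First I would establish $\textup{C}3\Leftrightarrow$ Condition~1. Fix $\mu\sigma\in\mathcal{L}(\tilde{G})$ with $\sigma\in\Sigma$ and set $(q,\theta_o,\theta_c,x)=\delta_W(q_{0,W},\mu)$. Since $\sigma$ is private to $\tilde{G}$ in the composition, $\mu\sigma\in\mathcal{L}(\tilde{G})$ holds if and only if $\tilde{\delta}((q,\theta_o,\theta_c),\sigma)!$, and by the correspondence above the body of $\textup{C}3$, namely $(\delta(q_0,\psi(\mu)),\sigma)\in\Delta\Leftrightarrow(\delta(q_0,g^{-1}(\psi^g(\mu))),\sigma)\in\Delta$, reads verbatim as $(q,\sigma)\in\Delta\Leftrightarrow(x,\sigma)\in\Delta$. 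Because every reachable state of $W$ is reached by at least one such $\mu$, and the predicate depends only on the triple $(q,x,\sigma)$ rather than on the particular $\mu$, quantifying over all $\mu\sigma\in\mathcal{L}(\tilde{G})$ is equivalent to quantifying over all reachable $(q,\theta_o,\theta_c,x)\in Q_W$ and all $\sigma$ with $\tilde{\delta}((q,\theta_o,\theta_c),\sigma)!$. That is exactly Condition~1.

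Next I would establish $\textup{C}4\Leftrightarrow$ Condition~2. By the definition of $\mathcal{T}_{conf,G}(P_{\Delta})$ in (\ref{Eq14}), a pair $(q,q')$ lies in $\mathcal{T}_{conf,G}(P_{\Delta})$ if and only if there exist $\mu,\mu'\in\mathcal{L}(\tilde{G})$ with $P_{\Delta}(\mu)=P_{\Delta}(\mu')$, $q=\delta(q_0,\psi(\mu))$, and $q'=\delta(q_0,\psi(\mu'))$. Substituting these identities turns the conclusion of $\textup{C}4$, namely $(\delta(q_0,\psi(\mu)),\sigma)\in\Delta\Leftrightarrow(\delta(q_0,\psi(\mu')),\sigma)\in\Delta$, into $(q,\sigma)\in\Delta\Leftrightarrow(q',\sigma)\in\Delta$. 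Therefore $\textup{C}4$ asserting this for all $\mu,\mu'$ with equal $P_{\Delta}$-image is logically identical to Condition~2 asserting it for every $(q,q')\in\mathcal{T}_{conf,G}(P_{\Delta})$ and every $\sigma\in\Sigma$. Combining the two equivalences with the fact that $\Delta$ is delay feasible precisely when $\textup{C}3$ and $\textup{C}4$ both hold yields the proposition.

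The step I expect to demand the most care is the quantifier change in $\textup{C}3\Leftrightarrow$ Condition~1: Condition~1 ranges over the reachable state set $Q_W$, whereas $\textup{C}3$ ranges over strings $\mu$, so I must argue that $\mu\mapsto\delta_W(q_{0,W},\mu)$ is onto the reachable states and that two strings reaching the same state never cause a discrepancy. The second point is automatic, since the predicate $(q,\sigma)\in\Delta\Leftrightarrow(x,\sigma)\in\Delta$ is a function of the state and $\sigma$ alone; the first is merely reachability of $W$. I would also record that the free variable $\sigma$ in Condition~2 is understood to be universally quantified over $\Sigma$, so as to line up with the explicit $(\forall\sigma\in\Sigma)$ appearing in $\textup{C}4$.
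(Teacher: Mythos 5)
Your proposal is correct and takes essentially the same approach as the paper: both reduce the claim to the componentwise equivalences $\textup{C}3 \Leftrightarrow$ Condition~1 and $\textup{C}4 \Leftrightarrow$ Condition~2, using the state correspondence $q=\delta(q_0,\psi(\mu))$ and $x=\delta(q_0,g^{-1}(\psi^g(\mu)))$ for $\delta_W(q_{0,W},\mu)=(q,\theta_o,\theta_c,x)$ (i.e., Proposition~\ref{Prop1} together with the remark on $W=\tilde{G}||G^g$), including the same resolution of the string-versus-state quantifier exchange via reachability of $Q_W$. The only difference is presentational: the paper phrases both directions as proofs by contradiction, while you argue the biconditionals directly.
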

\begin{proof}
Please see Appendix-$H$.
\end{proof}

\subsection{Solutions}

With the above preparations, we are now ready to calculate a minimal transition-based SAP $\Delta^*$ such that $\Delta^*$ is delay feasible and $\mathcal{T}_{conf,G}(\Theta_{\Delta^*}^{N_o}) \cap \mathcal{T}_{spec}=\emptyset$.
In this algorithm, we first use a subroutine (Algorithm 1) to compute the maximum delay feasible subpolicy $\Delta^{\uparrow }$ of a given SAP $\Delta$, and the set of confusable state pairs $\mathcal{T}_{conf,G}(P_{\Delta^{\uparrow }})$.

\begin{algorithm}
  \caption{\textsc{Calculating} $\Delta^{\uparrow}$ and $\mathcal{T}_{conf,G}(P_{{\Delta}^{\uparrow}})$} \label{alg1}
  \LinesNumbered
  \KwIn {Automaton ${G}$ and a transition-based SAP $\Delta$\;} 
\KwOut {$\Delta^{\uparrow}$ and $\mathcal{T}_{conf,G}(P_{\Delta^{\uparrow}})$\;}
     Replace all the transitions $(q,\sigma)$ in $G$ by $(q,g(\sigma))$ to obtain $G^g$\;
Construct $\tilde{G}$ with the input $G$ as discussed in Section III\;
     Set \label{lin3} $W \leftarrow \tilde{G} || G^g$ and denote by $W=(Q_W,\tilde{\Sigma},\delta_W,q_{0,W})$\;
\Repeat{$\Delta=\Delta'$}
  {
     Set $\Delta' \leftarrow \Delta$\;
Compute $\mathcal{T}_{conf,G}(P_{\Delta})$ using (\ref{Eq16})\;
     Set \label{lin7} $\Delta \leftarrow \Delta \setminus \{(q,\sigma)\in \Delta:\textup{D}1 \lor \textup{D}2 \lor \textup{D}3\}$\;
}
\Return ${\Delta}^{\uparrow}\leftarrow \Delta$ and $\mathcal{T}_{conf,G}(P_{{\Delta}^{\uparrow}})\leftarrow \mathcal{T}_{conf,G}(P_{{\Delta}})$.
\end{algorithm}

Lines $1\sim 3$ refine the state space of $\tilde{G}$ to obtain $W$.
The repeat-until loop on Line 4 removes transitions violating $\textup{C}3$ or $\textup{C}4$, one by one, from $\Delta$.
Specifically, in Line \ref{lin7}, ``\textup{D}1'' refers to $(\exists (q,\theta_o,\theta_c,x)\in Q_W)\tilde{\delta}((q,\theta_o,\theta_c),\sigma)! \wedge (x,\sigma) \notin \Delta,$ ``\textup{D}2'' refers to $(\exists (x,\theta_o,\theta_c,q)\in Q_W)\tilde{\delta}((x,\theta_o,\theta_c),\sigma)! \wedge (x,\sigma) \notin \Delta,$ and ``\textup{D}3'' refers to
$(\exists (q,q') \in \mathcal{T}_{conf,G}(P_{\Delta}))(q',\sigma)\notin \Delta.$
If $\exists (q,\sigma)\in \Delta$ such that $\textup{D}1$, $\textup{D}2$ or $\textup{D}3$ are satisfied, by Proposition \ref{Proposition5}, delay feasibility of $\Delta$ is violated.
Thus, in Line 7, we remove $(q,\sigma)$ from $\Delta$ if it violates $\textup{D}1$, $\textup{D}2$, or $\textup{D}3$.
Algorithm 1 terminates when there is no change to $\Delta$ in an iteration.

\begin{proposition}\label{Prop7}
The output $\Delta^{\uparrow}$ of Algorithm 1 is the maximal delay feasible subpolicy of $\Delta$.
\end{proposition}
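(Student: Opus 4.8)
The plan is to establish two facts: that the returned policy $\Delta^{\uparrow}$ is itself delay feasible, and that it contains every delay feasible subpolicy of the input $\Delta$; together these identify $\Delta^{\uparrow}$ as the unique maximum, which exists by Theorem \ref{Prop2}. The first fact will be read off from the termination condition of the repeat-until loop via Proposition \ref{Proposition5}, and the second from a loop invariant asserting that no transition belonging to a delay feasible subpolicy is ever deleted.

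For delay feasibility of the output, I would use that at termination $\Delta=\Delta'$, so no $(q,\sigma)\in\Delta^{\uparrow}$ satisfies $\textup{D}1$, $\textup{D}2$, or $\textup{D}3$. Negating $\textup{D}1$ over all $(q,\sigma)\in\Delta^{\uparrow}$ yields the forward implication of Condition 1, i.e. (\ref{Equation14}): whenever $(q,\theta_o,\theta_c,x)\in Q_W$ has $\sigma$ active in $\tilde{G}$ and $(q,\sigma)\in\Delta^{\uparrow}$, then $(x,\sigma)\in\Delta^{\uparrow}$. The converse implication is the delicate point: if instead $(x,\sigma)\in\Delta^{\uparrow}$ but $(q,\sigma)\notin\Delta^{\uparrow}$ at such a state, then this very $W$-state witnesses $\textup{D}2$ for the pair $(x,\sigma)$ (its command component is $x$, while its plant component $q$ carries $(q,\sigma)\notin\Delta^{\uparrow}$), contradicting termination. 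Hence both directions of (\ref{Equation14}) hold. Negating $\textup{D}3$ together with the symmetry $(q,q')=(q',q)$ of $\mathcal{T}_{conf,G}(P_{\Delta^{\uparrow}})$ then yields Condition 2, i.e. (\ref{Equation15}), so Proposition \ref{Proposition5} certifies that $\Delta^{\uparrow}$ is delay feasible.

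For maximality I would maintain the invariant that every delay feasible subpolicy $\Delta_f$ of the input remains a subset of the current $\Delta$ throughout the loop; since $\Delta$ is finite and strictly shrinks until the fixpoint, termination is automatic, and the invariant at the fixpoint gives $\Delta_f\subseteq\Delta^{\uparrow}$. The inductive step amounts to showing that any transition $(q,\sigma)$ deleted in Line \ref{lin7} cannot lie in $\Delta_f$. When $(q,\sigma)$ triggers $\textup{D}1$ (resp. $\textup{D}2$), the witnessing $W$-state is a fixed feature of $\tilde{G}$ and $W$, independent of $\Delta$; since $\Delta_f\subseteq\Delta$ forces the offending entry $(x,\sigma)\notin\Delta_f$, assuming $(q,\sigma)\in\Delta_f$ would violate (\ref{Equation14}) for $\Delta_f$, contradicting its delay feasibility.

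The main obstacle is the $\textup{D}3$ case, because $\textup{D}3$ is phrased in terms of $\mathcal{T}_{conf,G}(P_{\Delta})$ computed from the current, larger policy $\Delta$, whereas feasibility of $\Delta_f$ is stated against its own confusable set $\mathcal{T}_{conf,G}(P_{\Delta_f})$. I would bridge this gap with Lemma \ref{Lem1}: since $\Delta\supseteq\Delta_f$ and $\Delta_f$ is delay feasible, $P_{\Delta}(\mu)=P_{\Delta}(\mu')$ implies $P_{\Delta_f}(\mu)=P_{\Delta_f}(\mu')$, whence $\mathcal{T}_{conf,G}(P_{\Delta})\subseteq\mathcal{T}_{conf,G}(P_{\Delta_f})$. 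Thus the pair $(q,q')$ witnessing $\textup{D}3$ is confusable under $\Delta_f$ as well, and with $(q',\sigma)\notin\Delta_f$, assuming $(q,\sigma)\in\Delta_f$ contradicts (\ref{Equation15}) for $\Delta_f$. This closes the inductive step in all three cases; the invariant then yields $\Delta_f\subseteq\Delta^{\uparrow}$ for every delay feasible $\Delta_f$, so the feasible $\Delta^{\uparrow}$ is the maximum, completing the proof.
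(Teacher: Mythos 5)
Your proposal is correct and follows essentially the same route as the paper's proof: a loop invariant showing that no transition of any delay feasible subpolicy is ever deleted (using Lemma \ref{Lem1} to carry the $\textup{D}3$ witness from $\mathcal{T}_{conf,G}(P_{\Delta})$ over to $\mathcal{T}_{conf,G}(P_{\Delta_f})$), combined with the termination/fixpoint condition to certify delay feasibility of the output through the $\textup{C}3$--$\textup{C}4$ characterization. The only cosmetic differences are that you state the invariant against an arbitrary delay feasible subpolicy rather than against the maximum one (whose existence the paper takes from Theorem \ref{Prop2}), and that you invoke Proposition \ref{Proposition5} directly at the fixpoint where the paper re-derives the same contradiction from $\neg\textup{C}3$ and $\neg\textup{C}4$.
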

\begin{proof}
Please see Appendix-$I$.
\end{proof}

%We now analyze the computational complexity of Algorithm 1.

\begin{remark}\label{Rem6}
\textcolor{blue}{The number of iterations of Algorithm 1 is upper-bounded by the cardinality of the transitions of
the system $G$, i.e.,  $|Q| \times |\Sigma|$.
In each iteration, we need to calculate $\mathcal{T}_{conf,G}(P_{\Delta})$ once.
By (\ref{Eq16}), to calculate $\mathcal{T}_{conf,G}(P_{\Delta})$, we need to calculate $\mathcal{T}_{conf,W}(\theta^{\hat{\Delta}})$.
As shown in \cite{lafortune07scl}, the complexity of calculating  $\mathcal{T}_{conf,W}(\theta^{\hat{\Delta}})$ is of the order $|Q_W|^2 \times |\tilde{\Sigma}|$.
In the worst case, $|Q_W|=|Q|^2\times |\Sigma|^{N+M}\times (N_o+1)^N \times (N_c+1)^M$ and $|\tilde{\Sigma}|=3 \times |\Sigma|$.
Therefore, the worst-case complexity of Algorithm 1 is $\mathcal{O}(|Q_W|^2 \times |\tilde{\Sigma}|\times |Q| \times |\Sigma|)=\mathcal{O}(|Q|^3 \times |\Sigma|^{2\times (N+M)}\times N_o^N \times N_c^M)$.}
\end{remark}

We present the main algorithm for calculating the minimal transition-based SAP $\Delta^*$.
Let us first show how to calculate $\mathcal{T}_{conf,G}(\Theta_{\Delta}^{N_o})$ using $\mathcal{T}_{conf,G}(P_{\Delta})$. 

\begin{proposition}\label{Prop8}
Given the set of indistinguishable state pairs $\mathcal{T}_{conf,G}(P_{\Delta})$, we can calculate $\mathcal{T}_{conf,G}(\Theta_{\Delta}^{N_o})$ as follows: 
\begin{align}\label{Eq17}
\mathcal{T}_{conf,G}(\Theta_{\Delta}^{N_o})=&\{(x,x')\in Q \times Q: (\exists (q,q') \in \mathcal{T}_{conf,G}(P_{\Delta}))\notag \\
&x \in {R}^{N_o}(q)\wedge x'\in {R}^{N_o}(q')\},
\end{align}
where ${R}^{N_o}(q)=\{x \in Q:(\exists s \in \Sigma^{\le N_o})x=\delta(q,s)\}$ is the set of states in $G$ that can be
reached from state $q$ via a string $s$ with $|s|\le N_o$.
\end{proposition}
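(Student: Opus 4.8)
The plan is to prove the set equality in (\ref{Eq17}) by establishing the two inclusions separately, using Proposition \ref{Prop3} as the bridge between the delay-free information mapping $\theta^{\Delta}$ that defines $\Theta_{\Delta}^{N_o}$ and the actual observation mapping $P_{\Delta}$ that defines $\mathcal{T}_{conf,G}(P_{\Delta})$. Since $\Theta_{\Delta}^{N_o}$ is meaningful only for a delay feasible SAP, I treat $\Delta$ as delay feasible throughout, so that Proposition \ref{Prop3} yields $P_{\Delta}(\mu)=\theta^{\Delta}(\psi(\mu))$ for every $\mu\in\mathcal{L}(\tilde G)$. I also use Proposition \ref{Prop1}, i.e. $\psi(\mathcal{L}(\tilde G))=\mathcal{L}(G)$, to lift any string of $G$ to a string of $\tilde G$ realizing it. Write $S$ for the set on the right-hand side of (\ref{Eq17}).

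For $\mathcal{T}_{conf,G}(\Theta_{\Delta}^{N_o})\subseteq S$, I would start from a pair $(x,x')$ with witnesses $s,s'\in\mathcal{L}(G)$, $x=\delta(q_0,s)$, $x'=\delta(q_0,s')$, and $\theta^{\Delta}(s_{-i})=\theta^{\Delta}(s'_{-j})$ for some $i,j\in\{0,\ldots,N_o\}$. Put $q=\delta(q_0,s_{-i})$ and $q'=\delta(q_0,s'_{-j})$, lift $s_{-i}$ and $s'_{-j}$ to $\mu,\mu'\in\mathcal{L}(\tilde G)$ via Proposition \ref{Prop1}, and apply Proposition \ref{Prop3} to turn $\theta^{\Delta}(s_{-i})=\theta^{\Delta}(s'_{-j})$ into $P_{\Delta}(\mu)=P_{\Delta}(\mu')$; this certifies $(q,q')\in\mathcal{T}_{conf,G}(P_{\Delta})$. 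Since $s_{-i}$ is $s$ with its last $i\le N_o$ events deleted, that deleted suffix drives $q$ to $x$ along a string of length at most $N_o$, so $x\in R^{N_o}(q)$, and symmetrically $x'\in R^{N_o}(q')$; hence $(x,x')\in S$.

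The reverse inclusion $S\subseteq\mathcal{T}_{conf,G}(\Theta_{\Delta}^{N_o})$ is the mirror image. Given $(x,x')\in S$ with a witnessing pair $(q,q')\in\mathcal{T}_{conf,G}(P_{\Delta})$ and $x\in R^{N_o}(q)$, $x'\in R^{N_o}(q')$, I take $\mu,\mu'$ realizing the confusability, set $t=\psi(\mu)$ and $t'=\psi(\mu')$, and invoke Proposition \ref{Prop3} to obtain $\theta^{\Delta}(t)=\theta^{\Delta}(t')$. Writing $x=\delta(q,w)$ and $x'=\delta(q',w')$ with $|w|,|w'|\le N_o$, I form $s=tw$ and $s'=t'w'$; then $t=s_{-|w|}$ and $t'=s'_{-|w'|}$, so $\theta^{\Delta}(s_{-|w|})=\theta^{\Delta}(s'_{-|w'|})$ witnesses $\Theta_{\Delta}^{N_o}(s)\cap\Theta_{\Delta}^{N_o}(s')\neq\emptyset$, placing $(x,x')$ in $\mathcal{T}_{conf,G}(\Theta_{\Delta}^{N_o})$.

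The only delicate point is the bookkeeping that links the two descriptions of the $N_o$-neighbourhood: $\Theta_{\Delta}^{N_o}$ looks backward by deleting suffixes ($s_{-i}$), whereas $R^{N_o}$ looks forward by appending suffixes, and I must keep the single bound $N_o$ governing both, so that a suffix of length at most $N_o$ removed to reach the ``core'' confusable state $q$ is exactly a suffix of length at most $N_o$ appended in $R^{N_o}(q)$. This is purely combinatorial once Proposition \ref{Prop3} has identified $P_{\Delta}$ with $\theta^{\Delta}$; in particular no reasoning about occurring times $t_{\downarrow}$ is needed, because the passage from the time bound to the event-count bound is already absorbed into the definitions of $\Theta_{\Delta}^{N_o}$ and $R^{N_o}$.
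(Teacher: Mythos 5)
Your proposal is correct and follows essentially the same route as the paper's proof: both directions are handled by decomposing each string into a ``core'' prefix and a suffix of length at most $N_o$, lifting the core to $\mathcal{L}(\tilde{G})$ via Proposition \ref{Prop1}, and using Proposition \ref{Prop3} (under the delay feasibility of $\Delta$) to translate between $P_{\Delta}$ and $\theta^{\Delta}$. Your explicit remark that $\Delta$ must be taken delay feasible for Proposition \ref{Prop3} to apply is a point the paper leaves implicit, but the argument is otherwise the same.
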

\begin{proof}
Please see Appendix-$J$.
\end{proof}

Algorithm 2 computes the optimal transition-based SAP $\Delta^*$.

\begin{algorithm}
  \caption{\textsc{Calculating} $\Delta^*$} \label{alg1}
    \LinesNumbered
  \KwIn {Automaton ${G}$ and specification $\mathcal{T}_{spec}$\;} 
\KwOut{ Minimal SAP ${\Delta}^*$\;}
   Initially, set $\Delta \leftarrow Q \times \Sigma_o$ and $D \leftarrow \emptyset$\;
\While{$\Delta \neq D$} 
{Pick a $(q,\sigma)\in \Delta \setminus D$ and set $\tilde{\Delta} \leftarrow \Delta \setminus \{(q,\sigma)\}$\;
Call Algorithm 1 with the inputs $G$ and $\tilde{\Delta}$ to compute $\tilde{\Delta}^{\uparrow}$ and $\mathcal{T}_{conf,G}(P_{\tilde{\Delta}^{\uparrow}})$\;
Calculate $\mathcal{T}_{conf,G}(\Theta_{\tilde{\Delta}^{\uparrow}}^{N_o})$ using (\ref{Eq17})\;
\eIf{$\mathcal{T}_{conf,G}(\Theta_{\tilde{\Delta}^{\uparrow}}^{N_o})\cap \mathcal{T}_{spec} = \emptyset$ }
{Set $\Delta \leftarrow \tilde{\Delta}^{\uparrow }$\;}
{Set $D\leftarrow D \cup \{(q,\sigma)\}$\;}
}
\Return ${\Delta}^* \leftarrow \Delta$.
\end{algorithm}

The while-loop on Line 2 of Algorithm 2 tries to remove transitions one by one from $\Omega$ while ensuring that $\mathcal{T}_{spec}$ is satisfied.
Specifically,  assume that a transition $(q,\sigma)$ is selected from $\Delta$ in one iteration.
By Line 3 of Algorithm 2, $\tilde{\Delta} \leftarrow \Delta \setminus \{(q,\sigma)\}$.
However, $\tilde{\Delta}$ could violate delay feasibility.
Nevertheless, by Theorem \ref{Prop2}, we can always find a supremal subplolicy $\tilde{\Delta}^{\uparrow}$ among all the subpolicies of $\tilde{\Delta}$.
If $\mathcal{T}_{conf,G}(\Theta_{\tilde{\Delta}^{\uparrow}}^{N_o})\cap \mathcal{T}_{spec} \neq \emptyset$, by Proposition \ref{Prop4}, all the delay feasible subpolicies of $\tilde{\Delta}$ violate $\mathcal{T}_{spec}$.
Thus, there is no need to further consider them in future iterations.
Correspondingly, we include $(q,\sigma)$ in $D$ in Line 9 of Algorithm 2.
If $\mathcal{T}_{conf,G}(\Theta_{\tilde{\Delta}^{\uparrow}}^{N_o})\cap \mathcal{T}_{spec} = \emptyset$, we set $\Delta \leftarrow \tilde{\Delta}^{\uparrow}$ and go to the next iteration to consider another $(q',\sigma')\in \Delta \setminus D$.
Algorithm 2 does not terminate until $\Delta=D$.

\begin{theorem}\label{Theo2}
The output of Algorithm 2 $\Delta^*$ is a solution to the sensor activation problem formulated in Problem 1. 
\end{theorem}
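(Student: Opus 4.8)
The plan is to verify, for the policy $\Delta^*$ returned by Algorithm~2, the three requirements of Problem~\ref{Prob1}---delay feasibility, satisfaction of $\mathcal{T}_{spec}$, and minimality---along with termination of the while-loop. I would organize everything around a loop invariant: at the top of each iteration the current policy $\Delta$ is delay feasible and satisfies $\mathcal{T}_{conf,G}(\Theta_{\Delta}^{N_o})\cap\mathcal{T}_{spec}=\emptyset$, while $\Delta$ is non-increasing and $D$ is non-decreasing across iterations. Requirements 1) and 2) then drop out of the invariant at termination.

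For the base case I would note that the initial policy $\Delta=Q\times\Sigma_o=\Delta^{all}$ is delay feasible: conditions $\textup{C}3$ and $\textup{C}4$ hold trivially, since for each $\sigma\in\Sigma_o$ both sides of every biconditional are true, and for each $\sigma\in\Sigma_{uo}$ both sides are false; and $\Delta^{all}$ meets the specification by the standing assumption of Problem~\ref{Prob1}. For the inductive step, $\Delta$ changes only through the assignment $\Delta\leftarrow\tilde{\Delta}^{\uparrow}$ in the if-branch. By Proposition~\ref{Prop7}, $\tilde{\Delta}^{\uparrow}$ is the maximal delay feasible subpolicy of $\tilde{\Delta}$, hence delay feasible, and the guard of that branch forces $\mathcal{T}_{conf,G}(\Theta_{\tilde{\Delta}^{\uparrow}}^{N_o})\cap\mathcal{T}_{spec}=\emptyset$. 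Since $\tilde{\Delta}^{\uparrow}\subseteq\tilde{\Delta}\subseteq\Delta$, the policy only shrinks, establishing monotonicity. Thus the returned $\Delta^*$ is delay feasible and satisfies $\mathcal{T}_{spec}$, settling 1) and 2).

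The crux is minimality, which I would prove by contradiction. Suppose some $\Delta'\subsetneq\Delta^*$ is delay feasible with $\mathcal{T}_{conf,G}(\Theta_{\Delta'}^{N_o})\cap\mathcal{T}_{spec}=\emptyset$, and choose $(q,\sigma)\in\Delta^*\setminus\Delta'$. As the loop halts only when $\Delta=D$, we have $(q,\sigma)\in D$, so it was inserted into $D$ at some iteration whose policy I call $\Delta_{\mathrm{it}}$; by monotonicity $\Delta_{\mathrm{it}}\supseteq\Delta^*$. At that iteration the algorithm set $\tilde{\Delta}_{\mathrm{it}}=\Delta_{\mathrm{it}}\setminus\{(q,\sigma)\}$ and found $\mathcal{T}_{conf,G}(\Theta_{\tilde{\Delta}_{\mathrm{it}}^{\uparrow}}^{N_o})\cap\mathcal{T}_{spec}\neq\emptyset$. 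Now $\Delta'\subseteq\Delta^*\setminus\{(q,\sigma)\}\subseteq\Delta_{\mathrm{it}}\setminus\{(q,\sigma)\}=\tilde{\Delta}_{\mathrm{it}}$, and $\Delta'$ is delay feasible; because $\tilde{\Delta}_{\mathrm{it}}^{\uparrow}$ is supremal among all delay feasible subpolicies of $\tilde{\Delta}_{\mathrm{it}}$ (Theorem~\ref{Prop2} and Proposition~\ref{Prop7}), it follows that $\Delta'\subseteq\tilde{\Delta}_{\mathrm{it}}^{\uparrow}$. Anti-monotonicity (Proposition~\ref{Prop4}) then gives $\mathcal{T}_{conf,G}(\Theta_{\tilde{\Delta}_{\mathrm{it}}^{\uparrow}}^{N_o})\subseteq\mathcal{T}_{conf,G}(\Theta_{\Delta'}^{N_o})$, whence $\mathcal{T}_{conf,G}(\Theta_{\tilde{\Delta}_{\mathrm{it}}^{\uparrow}}^{N_o})\cap\mathcal{T}_{spec}=\emptyset$, contradicting the reason $(q,\sigma)$ was placed in $D$. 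Hence no such $\Delta'$ exists and $\Delta^*$ is minimal.

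The main obstacle, and the step I would be most careful with, is precisely this cross-iteration bridge: a transition is declared irremovable while $\Delta$ is still large, yet minimality is about the final, smaller $\Delta^*$. Its resolution rests on two facts that must be stated explicitly---that $\Delta$ only shrinks during the run, so $\Delta_{\mathrm{it}}\supseteq\Delta^*$, and that $\tilde{\Delta}^{\uparrow}$ is genuinely \emph{supremal}, so any delay feasible $\Delta'\subseteq\tilde{\Delta}_{\mathrm{it}}$ is contained in $\tilde{\Delta}_{\mathrm{it}}^{\uparrow}$---which together with anti-monotonicity push a hypothetical smaller solution back to the already-rejected $\tilde{\Delta}_{\mathrm{it}}^{\uparrow}$. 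For completeness I would record termination by the measure $|\Delta\setminus D|$, which strictly decreases each iteration (the else-branch moves the chosen $(q,\sigma)$ into $D$, while the if-branch deletes it from $\Delta$) and is bounded below by $0$; en route I would also check the invariant $D\subseteq\Delta$, so that the guard $\Delta\neq D$ is equivalent to $\Delta\setminus D\neq\emptyset$ and the selection $(q,\sigma)\in\Delta\setminus D$ is always well defined.
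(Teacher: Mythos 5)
Your proof is correct and follows essentially the same route as the paper's: requirements 1) and 2) follow from the structure of the if-branch (via Proposition \ref{Prop7}), termination from finiteness of the transition set, and minimality from the same cross-iteration bridge—monotonicity of $\Delta$, supremality of $\tilde{\Delta}^{\uparrow}$ (Theorem \ref{Prop2} and Proposition \ref{Prop7}), and anti-monotonicity (Proposition \ref{Prop4})—that the paper uses, merely phrased as a contradiction rather than directly. Your explicit verification that $\Delta^{all}=Q\times\Sigma_o$ is delay feasible (needed for the corner case where no transition is ever removed) and your termination measure $|\Delta\setminus D|$ are details the paper's proof leaves implicit, but they do not change the underlying argument.
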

\begin{proof}
Please see Appendix-$K$.
\end{proof}
%We next analyze the computational complexity of Algorithm 2.

\begin{remark}
\textcolor{blue}{Algorithm 2 examines each transition at most once in the iterations. 
Therefore, the number of such iterations is upper-bounded by the cardinality of the transitions of
the system $G$, which is further upper bounded by $|Q| \times |\Sigma|$.
Each iteration calls Algorithm 1 once. 
As discussed in Remark \ref{Rem6}, the worst-case complexity of Algorithm 1 is $\mathcal{O}(|Q|^3 \times |\Sigma|^{2\times(N+M)}\times N_c^N \times N_o^M)$.
Thus, taking the product of $|Q| \times |\Sigma|$ and $|Q|^3 \times |\Sigma|^{2\times(N+M)} \times N_c^N \times N_o^M$, the worst-case computational complexity for solving Problem 1 is $\mathcal{O}(|Q|^4 \times |\Sigma|^{2\times(N+M)}\times N_c^N \times N_o^M)$.}
\end{remark}

\begin{figure}
	\begin{center}
		\includegraphics[width=7.6cm]{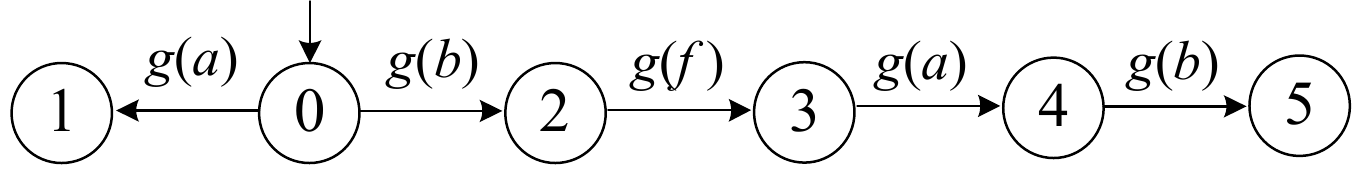}   
		\caption{Automaton $G^g$ in Example \ref{Exa5}.} 
		\label{Fig4}
	\end{center}
\end{figure}

We use the following example to illustrate how the proposed algorithms proceed.

\begin{example}\label{Exa5}
Let us again consider Example \ref{Exa1}.
The automaton $G$ with the set of observable events $\Sigma_o=\{a,b\}$ is depicted in Fig. \ref{Fig2}.
We have $t_{\downarrow}(0,a)=t_{\downarrow}(0,b)=t_{\downarrow}(3,a)=t_{\downarrow}(4,a)=2$ and $t_{\downarrow}(2,f)=1$.
%For brevity, we denote states (0,-1), (1,-1), (2,-1), (3,0), (4,1), and (4,2) by $y_0$, $y_1$, $y_2$, $y_3$, $y_4$, and $y_5$, respectively, in Fig.\ref{Fig41}.
We assume that $N_o=1$ and $N_c=2$, and we want to distinguish strings containing $f$ from strings without $f$ within $2$ steps.
Thus, we set $\mathcal{T}_{spec}\leftarrow\{(0,5),(1,5),(2,5)\}.$

By Line 1 of Algorithm 2, we have $\Delta \leftarrow \{0,1,\ldots,5\} \times \{a,b\}$ and $D=\emptyset$.
Next go to Line 3 of Algorithm 2. Suppose that we try to remove $(4,b)\in \Delta \setminus D$ in the first iteration.
Set $\tilde{\Delta} \leftarrow \Delta \setminus \{(4,b)\}$.
Go to Line 4 of Algorithm 2 and proceed with Algorithm 1 as follows.
By Line 1 of Algorithm 1, $G^g$ is constructed in Fig. \ref{Fig4}.
By Line 2 of Algorithm 1,  we construct $\tilde{G}$ in Fig. \ref{Fig3}.
By Line 3 of Algorithm 1, $W=\tilde{G}||G^g$.
%One can check that there exists $(3,\varepsilon,(b,0)(f,0),0)\in Q_W$ such that $\tilde{\delta}((3,\varepsilon,(b,0)(f,0)),a)!$ and $(3,a) \in \tilde{\Delta}$ and $(0,a)\notin \tilde{\Delta}$.
Completing the repeat-until loop of Algorithm 1, $\tilde{\Delta}^{\uparrow} \leftarrow \{0,1,\ldots,5\} \times \{a,b\} \setminus \{(2,b),(3,b),(4,b)\}$ and $\mathcal{T}_{conf,G}(P_{\tilde{\Delta}^{\uparrow}})=\{(2,3),(4,5)\}$.
Note that we omit $(i,i)$, $i=0,1,\ldots,5$ in $\mathcal{T}_{conf,G}(P_{\tilde{\Delta}^{\uparrow}})$ for brevity.

Going to Line 5 of Algorithm 2, by (\ref{Eq17}), we can calculate
$
\mathcal{T}_{conf,G}(\Theta_{\tilde{\Delta}^{\uparrow}}^{N_o})=\{(2,3),(2,4),(3,4),(4,5)\}.$
Also, we omit $(i,i)$, $i=0,1,\ldots,5$ in $\mathcal{T}_{conf,G}(\Theta_{\tilde{\Delta}^{\uparrow}}^{N_o})$ for brevity.
It is verified that $\mathcal{T}_{conf,G}(\Theta_{\tilde{\Delta}^{\uparrow}}^{N_o})\cap \mathcal{T}_{spec} = \emptyset$.
By the if-condition of Algorithm 2,  ${\Delta} \leftarrow \tilde{\Delta}^{\uparrow}$ at the end of the first iteration.
Since $\Delta \neq D$, we go back to Line 3 of Algorithm 2.

In the second iteration of Algorithm 2, we try to remove $(0,b)$.
By Line 3 of Algorithm 2, $\tilde{\Delta} \leftarrow \Delta \setminus \{(0,b)\}$ 
We go to Line 4 of Algorithm 2 and proceed to Algorithm 1.
When completing Algorithm 1, one can check that $(0,b), (4,b) \notin \tilde{\Delta}^{\uparrow}$.
By definition, we have $(1,5)\in \mathcal{T}_{conf,G}(P_{\tilde{\Delta}^{\uparrow}})$.
By (\ref{Eq17}),  $(1,5)\in \mathcal{T}_{conf,G}(\Theta_{\tilde{\Delta}^{\uparrow}}^{N_o})$.
Therefore, $(1,5) \in \mathcal{T}_{conf,G}(\Theta_{\tilde{\Delta}^{\uparrow}}^{N_o})\cap \mathcal{T}_{spec}\neq \emptyset$.
By the if-condition of Algorithm 2 (Line 6), $D \leftarrow \{(0,b)\}$.
%Similarly, we can remove transitions $(1,b)$, $(y_2,b)$, $(y_3,b)$, $(y_4,a)$, $(y_4,b)$, $(y_5,a)$, and $(y_5,b)$ in the following iterations.
Finally, Algorithm 2 returns $\Delta^*=\{(0,b), (0,a), (2,a), (3,a)\}$.

We make the following comments on $\Delta^*$.
Since $(0,b)\in \Delta^*$, the agent first needs to activate the sensor for $b$.
Since $N_o=1$, once $bf$ occurs, the agent will see $b$.
Since $(0,a), (2,a), (3,a) \in \Delta^*$, the sensor for $a$ is activated at the time $a$ occurs after $bf$.
Hence, the agent will definitely see $ba$ after the occurrence of $bfab$ as $t_{\downarrow}(4,b)=2$ and the observation delays are upper bounded by 1.
The agent can infer that $f$ must have occurred after the occurrence of $bfab$.
\end{example}

\section{Application}

In this section, we first show how to apply the proposed approaches to activate sensors for fault diagnosis under delays.
Next, we use a practical example to illustrate the application.
\subsection{Delay $K$-diagnosability}
We denote by $E_f \subseteq \Sigma$  the set of fault events to be diagnosed.
The set of fault events is partitioned into $m$ mutually disjoint sets or fault types: $E_f = E_{f_1}\dot{\cup}\ldots \dot{\cup}E_{f_m}.$ 
We denote by $\mathcal{F}=\{1,\ldots,m\}$ the index set of the fault types.
 ``A fault event of type $f_i$ has occurred" means that a fault event in $E_{f_i}$ has occurred.
Let $\Psi(f_i)=\{t\sigma_f \in \mathcal{L}(H): \sigma_f \in E_{f_i}\}$ be the set of all $s \in \mathcal{L}(H)$ whose last event is a fault event of type $f_i$. 
$\mathcal{L}(H)\setminus s=\{t\in \Sigma^*:st\in\mathcal{L}(H)\}$ denotes the postlanguage of $\mathcal{L}(H)$ after $s$.
We write $f_i\in s$ if $\overline{\{s\}} \cap \Psi(f_i)\neq \emptyset$, i.e., $s$ contains a fault event of type $f_i$. 
\textcolor{blue}{A language $L \subseteq \Sigma^*$ is said to be live if whenever $s \in L$, then there exists a $\sigma \in \Sigma$ such that $s\sigma \in L$. We assume that $\mathcal{L}(G)$ is live when diagnosability is considered.
This assumption is not an actual restriction. If the modeling automaton has states without transitions
from it, we can add self-loops with a dummy unobservable event to these states,
which makes the corresponding language live.}

\begin{definition}
\textcolor{blue}{(Delay $K$-diagnosability)} A prefix-closed and live language $\mathcal{L}(G)$ is said to be delay $K$-diagnosable w.r.t. $\Theta_{\Omega}^{N_o}$, $E_f$, and $\mathcal{L}(G)$ if the following condition holds:

\begin{align}\label{Eq13}
&(\forall i \in \mathcal{F})(\forall s \in \Psi(f_i))(\forall t \in \mathcal{L}(G)\setminus s)|t|\ge K  \Rightarrow \notag \\
&(\forall u \in \mathcal{L}(G))\Theta_{\Omega}^{N_{o}}(st)\cap \Theta_{\Omega}^{N_{o}}(u) \neq \emptyset \Rightarrow f_i \in u.
\end{align}
\end{definition}

The above definition of delay $K$-diagnosability states that for any string in the system that contains any type of fault event, the agent can distinguish that string from strings without that type of fault event within $K$ steps under the delayed observation mapping $\Theta_{\Omega}^{N_o}$.
When $N_o=0$, delay $K$-diagnosability reduces to $K$-diagnosability  \cite{dallal14tac,yin18tac,yin19ac}.

Next, we show that the delay $K$-diagnosability problem can also be formulated as a sate pairs disambiguation problem.
Let $H=(X,\Sigma,\xi,x_0)$ be a given system.
As in \cite{yin18tac}, to track the number of event occurrences since each type of fault has occurred,  we refine the state space of $H$ by constructing a new automaton ${G}=({Q},\Sigma,{\delta}, {q}_{0})$, where ${Q} \subseteq X \times\{-1,0,1,\ldots,K\}^m$ is the set of states, ${q}_{0}=(x_0,\underbrace{-1,\ldots,-1}_m)$ is the initial state, and the transition function ${\delta}:{Q} \times \Sigma \rightarrow {Q}$ is defined as follows: for any ${q}=(x,n_1,\ldots,n_m)\in {Q}$ and any $\sigma \in \Sigma$, we have ${\delta}({q},\sigma)=(\xi(x,\sigma),n'_1,\ldots, n'_m),$
where for each $i\in \{1,\ldots,m\}$,
\Lfteqn\label{Eq1-1}
n'_i=
\begin{cases}
	n_i & \mathrm{if}\ [n_i=K] \lor [n=-1 \wedge \sigma \notin  E_{f_i}] \\
	n_i+1 & \mathrm{if}\ [0 \le n_i <K]  \lor [n_i=-1 \wedge \sigma \in E_{f_i}].
\end{cases}
\Ndeqn
%Hereafter, we assume that the system is given by such a refined ${G}=({Q},\Sigma,{\delta}, {q}_{0})$ that counts the number of event occurrences after a fault event occurrence.
%We use an example to illustrate how $G$ can be obtained from $H$.
%\begin{figure}
%\centering \subfigure[Automaton
%$H$]{\label{Fig41}\includegraphics[width=4.1cm]{fig21.pdf}}
%\subfigure[Automaton
%$G$]{\label{Fig42}\includegraphics[width=4.2cm]{fig22.pdf}}
%\caption{Automata $H$ and $G$.} \label{Fig4}
%\end{figure}

%\begin{example}\label{Exa4}
%Let the automaton $H$ depicted in Fig.\ref{Fig41} be the system
%to be diagnosed. 
%Suppose that there is only one type of fault event, denoted by $E_f=\{f\}$.
%Let $K=2$.
%The resulting $G$ is shown in Fig.\ref{Fig42}.
%\end{example}

To preserve delay $K$-diagnosability, the agent must sufficiently activate sensors to distinguish state pairs included in a specification that is defined as:
\begin{align}
\mathcal{T}_{spec}=&\{(q=(x,n_1,\ldots,n_m),q'=(x',n'_1,\ldots,n_m')) \in {Q} \times {Q}: \notag\\
&(\exists i \in \{1,\ldots, m\})n_i=K \wedge n_i'=-1\}.
\end{align}
We have that $(q,q')$ is equal to  $(q',q)$ in $\mathcal{T}_{spec}$. 
All of the above lead to the following proposition.
\begin{proposition}\label{Proposition8}
A prefix-closed and live language $\mathcal{L}(G)$ is delay $K$-diagnosable w.r.t. $\Theta_{\Omega}^{N_o}$, $E_f$, and $\mathcal{L}(G)$ iff $\mathcal{T}_{conf,{G}}(\Theta_{\Omega}^{N_o}) \cap \mathcal{T}_{spec}=\emptyset$.
\end{proposition}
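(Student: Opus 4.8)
The plan is to prove the two implications by unpacking both sides into statements about strings and the counter components $n_i$ of the refined automaton $G$, and then translating directly between them. The bridge between the two formulations is the following reading of the counters, which I would establish first as the key observation: for any $w \in \mathcal{L}(G)$ with $\delta(q_0,w)=(x,n_1,\ldots,n_m)$, by the update rule (\ref{Eq1-1}) the counter $n_i$ leaves $-1$ exactly when the first fault of type $f_i$ occurs, is incremented on every subsequent event, and saturates at $K$. Consequently $n_i=-1$ holds iff $f_i\notin w$, and $n_i=K$ holds iff $w$ admits a decomposition $w=st$ with $s\in\Psi(f_i)$ (take $s$ to be the prefix ending at the \emph{first} fault of type $f_i$) and $|t|\ge K$. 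This equivalence is the crux of the argument and where the bookkeeping must be done carefully, in particular the saturation behaviour and the ``first fault'' choice of $s$.

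For the direction ``$\mathcal{T}_{conf,G}(\Theta_{\Omega}^{N_o})\cap\mathcal{T}_{spec}\neq\emptyset \Rightarrow \mathcal{L}(G)$ is not delay $K$-diagnosable'', I would take a pair $(q,q')$ in the intersection. Membership in $\mathcal{T}_{spec}$ gives an index $i$ and, using the symmetry $(q,q')=(q',q)$, states $q=(x,\ldots)$ with $n_i=K$ and $q'=(x',\ldots)$ with $n'_i=-1$. Membership in $\mathcal{T}_{conf,G}(\Theta_{\Omega}^{N_o})$ yields $w,w'\in\mathcal{L}(G)$ with $\Theta_{\Omega}^{N_o}(w)\cap\Theta_{\Omega}^{N_o}(w')\neq\emptyset$, $\delta(q_0,w)=q$, and $\delta(q_0,w')=q'$. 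Applying the key observation, I decompose $w=st$ with $s\in\Psi(f_i)$ and $|t|\ge K$, and I obtain $f_i\notin w'$. Setting $u=w'$, the triple $(s,t,u)$ then witnesses a violation of (\ref{Eq13}), since $\Theta_{\Omega}^{N_o}(st)\cap\Theta_{\Omega}^{N_o}(u)\neq\emptyset$ while $f_i\notin u$.

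The converse direction is symmetric. Assuming $\mathcal{L}(G)$ is not delay $K$-diagnosable, (\ref{Eq13}) is violated, so there are $i\in\mathcal{F}$, $s\in\Psi(f_i)$, $t\in\mathcal{L}(G)\setminus s$ with $|t|\ge K$, and $u\in\mathcal{L}(G)$ such that $\Theta_{\Omega}^{N_o}(st)\cap\Theta_{\Omega}^{N_o}(u)\neq\emptyset$ but $f_i\notin u$. Writing $q=\delta(q_0,st)$ and $q'=\delta(q_0,u)$, the key observation gives $n_i=K$ at $q$ (a fault of type $f_i$ occurs no later than the end of $s$, after which at least $K$ events occur, so the counter saturates) and $n'_i=-1$ at $q'$. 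Hence $(q,q')\in\mathcal{T}_{spec}$, and since $st,u\in\mathcal{L}(G)$ have intersecting delayed observation mappings, $(q,q')\in\mathcal{T}_{conf,G}(\Theta_{\Omega}^{N_o})$ by its definition. Thus the intersection is nonempty. I expect the only delicate point throughout to be verifying the counter equivalence rigorously from (\ref{Eq1-1})---especially that $|t|\ge K$ forces $n_i=K$ regardless of how early the first type-$f_i$ fault occurred---while the remaining steps are routine substitutions.
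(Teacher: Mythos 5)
Your proposal is correct and follows essentially the same route as the paper: both directions are handled by contraposition/contradiction, translating between the state-pair condition and (\ref{Eq13}) via the reading of the counters in (\ref{Eq1-1}) ($n_i=-1$ iff $f_i\notin w$; $n_i=K$ iff $w$ decomposes as $st$ with $s\in\Psi(f_i)$ and $|t|\ge K$), with the same choice of witnesses in each direction. Your explicit treatment of saturation and of the ``first fault'' decomposition is a slightly more careful spelling-out of what the paper invokes implicitly when citing (\ref{Eq1-1}), but it is not a different argument.
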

\begin{proof}
Please see Appendix-$L$.
\end{proof}

\begin{remark}
\textcolor{blue}{Proposition \ref{Proposition8} shows that delay $K$-diagnosability is actually an instance of state pairs disambiguation property.
Therefore, the proposed framework is applicable for solving the dynamic sensor activation problems for the purposes of fault diagnosis of networked DESs.
Furthermore, due to the generality of $\mathcal{T}_{spec}$ \cite{lafortune07scl,yin18tac}, the proposed framework is also applicable for solving the dynamic sensor activation problems for other state pairs disambiguation properties (not confined to just the fault diagnosis problem), such as, fault prognosis problem and supervisory control problem, in the context of networked DESs.}
\end{remark}

\subsection{\textcolor{blue}{Illustrative Example}}

Next, we use a practical example to show the application of the proposed framework.
We consider a production line that is serviced by a robot.
The production line consists of two machines, named  Machines $A$ and $B$, and a conveyor belt.
%Robots 1 and 2 service at Machines $A$ and $B$, respectively.
Each production part or piece needs to be subsequently processed at  Machines $A$ and $B$.
%Each machine only processes one part at a time.
When a part arrives, the robot places it on the Machine $A$.
After this part is processed by Machine $A$, the robot places it on the conveyor belt.
And when it arrives at Machine $B$, the robot takes it to Machine $B$ for completing the remaining process.
For simplicity, we make the following assumptions: 1) The robot needs to serve a machine while the machine is busy, but it cannot serve both machines together;
2) The number of parts waiting to be processed at Machine $B$ will never exceed 1.
The production line may get stuck and fail to work due to machine or conveyor belt failures.
Here we consider only the conveyor belt failure.
To diagnose the conveyor belt failure, we next show how to dynamically activate sensors under  delays.
The event set for this example is $\Sigma=\{a,b,f,c,d\}$, where
\begin{itemize}
\item{
$a$: A part arrives at Machine $A$; the robot takes it to Machine $A$;  Machine $A$ starts to work on this part.}
\item{
$b$: Machine $A$ completes its process; the robot puts the processed part on the conveyor belt.}
\item{
$f$:  Machine $A$ completes its process; the robot puts the processed part on the conveyor belt; but the conveyor belt has been broken down.}
\item{$c$: Robot 2 takes a part from the conveyor belt to Machine $B$;  Machine $B$ starts to process it.}
\item{$d$: Machine $B$ completes the process; the robot takes the processed part away from  Machine $B$.
}
\end{itemize}

The state of the production line can be defined as a vector $\bar{x}_k=[q_1,q_2,q_3,q_4]$, $k=0,1,\ldots,5$, where $q_{1}$ takes 0 or 1, corresponding to Machine $A$ being idle or busy; $q_{2}$ takes 0 or 1, corresponding to Machine $B$ being idle or busy; $q_{3}$ also takes 0 or 1, corresponding to the conveyor belt being working well or broken down; $q_{4}$ is number of parts waiting  at Machine $B$, which is constrained to the values $\{0,1\}$.

\begin{figure}
	\begin{center}
		\includegraphics[width=6.8cm]{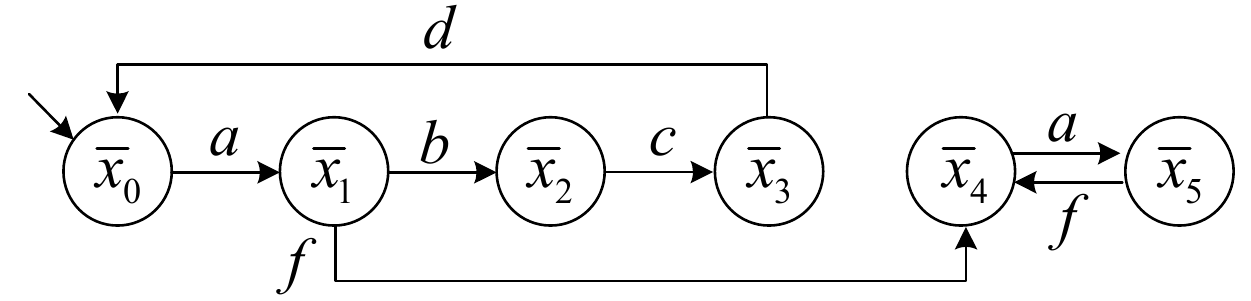}   
		\caption{System $G$ of the production line.} 
		\label{Fig5}
	\end{center}
\end{figure}

The automaton for the system is depicted in Fig. \ref{Fig5}.
%The description for each state in $G$ is given in Table I.
At the initial state $\bar{x}_0=[0,0,0,0]$, both
Machines $A$ and $B$ are idle, the conveyor belt is working well, and there is no part waiting to be processed at Machine $B$. 
Only event $a$ could occur at $\bar{x}_0$, which corresponds to a part arriving at Machine $A$ and Machine $A$ starting to process this part.
Upon the occurrence of $a$, the system moves to $\bar{x}_1=[1,0,0,0]$. 
In state $\bar{x}_1$, $b$ could occur, which corresponds to Machine $A$ completing its process and the robot placing the processed part on the conveyor belt.
The conveyor belt may get stuck.
Thus,  $f$ could also occur in state $\bar{x}_1$.

If $b$ occurs in state $\bar{x}_1$, the system moves to state $\bar{x}_2=[0,0,0,1]$. 
Since the robot can only serve a machine at a time, only $c$ could occur in state $\bar{x}_2$, which corresponds to Machine $B$ starting to process the newly arrived part. 
Upon the occurrence of $c$, the system moves to state $\bar{x}_3=[0,1,0,0]$.
When Machine $B$ completes its process, upon the occurrence of $d$, the system moves from state $\bar{x}_3$ to  state $\bar{x}_0$ and makes state transitions as mentioned above.

On the other hand, if $f$ occurs in state $\bar{x}_1$, the system moves to state $\bar{x}_4=[0,0,1,0]$. 
In state $\bar{x}_4$, since the conveyor belt does not work, $a$ is the only event could occur, which corresponds to a new part arriving at Machine $A$, and Machine $A$ starting to process this new part.
Upon the occurrence of $a$, the system moves to state $\bar{x}_5=[1,0,1,0]$. 
When Machine $A$ completes its process, the system moves back to $\bar{x}_4$  upon the occurrence of $f$, and repeats above processes again and again.

%\begin{figure}
%	\begin{center}
%		\includegraphics[width=8.8cm]{fig5.pdf}   
%		\caption{System $G$ of the production line.} 
%		\label{Fig5}
%	\end{center}
%\end{figure}

We assume that events $a,b,c,d$ are observable, and $f$ is unobservable, that is, $\Sigma_o=\{a,b,c,d\}$ and $\Sigma_{uo}=\{f\}$.
Since it takes time to  complete a process, let $t_{\downarrow}(\bar{x}_0,a)=t_{\downarrow}(\bar{x}_2,c)=t_{\downarrow}(\bar{x}_4,a)=3$. Additionally, let $t_{\downarrow}(\bar{x}_1,b)=t_{\downarrow}(\bar{x}_1,f)=t_{\downarrow}(\bar{x}_5,f)=t_{\downarrow}(\bar{x}_3,d)=1$.
Assume that both the observation  and control delays are upper bounded by 2, i.e., $N_o=N_c=2$.

Suppose that we want to diagnose if the conveyor belt has been broken down ($f$ occurs or not).
If the sensors for $a$ and $c$ keep working since the system starts running, $\mathcal{L}(G)$ is delay $2$-diagnosable. 
That is because that we can always observe $aa$ after the occurrence $afaf$, which differs from observations of any strings that do not contain $f$.
%Thus, we can infer $f$ has occurred (2 steps ago at the earlist) once we observe $aa$.

%Next, we show how to activate sensors dynamically while ensuring that delay $2$-diagnosability holds.
By using Algorithms 1 and 2, one can calculate a minimal SAP
$
\Delta^*=\{(\bar{x}_0,a),(\bar{x}_1,a), (\bar{x}_2,a), (\bar{x}_3, a), (\bar{x}_4,a), (\bar{x}_5,a)\} \cup \{(\bar{x}_0,b),(\bar{x}_1,b), (\bar{x}_2,b), (\bar{x}_4,b), (\bar{x}_5,b)\}
$
such that $\mathcal{L}(G)$ is delay $2$-diagnosable under $\Delta^*$.
By $\Delta^*$, we need to first activate the sensor for $a$ at the time it occurs.
Because otherwise, we cannot distinguish $afaf$ from $\varepsilon$.
Meanwhile, to distinguish $afaf$ and $abcda$, $\Delta^*$ actives the sensor for $b$ when it occurs (but deactives the sensors for $c$ and $d$ for minimizing sensor activations.).
To ensure delay feasibility of $\Delta^*$, by $\textup{C}3$ and $\textup{C}4$, we have that  $(\bar{x}_i,a)\in \Delta^*$ for $i=0,1,\ldots,5$, and $(\bar{x}_j,b)\in \Delta^*$ for $j=0,1,2,4,5$.
%That is, we may active the sensors $a$ or $b$ at some states in which $a$ or $b$ are not defined in $G$.

\section{\textcolor{blue}{Extension}}

\begin{figure}
	\begin{center}
		\includegraphics[width=7.0cm]{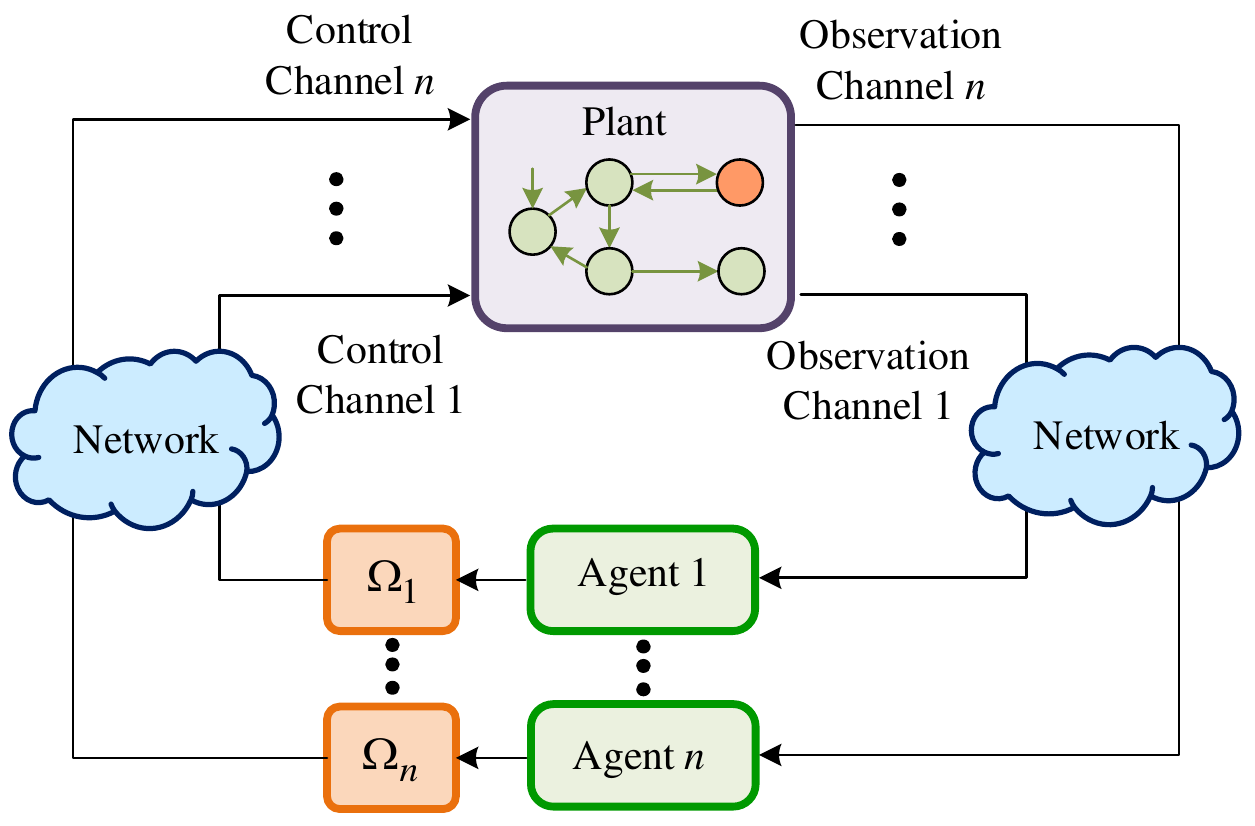}   
		\caption{Decentralized dynamic sensor activations for networked DESs.} 
		\label{Fig6}
	\end{center}
\end{figure}

In this section, we briefly discuss how to extend the proposed framework to a decentralized observation setting.
In many applications, such as cyber-physical systems, the information structure of the system is naturally decentralized, as
components of the system are distributed. 
Decentralized decision making, where several agents jointly make sensor activation decisions based on their own observations, is an efficient way
to observe these large-scale networked systems.
As shown in Fig. \ref{Fig6}, there is a set of $n$ partial-observation  agents, and all the local agents work independently, i.e., there are no communications among them. 
Events occurring in the plant are communicated to local agent $i$ over observation channel $i$, and the local agent $i$ sends sensor activation decisions to the actuator of the plant based on its SAP over control channel $i$.
Communication delays exist in each communication channel.
%We make assumptions for each communication channel in the same way as we make them in  
We assume that observation delays for observation channel $i$ are upper bounded by $N_{o,i}$ units of time, and control delays for control channel $i$ are upper bounded by $N_{c,i}$ units of time.

We denote by $I=\{1,2,\ldots,n\}$ the set of all the agents.
For each local agent $i\in I$, the set of observable events is denoted by $\Sigma_{o,i} \subseteq \Sigma_o$, and the set of controllable events is denoted by $\Sigma_{c,i}\subseteq \Sigma_c$, where $\Sigma_o=\cup_{i=1}^n\Sigma_{o,i}$ and $\Sigma_c=\cup_{i=1}^n\Sigma_{c,i}$. 
$\Sigma_{uo}=\Sigma \setminus \Sigma_o$ is the set of events that are unobservable to all the local agents.

%\textcolor{blue}{The agents communicate with each other according to an interaction topology $\mathcal{G}=(\mathcal{V},\mathcal{E})$, where $\mathcal{V}=\{1,\ldots,n\}$ denotes the agent set and $\mathcal{E}\subseteq \mathcal{V} \times \mathcal{V}$ is the edge set. An edge $(i,j)\in \mathcal{E}$ means that the information of agent $i$ is available to agent $j$.
%The set of agents that need to communicate their information to agent $j$ is defined by $\mathcal{N}_j=\{i\in I: (i,j)\in \mathcal{E}\}$.}

Let $\Delta_i$ be the transition-based SAP for agent $i$.
We define $\bar{\Delta}=[\Delta_1,\ldots,\Delta_n]$.
Then, $\bar{\Delta}' \subseteq \bar{\Delta}''$ means that $(\forall i\in I)\Delta_i'\subseteq \Delta_i''$; and $\bar{\Delta}' \subset \bar{\Delta}''$ means that $[\bar{\Delta}' \subseteq \bar{\Delta}'']\wedge [(\exists i\in I)\Delta_i'\subset \Delta_i'']$.
$\bar{\Delta}$ is said to be delay feasible if $\Delta_i$ is delay feasible for all $i=1,\ldots,n$.
Let $\theta_i^{\Delta_i}:\mathcal{L}(G)\rightarrow \Sigma_{o,i}^*$ be the information mapping for agent $i$.
For any $i\in I$ and any $s\in\mathcal{L}(G)$, let $\Theta_{\Delta_i}^{N_{o,i}}(s)=\{\theta_i^{\Delta_i}(s_{-i}):i=0,1,\ldots, N_{o,i}\}$ be the delayed observation mapping associated with $s \in \mathcal{L}(G)$ and $\Delta_i$, $i=1,\ldots,n$.
Let $\Theta^{\bar{\Delta}}=[\Theta_{\Delta_1}^{N_{o,1}},\ldots, \Theta_{\Delta_n}^{N_{o,n}}]$.
%Then, we can extend delayed $K$-diagnosability of DESs to delayed $K$-codiagnosability of DESs as follows.
%\begin{definition}
%A prefix-closed and live language $\mathcal{L}(G)$ is said to be delay $K$-codiagnosable w.r.t. $E_f$, $\mathcal{L}(G)$, and $\Theta_{\Delta_i}^{N_{o,i}}$, $i=1,\ldots,n$, if the following condition holds:
%\begin{align}\label{Eq17}
%&(\forall k \in \mathcal{F})(\forall s \in \Psi(f_k))(\forall t \in \mathcal{L}(G)\setminus s)|t|\ge K  \Rightarrow \notag \\
%&(\exists i \in I)(\forall u \in \mathcal{L}(G))\Theta_{\Delta_i}^{N_{o,i}}(st)\cap \Theta_{\Delta_i,i}^{N_{o,i}}(u) \neq \emptyset \Rightarrow f_k \in u.
%\end{align}
%\end{definition}
%We also define \begin{align*}
%&\mathcal{T}_{conf,G}(\Theta_{\Delta_i}^{N_{o,i}})=\{(q,q')\in Q\times Q:(\exists s,s'\in \mathcal{L}(G))\\
%&\Theta_{\Delta_i}^{N_{o,i}}(s)\cap \Theta_{\Delta_i}^{N_{o,i}}(s')\neq \emptyset \wedge q=\delta(q_0,s)\wedge q'=\delta(q_0,s')\},
%\end{align*} as the set of confusable state pairs of $G$ under $\Theta_{\Delta_i}^{N_{o,i}}$.
Let us define \begin{align*}
&\mathcal{T}_{conf,G}(\Theta^{\bar{\Delta}})=\{(q,q_1,\ldots, q_n)\in \underbrace{Q \times \cdots \times Q}_{n+1}:\\
&(\exists s,s_1,\ldots,s_n\in \mathcal{L}(G))[q=\delta(q_0,s)] \wedge \\
&[(\forall i\in I)q_i=\delta(q_0,s_i)\wedge \Theta_{\Delta_i}^{N_{o,i}}(s)\cap \Theta_{\Delta_i}^{N_{o,i}}(s_i) \neq \emptyset]\}
\end{align*} as the set of confusable vectors of states of $G$ under $\Theta_{\bar{\Delta}}^{N_o}$.

For the purpose of decentralized  control, diagnosis, or other applications, the agents must sufficiently activate sensors
to distinguish vectors of states of $G$.
We denote all these vectors of states by a specification $\mathcal{T}'_{spec} \subseteq \underbrace{Q \times \cdots \times Q}_{n+1}$.
We require that all the $(q,q_1,\ldots,q_n)\in \mathcal{T}'_{spec}$ are distinguishable from the viewpoint of the agents, i.e.,
\begin{align}\label{Eq20}
\mathcal{T}_{conf,G}(\Theta^{\bar{\Delta}}) \cap \mathcal{T}'_{spec}=\emptyset.
\end{align}
We say that $\bar{\Delta}$ satisfies $\mathcal{T}'_{spec}$ if (\ref{Eq20}) holds.

%To verify (\ref{Eq18}), we should compute $\mathcal{T}_{conf,G}(\Theta^{\bar{\Delta}})$.
Since we focus on the centralized sensor activation problem, the formal algorithms for the computation of  $\mathcal{T}_{conf,G}(\Theta^{\bar{\Delta}})$ and the verification of (\ref{Eq20}) are beyond the scope of this paper.

%To verify (\ref{Eq18}), we should compute $\mathcal{T}_{conf,G}(\Theta^{\bar{\Delta}})$.
%It can be calculated in a similar way as we calculate $\mathcal{T}_{conf,G}(\Theta^{{\Delta}})$ in \ref{Eq13}.
%Roughly speaking, we need to first calculate all $(q,q_1,\ldots,q_n)$ such %that there exist $s,s_1,\ldots,s_n\in\mathcal{L}(G)$ with $q=\delta(q_0,s)$%, $q_i=\delta(q_0,s_i)$, and $\theta^{\Delta_i}(s)=\theta^{\Delta_i}(s_i)$ %for all $i=1,\ldots,n$.
%Then, we can obtain $\mathcal{T}_{conf,G}(\Theta^{\bar{\Delta}})=\{(x,x_1,)\}$.
%We define $\bar{\Delta}=[\Delta_1,\ldots,\Delta_n]$.
%Then, given $\bar{\Delta}'=[\Delta'_1,\ldots,\Delta'_n]$ and $\bar{\Delta}''=[\Delta''_1,\ldots,\Delta''_n]$,  $\bar{\Delta}' \subseteq \bar{\Delta}''$ means that $(\forall i\in I)\Delta_i'\subseteq \Delta_i''$; and $\bar{\Delta}' \subset \bar{\Delta}''$ means that $[\bar{\Delta}' \subseteq \bar{\Delta}'']\wedge [(\exists i\in I)\Delta_i'\subset \Delta_i'']$.
%We define $\bar{\Theta}^{\bar{\Delta}}=[\Theta_{\Delta_1}^{N_{o,1}},\ldots,\Theta_{\Delta_n}^{N_{o,n}}]$, where $\Theta_{\Delta_i}^{N_{o,i}}$ is the delayed observation mapping of agent $i\in I$.

\begin{problem}\label{Prob2}
\textcolor{blue}{(Decentralized Sensor Activation Optimization)} Suppose we are given a system $G$, a specification $\mathcal{T}'_{spec}$, and a set of agents.
Assume that if all the agents activates all the sensors, $\mathcal{T}'_{spec}$ is satisfied, i.e., $\mathcal{T}_{conf,G}(\Theta^{{\bar{\Delta}^{all}}}) \cap \mathcal{T}_{spec}=\emptyset$, where $\bar{\Delta}^{all}=[{\Delta}_1^{all},\ldots,{\Delta}_n^{all}]$ with ${\Delta}_i^{all}=Q \times \Sigma_{o,i}$.
We would like to find an SAP $\bar{\Delta}^*=[{\Delta}^*_1,\ldots,{\Delta}^*_n]\subseteq \bar{\Delta}^{all}$ such that:
1)  $\bar{\Delta}^*$ is delay feasible, i.e., ${\Delta}^*_i$ satisfies \textup{C}3 and \textup{C}4 for $i=1,\ldots, n$; 2) $\mathcal{T}'_{spec}$ is satisfied under $\bar{\Delta}^*$, i.e., $\mathcal{T}_{conf,G}(\Theta^{\bar{\Delta}^*}) \cap \mathcal{T}'_{spec}=\emptyset$; 3) $\bar{\Delta}^*$ is minimal, i.e., there is no other $\bar{\Delta}'\subset \bar{\Delta}^*$ satisfying 1) and 2).
\end{problem}

The following proposition will be used later.
\begin{proposition}\label{Proposition9}
Let $\bar{\Delta}'$ and $\bar{\Delta}''$ be two delay feasible SAPs and $\bar{\Delta}' \subseteq \bar{\Delta}''$.
%Let $\bar{\Theta}^{\bar{\Delta}'}$ and $\bar{\Theta}^{\bar{\Delta}''}$ be the delayed observation mappings corresponding to $\bar{\Delta}'$ and $\bar{\Delta}''$, respectively.
If $\bar{\Delta}'$ satisfies $\mathcal{T}'_{spec}$, then $\bar{\Delta}''$ satisfies $\mathcal{T}'_{spec}$.
\end{proposition}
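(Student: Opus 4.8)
The plan is to reduce Proposition~\ref{Proposition9} to the decentralized analogue of the anti-monotonicity property of Proposition~\ref{Prop4}, namely the set inclusion
\begin{align*}
\bar{\Delta}' \subseteq \bar{\Delta}'' \Rightarrow \mathcal{T}_{conf,G}(\Theta^{\bar{\Delta}''}) \subseteq \mathcal{T}_{conf,G}(\Theta^{\bar{\Delta}'}).
\end{align*}
Once this inclusion is available the proposition is immediate: since $\bar{\Delta}'$ satisfies $\mathcal{T}'_{spec}$ we have $\mathcal{T}_{conf,G}(\Theta^{\bar{\Delta}'}) \cap \mathcal{T}'_{spec} = \emptyset$, and the inclusion then forces $\mathcal{T}_{conf,G}(\Theta^{\bar{\Delta}''}) \cap \mathcal{T}'_{spec} \subseteq \mathcal{T}_{conf,G}(\Theta^{\bar{\Delta}'}) \cap \mathcal{T}'_{spec} = \emptyset$, which is exactly the assertion that $\bar{\Delta}''$ satisfies $\mathcal{T}'_{spec}$.

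To establish the inclusion I would take an arbitrary confusable vector $(q,q_1,\ldots,q_n) \in \mathcal{T}_{conf,G}(\Theta^{\bar{\Delta}''})$ together with its witnessing strings $s,s_1,\ldots,s_n \in \mathcal{L}(G)$, so that $\delta(q_0,s)=q$ and, for every agent $i\in I$, $\delta(q_0,s_i)=q_i$ and $\Theta_{\Delta_i''}^{N_{o,i}}(s) \cap \Theta_{\Delta_i''}^{N_{o,i}}(s_i) \neq \emptyset$. The aim is to re-use the \emph{same} strings $s,s_1,\ldots,s_n$ to certify membership in $\mathcal{T}_{conf,G}(\Theta^{\bar{\Delta}'})$. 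The single fact that remains to be verified is a per-agent, string-level anti-monotonicity with the strings held fixed: for each $i$,
\begin{align*}
\Theta_{\Delta_i''}^{N_{o,i}}(s) \cap \Theta_{\Delta_i''}^{N_{o,i}}(s_i) \neq \emptyset \Rightarrow \Theta_{\Delta_i'}^{N_{o,i}}(s) \cap \Theta_{\Delta_i'}^{N_{o,i}}(s_i) \neq \emptyset.
\end{align*}
This is a single-agent statement, proved by the same chain as in the proof of Proposition~\ref{Prop4}. Concretely, a non-empty intersection means $\theta_i^{\Delta_i''}(s_{-j}) = \theta_i^{\Delta_i''}((s_i)_{-j'})$ for some offsets $j,j'$; since $\Delta_i'\subseteq \Delta_i''$ are both delay feasible, Proposition~\ref{Prop3} identifies these information mappings with observation mappings on $\tilde{G}$, Lemma~\ref{Lem1} applied to the finer policy $\Delta_i''$ and the coarser policy $\Delta_i'$ yields $\theta_i^{\Delta_i'}(s_{-j}) = \theta_i^{\Delta_i'}((s_i)_{-j'})$, and Proposition~\ref{Prop3} translates this back into a non-empty intersection of the two delayed observation mappings under $\Delta_i'$. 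Thus the coarser policy cannot distinguish the very prefixes that the finer policy already fails to distinguish.

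Feeding this back, the unchanged anchor $s$ and the unchanged confusing strings $s_1,\ldots,s_n$ witness $(q,q_1,\ldots,q_n) \in \mathcal{T}_{conf,G}(\Theta^{\bar{\Delta}'})$, which completes the inclusion and hence the proposition. The step requiring the most care — and the reason Proposition~\ref{Prop4} cannot simply be quoted at the level of state pairs — is the sharing of the single anchor string $s$ across all $n$ agents. Invoking the state-pair version of anti-monotonicity separately for each agent would in general return a \emph{different} string reaching $q$ for each agent, and those could not be amalgamated into the one common $s$ demanded by the decentralized definition of $\mathcal{T}_{conf,G}(\Theta^{\bar{\Delta}})$. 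Keeping the argument at the string level, where $s$ (and each $s_i$) is held fixed while only the policy is coarsened, is precisely what makes the gluing across the agents valid.
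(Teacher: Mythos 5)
Your proof is correct and follows essentially the same route as the paper: take the witnessing strings $s,s_1,\ldots,s_n$ of a confusable vector under $\bar{\Delta}''$, show agent by agent that the \emph{same} strings remain confusable under $\bar{\Delta}'$, and glue the $n$ conclusions back into membership in $\mathcal{T}_{conf,G}(\Theta^{\bar{\Delta}'})$ (the paper phrases this as a contradiction, you phrase it as a direct inclusion, which is cosmetic). The one substantive difference is how the per-agent step is justified, and here your version is actually the more rigorous one. The paper's proof invokes Proposition~\ref{Prop4} to get $(\delta(q_0,s),\delta(q_0,s_i))\in \mathcal{T}_{conf,G}(\Theta_{\Delta'_i}^{N_{o,i}})$ and then writes ``therefore $\Theta_{\Delta'_i}^{N_{o,i}}(s)\cap \Theta_{\Delta'_i}^{N_{o,i}}(s_i)\neq\emptyset$'' --- a step that, read literally, does not follow: state-pair membership only guarantees that \emph{some} pair of strings reaching those states is confusable under $\Delta'_i$, not the specific $s,s_i$ needed for the cross-agent gluing. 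This is precisely the pitfall you flag in your last paragraph, and you avoid it by proving the string-level anti-monotonicity directly via Proposition~\ref{Prop3}, Lemma~\ref{Lem1}, and Proposition~\ref{Prop3} again --- which is exactly the chain used \emph{inside} the paper's proof of Proposition~\ref{Prop4}, where the strings are indeed held fixed. So the paper's conclusion is sound only because its Proposition~\ref{Prop4} proof secretly establishes the string-level fact; your proposal makes that dependence explicit, which strengthens rather than departs from the paper's argument.
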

\begin{proof}
Please see Appendix-$M$.
\end{proof}

Building on the results of Section IV and exploiting Proposition \ref{Proposition9}, we propose the Algorithm 3 for solving Problem \ref{Prob2}.

\begin{algorithm}
  \caption{\textsc{Calculating} $\bar{\Delta}^*$} \label{alg1}
    \LinesNumbered
  \KwIn {Automaton ${G}$ and specification $\mathcal{T}'_{spec}$\;} 
\KwOut{ Minimal SAP $\bar{\Delta}^*$\;}
   Initially, set $I=\{1,\ldots,n\}$. Set $\Delta_i \leftarrow Q \times \Sigma_{o,i}$ and $D_i \leftarrow \emptyset$ for all $i=1,\ldots,n$. Set $\bar{\Delta}=[\Delta_1,\ldots,\Delta_n]$\;
\While{$I \neq \emptyset$} 
{
Pick an $i\in I$ randomly\;
\While{$\Delta_i \neq D_i$} 
{Pick a $(q,\sigma)\in \Delta_i \setminus D_i$ and set $\tilde{\Delta}_i \leftarrow \Delta_i \setminus \{(q,\sigma)\}$\;
Call Algorithm 1 with the inputs $G$ and $\tilde{\Delta}_i$ to compute the maximum delay feasible $\tilde{\Delta}_i^{\uparrow}$\;
%Calculate $\mathcal{T}_{conf,G}(\Theta_{\tilde{\Delta}^{\uparrow}}^{N_o})$ using (\ref{Eq17})\;
Set $\bar{\Delta}^{\uparrow}\leftarrow [\Delta_1,\ldots,\Delta_{i-1}, \tilde{\Delta}_i^{\uparrow},\Delta_{i+1},\ldots,\Delta_n]$\;
\eIf{$\mathcal{T}_{conf,G}(\Theta^{\bar{\Delta}^{\uparrow}}) \cap \mathcal{T}'_{spec}=\emptyset$}
{Set $\Delta_i \leftarrow \tilde{\Delta}^{\uparrow}_i$\;}
{Set $D_i \leftarrow D_i \cup \{(q,\sigma)\}$\;}
}
Set $I\leftarrow I \setminus \{i\}$\;
}
\Return $\bar{\Delta}^* \leftarrow \bar{\Delta}=[\Delta_1,\ldots,\Delta_n]$.
\end{algorithm} 

%In Algorithm 3, we initialize $\Omega_i\leftarrow \Sigma_{o,i}$. 
%And $D_i$ is the set of transitions that cannot be removed; it is initially empty.
The correctness of Algorithm 3 is established in the
following theorem
\begin{theorem}\label{Theo3}
The output $\bar{\Delta}^*$ of Algorithm 3 is a solution to Problem 2.
\end{theorem}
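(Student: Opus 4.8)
The plan is to verify the three requirements of Problem \ref{Prob2} in turn --- delay feasibility, satisfaction of $\mathcal{T}'_{spec}$, and minimality --- with the last being the substantive part. First, delay feasibility is almost immediate. At every instant each component $\Delta_i$ of $\bar{\Delta}$ is either the initial policy $Q\times\Sigma_{o,i}$ or the output $\tilde{\Delta}_i^{\uparrow}$ of Algorithm 1. The former is delay feasible because membership of $(x,\sigma)$ in $Q\times\Sigma_{o,i}$ depends only on $\sigma$, so both sides of $\textup{C}3$ and $\textup{C}4$ collapse to the single test $\sigma\in\Sigma_{o,i}$; the latter is delay feasible by Proposition \ref{Prop7}. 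Hence the returned $\bar{\Delta}^*$ is delay feasible.

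Second, I would establish the loop invariant that $\bar{\Delta}$ satisfies $\mathcal{T}'_{spec}$ throughout. At initialization $\bar{\Delta}=\bar{\Delta}^{all}$, which satisfies $\mathcal{T}'_{spec}$ by the standing assumption of Problem \ref{Prob2}. The only place $\bar{\Delta}$ changes is the assignment $\Delta_i\leftarrow\tilde{\Delta}_i^{\uparrow}$, which is guarded by the test $\mathcal{T}_{conf,G}(\Theta^{\bar{\Delta}^{\uparrow}})\cap\mathcal{T}'_{spec}=\emptyset$; since the post-update $\bar{\Delta}$ equals $\bar{\Delta}^{\uparrow}$, the invariant is preserved. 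As the inner loop only shrinks the finite set $\Delta_i$ and the outer loop visits each agent once, the algorithm terminates and its output satisfies $\mathcal{T}'_{spec}$.

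The heart of the proof is minimality, and here the main obstacle is that the agents are optimized one at a time, so the fixed ``background'' seen while agent $j$ is being reduced is a mixture --- final minimized policies for agents already processed and full policies $\Delta_i^{all}$ for agents not yet processed --- which differs from the final joint policy $\bar{\Delta}^*$. I would argue by contradiction: suppose some delay feasible $\bar{\Delta}'\subset\bar{\Delta}^*$ satisfies $\mathcal{T}'_{spec}$, and fix a component $j$ and a transition $(q,\sigma)\in\Delta_j^*\setminus\Delta_j'$. Since the inner loop for agent $j$ ends with $\Delta_j^*=D_j$, this $(q,\sigma)$ was, at some iteration $t$, added to $D_j$, i.e. tested and rejected. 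Let $A_j$ be the value of $\Delta_j$ just before that test and $B_i$ ($i\ne j$) the then-current policies of the other agents. Because $\Delta_j'\subseteq\Delta_j^*\subseteq A_j$ and $(q,\sigma)\notin\Delta_j'$, we get $\Delta_j'\subseteq A_j\setminus\{(q,\sigma)\}=\tilde{\Delta}_j$; as $\Delta_j'$ is delay feasible and $\tilde{\Delta}_j^{\uparrow}$ is the supremal delay feasible subpolicy of $\tilde{\Delta}_j$ (Theorem \ref{Prop2} and Proposition \ref{Prop7}), it follows that $\Delta_j'\subseteq\tilde{\Delta}_j^{\uparrow}$. For each $i\ne j$, $B_i$ is either $\Delta_i^{all}$ or the final $\Delta_i^*$, and in both cases $\Delta_i'\subseteq\Delta_i^*\subseteq\Delta_i^{all}$ yields $\Delta_i'\subseteq B_i$.

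Collecting these inclusions, the rejected vector $\bar{\Delta}^{\uparrow}=[B_1,\ldots,\tilde{\Delta}_j^{\uparrow},\ldots,B_n]$ satisfies $\bar{\Delta}'\subseteq\bar{\Delta}^{\uparrow}$, and $\bar{\Delta}^{\uparrow}$ is delay feasible (its $j$-th component by Proposition \ref{Prop7}, the others as in the first step). Applying the monotonicity of Proposition \ref{Proposition9} to the pair $\bar{\Delta}'\subseteq\bar{\Delta}^{\uparrow}$ --- with $\bar{\Delta}'$ delay feasible and satisfying $\mathcal{T}'_{spec}$ --- forces $\bar{\Delta}^{\uparrow}$ to satisfy $\mathcal{T}'_{spec}$. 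But $(q,\sigma)$ was rejected precisely because $\mathcal{T}_{conf,G}(\Theta^{\bar{\Delta}^{\uparrow}})\cap\mathcal{T}'_{spec}\ne\emptyset$, a contradiction; hence no such $\bar{\Delta}'$ exists and $\bar{\Delta}^*$ is minimal. The only delicate bookkeeping I expect is confirming that $(q,\sigma)$ survives in $\Delta_j$ down to termination (so that $\Delta_j^*=D_j$ genuinely contains every tested-and-rejected transition) and that $B_i$ is captured correctly as a function of processing order. Proposition \ref{Proposition9} is exactly the device that reconciles the sequentially-built, heterogeneous background with the global minimality claim, which is why invoking it correctly --- rather than any routine computation --- is the crux of the argument.
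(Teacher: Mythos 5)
Your proof is correct and takes essentially the same route as the paper's: both identify, for each retained transition $(q,\sigma)\in\Delta_j^*$, the inner-loop iteration at which it was tested and rejected, establish the componentwise inclusion $\bar{\Delta}'\subseteq\bar{\Delta}^{\uparrow}$ (supremality of $\tilde{\Delta}_j^{\uparrow}$ via Theorem \ref{Prop2}/Proposition \ref{Prop7} for agent $j$, monotone shrinkage for the background agents), and apply the monotonicity of Proposition \ref{Proposition9} to contradict the rejection. Your write-up is somewhat more explicit than the paper's on the routine parts (delay feasibility of $\Delta_i^{all}$ and the loop invariant for $\mathcal{T}'_{spec}$), but the substance of the argument coincides.
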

\begin{proof}
Please see Appendix-$N$.
\end{proof}

\begin{remark}
\textcolor{blue}{The number of iterations of Algorithm 3 is upper bounded by the cardinality of the transitions of the system times the number of agents, i.e., $n\times |Q|\times |\Sigma|$. Each iteration
calls Algorithm 1 once. 
The worst-case complexity of Algorithm 1 is $\mathcal{O}(|Q|^3 \times |\Sigma|^{2\times(N+M)}\times N_c^N \times N_o^M)$.
In all, the worst-case complexity of Algorithm 3 is of
the order of $n\times |Q|\times |\Sigma|$ times the maximum of $|Q|^3 \times |\Sigma|^{2\times(N+M)}\times N_c^N \times N_o^M$ and
the complexity for verifying $\mathcal{T}_{conf,G}(\Theta^{\bar{\Delta}^{\uparrow}}) \cap \mathcal{T}'_{spec}$.
%It can be verified that testing $\mathcal{T}_{conf,G}(\bar{\Delta}^{\uparrow}) \cap \mathcal{T}'_{spec}$ in Line 8 of Algorithm 3 is of worst-case polynomial complexity w.r.t. $|Q|$ and $|\Sigma|$.
%Therefore, the worst-case computational complexity of Algorithm 3 is $\mathcal{O}(n\times |Q|^4 \times |\Sigma|^{2\times(N+M)}\times N_c^N \times N_o^M)$.
}
\end{remark}

\section{Conclusion}

In this paper, we have considered the sensor activation problem for networked DESs, where the sensor systems communicate with the agent over a shared network subject to observation delays and control delays.
%We assume that (i) the delays do not change the order of the observations and controls,
%(ii) both the observation delays and control delays have upper bounds, and (iii)  the actuator of the sensor system always uses the most recently
%received sensor activation command. 
A novel framework for sensor activations under communication delays has been established.
Under this framework, we have proposed the definition of delay feasibility of an SAP.  
We have shown that a delay feasible SAP can be used to dynamically activate sensors, even if there exist non-deterministic control delays and observation delays.
In this context, we have proven that SAPs possess a monotonicity property whenever they satisfy the delay feasibility requirement.
We have also shown that a unique maximum delay feasible subpolicy of a given SAP always exists.
Using these results, we have developed a set of algorithms for the optimization of dynamic sensor activation in networked DESs.
We have shown how to apply the proposed procedures to optimize sensor activations for the purpose of fault diagnosis.
A practical example is also provided to illustrate this application.
Finally, we have extended the proposed framework to a decentralized observation setting.

\bibliographystyle{IEEEtran}     
\bibliography{articles}

\clearpage

\appendix

\subsection{Proof of Proposition \ref{Prop1}}
\begin{proof}
We first prove $\psi(\mathcal{L}(\tilde{G})) \subseteq \mathcal{L}(G)$.
For any $\mu \in \mathcal{L}(\tilde{G})$, we write $\tilde{\delta}(\tilde{q}_0,\mu)=(q,\theta_o,\theta_c)$.
We next prove $q=\delta(q_0,\psi(\mu))$ by induction on the finite length of strings in $\mathcal{L}(\tilde{G})$.
Since $\psi(\varepsilon)=\varepsilon$ and $\tilde{q}_0=(q_0,\varepsilon,\varepsilon)$, the base case is trivially true.
The induction hypothesis is that for any $\mu \in \mathcal{L}(\tilde{G})$ such that $|\mu|\le n$, if $\tilde{\delta}(\tilde{q}_0,\mu)=(q,\theta_o,\theta_c)$, then $q=\delta(q_0,\psi(\mu))$.
We next prove the same is also true for $\mu e \in \mathcal{L}(\tilde{G})$ with $|\mu|=n$.
We write $\tilde{\delta}(\tilde{q}_0,\mu e)=(q',\theta'_o,\theta'_c)$.
By definition,  $e\in \Sigma$ or $e\in \Sigma_h \cup \Sigma_g$.
If $e \in \Sigma$, by (\ref{Eq7}), $q'=\delta(q,e)$.
By the induction hypothesis, $q=\delta(q_0,\psi(\mu))$.
Since $\psi(\mu e)=\psi(\mu)e$, we have  $q'=\delta(\delta(q_0,\psi(\mu)), e)=\delta(q_0,\psi(\mu e))$.
Otherwise, if $e \in \Sigma_h \cup \Sigma_g$, by (\ref{Eq8}) and (\ref{Eq9}),  $q'=q$.
By the induction hypothesis, $q=\delta(q_0,\psi(\mu))$.
Since $\psi(\mu e)=\psi(\mu)$, $q'=q=\delta(q_0,\psi(\mu))=\delta(q_0,\psi(\mu e))$.
Therefore, $\delta(q_0,\psi(\mu))$ is defined for any $\mu \in \mathcal{L}(\tilde{G})$,  which  implies $\psi(\mathcal{L}(\tilde{G})) \subseteq \mathcal{L}(G)$.

We next prove $\psi(\mathcal{L}(\tilde{G})) \supseteq \mathcal{L}(G)$.
%At first, we have $\varepsilon \in  \mathcal{L}(G)$ and $\varepsilon \in \psi(\mathcal{L}(\tilde{G}))$.
For any $s=\sigma_1\cdots\sigma_k \in \mathcal{L}(G)$, by (\ref{Eq7}), (\ref{Eq8}), and (\ref{Eq9}), it can be easily verified that $\mu=\sigma_1h(\sigma_1)g(\sigma_1)\cdots \sigma_kh(\sigma_k)g(\sigma_k)\in \mathcal{L}(\tilde{G})$.
By the definition of $\psi(\cdot)$, $\psi(\mu)=s$.
Thus, we have $\psi(\mathcal{L}(\tilde{G})) \supseteq \mathcal{L}(G)$.
\end{proof}

\subsection{Proof of Lemma \ref{Lem1}}

\begin{proof}
The proof is by contradiction.
Suppose that $P_{\Omega_1}(\mu)= P_{\Omega_1}(\mu')$ and $P_{\Omega_2}(\mu)\neq P_{\Omega_2}(\mu')$.
Without loss of generality, let $\mu$ and $\mu'$ be the shortest such strings, i.e., 
\begin{align}\label{Eq11}
&(\forall \nu, \nu'\in \mathcal{L}(\tilde{G})) [P_{\Omega_1}(\nu)=P_{\Omega_1}(\nu') \wedge  P_{\Omega_2}(\nu) \neq P_{\Omega_2}(\nu')] \notag \\
&\Rightarrow |\mu|+|\mu'| \le |\nu|+|\nu'|.
\end{align}
Clearly, $\mu \neq \varepsilon$.
Because otherwise if $\mu=\varepsilon$, for any $\mu'\in \mathcal{L}(\tilde{G})$ with $P_{\Omega_1}(\mu)=P_{\Omega_1}(\mu')=\varepsilon$, by $\Omega_1\supseteq \Omega_2$, $P_{\Omega_2}(\mu)=P_{\Omega_2}(\mu')=\varepsilon$.
Similarly,  $\mu' \neq \varepsilon$.
Since $\mu,\mu'\neq\varepsilon$, we write $\mu=s\sigma$ and $\mu'=s'\sigma'$ for some $\sigma, \sigma' \in \tilde{\Sigma}$.
We next prove $P_{\Omega_1}(s\sigma)=P_{\Omega_1}(s)\sigma$ by contradiction.

Suppose that $P_{\Omega_1}(s\sigma)=P_{\Omega_1}(s)$.
By the definition of $P_{\Omega_1}(\cdot)$,  (i) $\sigma \in \Sigma_h\cup\Sigma_f$ or (ii) $\sigma \in \Sigma \wedge \sigma \notin \Omega_1(g^{-1}(\psi^g(s)))$.
If (i) is true, by the definition of $P_{\Omega_2}(\cdot)$, $P_{\Omega_2}(s\sigma)=P_{\Omega_2}(s)$.
If (ii) is true, since $\Omega_1 \supseteq \Omega_2$, $\sigma \notin \Omega_2(g^{-1}(\psi^g(s)))$.
By the definition of $P_{\Omega_2}(\cdot)$, we also have $P_{\Omega_2}(s\sigma)=P_{\Omega_2}(s)$.
Thus, for both (i) and (ii), we have $P_{\Omega_2}(\mu)=P_{\Omega_2}(s\sigma)=P_{\Omega_2}(s)$.
Since $P_{\Omega_1}(\mu)=P_{\Omega_1}(\mu') \wedge  P_{\Omega_2}(\mu)\neq P_{\Omega_2}(\mu')$, we have $P_{\Omega_1}(s)=P_{\Omega_1}(\mu') \wedge  P_{\Omega_2}(s)\neq P_{\Omega_2}(\mu')$.
Since $|s|+|\mu'|<|\mu|+|\mu'|$, it contradicts (\ref{Eq11}).
Therefore, $P_{\Omega_1}(s\sigma)=P_{\Omega_1}(s)\sigma$.
Similarly,  we can prove that $P_{\Omega_1}(s'\sigma')=P_{\Omega_1}(s')\sigma'$.

Since $P_{\Omega_1}(s\sigma)=P_{\Omega_1}(s)\sigma$ and $P_{\Omega_1}(s'\sigma')=P_{\Omega_1}(s')\sigma'$ and $P_{\Omega_1}(s\sigma) = P_{\Omega_1}(s'\sigma')$,  we have $P_{\Omega_1}(s)=P_{\Omega_1}(s')$ and $\sigma=\sigma'$.
Thus,  $P_{\Omega_2}(s)=P_{\Omega_2}(s')$, because otherwise, $P_{\Omega_2}(s) \neq P_{\Omega_2}(s')$ contradicts (\ref{Eq11}).
Since $\Omega_2$ is delay feasible, by \textup{C}2, $\sigma \in \Omega_2(\psi(s))\Leftrightarrow \sigma \in \Omega_2(\psi(s'))$.
By \textup{C}1, $\sigma \in \Omega_2(g^{-1}(\psi^g(s)))\Leftrightarrow \sigma \in \Omega_2(g^{-1}(\psi^g(s')))$.
Thus, by $P_{\Omega_2}(s)=P_{\Omega_2}(s')$,  we have $P_{\Omega_2}(s\sigma)=P_{\Omega_2}(s'\sigma)$, which contradicts $P_{\Omega_2}(\mu) \neq P_{\Omega_2}(\mu')$.
\end{proof}

\subsection{Proof of Theorem \ref{Prop2}}

\begin{proof}
The proof is by contradiction.
Let $\Omega=\Omega_1 \cup \Omega_2$.
Suppose that $\Omega$ is not delay feasible, i.e., $\neg \textup{C1}\lor \neg \textup{C2}$.

If $\neg \textup{C1}$, there exist $\sigma \in \Sigma$ and $\mu\sigma \in \mathcal{L}(\tilde{G})$ such that (i) $\sigma \in \Omega(\psi(\mu)) \wedge \sigma \notin \Omega(g^{-1}(\psi^g(\mu)))$ or (ii) $\sigma \notin \Omega(\psi(\mu)) \wedge \sigma \in \Omega(g^{-1}(\psi^g(\mu)))$.
Let us consider the case of (i).
Since $\Omega=\Omega_1 \cup \Omega_2$ and $\sigma \in \Omega(\psi(\mu))$, there exists $i\in \{1,2\}$ such that $\sigma  \in \Omega_i(\psi(\mu))$.
Moreover, since  $\Omega=\Omega_1 \cup \Omega_2$ and  $\sigma \notin \Omega(g^{-1}(\psi^g(\mu)))$,  $\sigma \notin \Omega_i(g^{-1}(\psi^g(\mu)))$, which contradicts that $\Omega_i$ is delay feasible.
Similar to the case of (i), we can draw the conclusion that  $\Omega_i$ is not delay feasible, if case (ii) holds.

Otherwise, if $\neg \textup{C2}$, there exist $\sigma \in \Sigma$ and $\mu, \mu'\in \mathcal{L}(\tilde{G})$ such that (a) $P_{\Omega}(\mu)=P_{\Omega}(\mu')$ and $\sigma \in \Omega(\psi(\mu)) \wedge \sigma \notin \Omega(\psi(\mu'))$ or (b) $P_{\Omega}(\mu)=P_{\Omega}(\mu')$ and $\sigma \notin \Omega(\psi(\mu)) \wedge \sigma \in \Omega(\psi(\mu'))$.
We consider the case of (a).
Since $\Omega=\Omega_1 \cup \Omega_2$ and $\sigma \in \Omega(\psi(\mu))$,  there exists $i \in \{1,2\}$ such that $\sigma  \in \Omega_i(\psi(\mu))$.
Moreover, since $\Omega=\Omega_1 \cup \Omega_2$ and $\sigma \notin \Omega(\psi(\mu'))$,  $\sigma \notin \Omega_i(\psi(\mu'))$.
Since $P_{\Omega}(\mu)=P_{\Omega}(\mu')$, $\Omega \supseteq \Omega_i$, and $\Omega_i$ is delay feasible, by Lemma \ref{Lem1}, $P_{\Omega_i}(\mu)=P_{\Omega_i}(\mu')$.
Therefore, we have $P_{\Omega_i}(\mu)=P_{\Omega_i}(\mu')$, $\sigma  \in \Omega_i(\psi(\mu))$, and $\sigma \notin \Omega_i(\psi(\mu'))$, which violates that $\Omega_i$ is delay feasible.
Similar to the case of (a), we can draw the conclusion that  $\Omega_i$ is not delay feasible, if case (b) holds.
\end{proof}

\subsection{Proof of Proposition \ref{Prop3}}
\begin{proof}
The proof is by induction.
The base case is for $\varepsilon$.
Since $\psi(\varepsilon)=\varepsilon$ and $P_{\Omega}(\varepsilon)=\theta^{\Omega}(\varepsilon)=\varepsilon$, the base case is true.
The induction hypothesis is that for all $\mu \in \mathcal{L}(\tilde{G})$ such that $|\mu|\le n$,  $\psi(\mu)=s \Rightarrow P_{\Omega}(\mu)=\theta^{\Omega}(s).$
We now prove the same is also true for $\mu e \in \mathcal{L}(\tilde{G})$ with $|\mu|=n$.
By the definition of $\tilde{G}$,  $e\in \Sigma$ or $e \in \Sigma_h \cup\Sigma_g$.
We consider each of them separately as follows.

\textsl{Case 1}: $e \in \Sigma$. We have $\psi(\mu e)=\psi(\mu)e=se$.
By definition,  $\theta^{\Omega}(se)=\theta^{\Omega}(s)e$ if $e\in \Omega(s)$, and  $\theta^{\Omega}(se)=\theta^{\Omega}(s)$ if $e \notin \Omega(s)$.
Since $\psi(\mu)=s$, by $\textup{C}1$, it has $e\in\Omega(s) \Leftrightarrow e\in \Omega(g^{-1}(\psi^g(\mu)))$.
Hence, by the definition of $P_{\Omega}(\cdot)$, $P_{\Omega}(\mu e)=P_{\Omega}(\mu)e$ if $e\in \Omega(s)$, and  $P_{\Omega}(\mu e)=P_{\Omega}(\mu)$ if $e \notin \Omega(s)$. Since $P_{\Omega}(\mu)=\theta^{\Omega}(s)$,  $P_{\Omega}(\mu e)=\theta^{\Omega}(s e)$.
\textsl{Case 2}: $e\in \Sigma_h\cup\Sigma_g$. We have $\psi(\mu e)=\psi(\mu)=s$.
By the induction hypothesis, $P_{\Omega}(\mu)=\theta^{\Omega}(s)$.
Since $e\in \Sigma_h\cup\Sigma_g$, by the definition of $P_{\Omega}(\cdot)$, $P_{\Omega}(\mu e)=P_{\Omega}(\mu)$.
Thus,  $P_{\Omega}(\mu e)=\theta^{\Omega}(s)$.
\end{proof}

\subsection{Proof of Proposition \ref{Prop4}}

\begin{proof}
Given any $(q,q') \in \mathcal{T}_{conf,G}(\Theta_{\Omega_1}^{N_o})$,   $\exists s,s'\in \mathcal{L}(G)$ such that $\Theta_{\Omega_1}^{N_o}(s) \cap\Theta_{\Omega_1}^{N_o}(s')\neq \emptyset$, $\delta(q_0,s)=q$, and $\delta(q_0,s')=q'$.
Since $\Theta_{\Omega_1}^{N_o}(s) \cap\Theta_{\Omega_1}^{N_o}(s')\neq \emptyset$, there exist $i,j\le N_o$ such that $\theta^{\Omega_1}(s_{-i})=\theta^{\Omega_1}(s'_{-j})$.
Since $s_{-i},s'_{-j}\in \mathcal{L}(G)$, by Proposition \ref{Prop1}, $\exists \mu,\mu'\in \mathcal{L}(\tilde{G})$ such that $\psi(\mu)=s_{-i}$ and $\psi(\mu')=s'_{-j}$.
Moreover, since $\theta^{\Omega_1}(s_{-i})=\theta^{\Omega_1}(s'_{-j})$  and  $\Omega_1$ is delay feasible, by Proposition \ref{Prop3}, $P_{\Omega_1}(\mu)=P_{\Omega_1}(\mu')$.
Since $\Omega_1 \supseteq \Omega_2$ and $\Omega_2$ is delay feasible, by Lemma \ref{Lem1},  $P_{\Omega_2}(\mu)=P_{\Omega_2}(\mu')$.
Since $\psi(\mu)=s_{-i}$ and $\psi(\mu')=s'_{-j}$, by Proposition \ref{Prop3}, $\theta^{\Omega_2}(s_{-i})=\theta^{\Omega_2}(s'_{-j})$.
Thus, $\Theta_{\Omega_2}^{N_o}(s) \cap \Theta_{\Omega_2}^{N_o}(s')\neq \emptyset$. 
Furthermore, since $\delta(q_0,s)=q$ and $\delta(q_0,s')=q'$, we have $(q,q') \in \mathcal{T}_{conf,G}(\Theta_{\Omega_2}^{N_o})$.
Since $(q,q')$ is arbitrarily given,  $\mathcal{T}_{conf,G}(\Theta_{\Omega_1}^{N_o}) \subseteq \mathcal{T}_{conf,G}(\Theta_{\Omega_2}^{N_o})$.
\end{proof}

\subsection{Proof of Lemma \ref{Lem2}}
\begin{proof}
The proof is by induction.
The base case is true since $P_{\Delta}(\varepsilon)=\theta^{\hat{\Delta}}(\varepsilon)$.
The induction hypothesis is that for all $\mu \in \mathcal{L}(\tilde{G})=\mathcal{L}(W)$ with $|\mu|\le n$, $P_{\Delta}(\mu)=\theta^{\hat{\Delta}}(\mu)$.
We now prove the same is also true for $\mu e$ with $|\mu|=n$.
By definition,  $e \in \Sigma$ or $e \in \Sigma_h \cup \Sigma_f$.

\textsl{Case 1}: $e \in \Sigma$. We write $\delta_W(q_{0,W},\mu)=q_W=(q,\theta_o,\theta_c,x)$.
Then, $x=\delta(q_0,g^{-1}(\psi^g(\mu)))$.
By definition, $\theta^{\hat{\Delta}}(\mu e)=\theta^{\hat{\Delta}}(\mu)e$ if $(q_W,e) \in \hat{\Delta}$, and $\theta^{\hat{\Delta}}(\mu e)=\theta^{\hat{\Delta}}(\mu)$ if $(q_W,e) \notin \hat{\Delta}$.
By the definition of $\hat{\Delta}$,  $(q_W,e) \in \hat{\Delta} \Leftrightarrow (x,e)\in \Delta$.
Thus, $\theta^{\hat{\Delta}}(\mu e)=\theta^{\hat{\Delta}}(\mu)e$ if $(x,e)\in \Delta$ and $\theta^{\hat{\Delta}}(\mu e)=\theta^{\hat{\Delta}}(\mu)$ if $(x,e)\notin \Delta$.
Since  $x=\delta(q_0,g^{-1}(\psi^g(\mu)))$, by the definition of $P_{\Delta}(\cdot)$, $P_{\Delta}(\mu e)=P_{\Delta}(\mu)e$ if $(x,e) \in \Delta$ and $P_{\Delta}(\mu e)=P_{\Delta}(\mu)$ if $(x,e) \notin \Delta$.
Thus, by $P_{\Delta}(\mu)=\theta^{\hat{\Delta}}(\mu)$,  $P_{\Delta}(\mu e)=\theta^{\hat{\Delta}}(\mu e)$.
\textsl{Case 2}: $e \in \Sigma_h \cup \Sigma_f$. $P_{\Delta}(\mu e)=P_{\Delta}(\mu)=\theta^{\hat{\Delta}}(\mu)=\theta^{\hat{\Delta}}(\mu e)$.
\end{proof}

\subsection{Proof of Proposition \ref{Prop6}}
\begin{proof}
Let
$\mathcal{T}=\{(q,q')\in Q^2: \exists  ((q,\theta_o,\theta_c,x),  (q',\theta'_o,\theta'_c,x')) \in \mathcal{T}_{conf,W}(\theta^{\hat{\Delta}})\}.$
We first prove $\mathcal{T}_{conf,G}(P_{\Delta})\subseteq \mathcal{T}$. 
For any  $(q,q')\in \mathcal{T}_{conf,G}(P_{\Delta})$, by (\ref{Eq14}), there exist $\mu,\mu'\in \mathcal{L}(\tilde{G})$ such that $P_{\Delta}(\mu)=P_{\Delta}(\mu')$, $q=\delta(q_0,\psi(\mu))$, and $q'=\delta(q_0,\psi(\mu'))$.
Since $\mathcal{L}(\tilde{G})=\mathcal{L}(W)$, $\mu,\mu'\in \mathcal{L}(W)$.
Let $\delta_{W}(q_{0,W},\mu)=(p,\theta_o,\theta_c,x)$ and  $\delta_{W}(q_{0,W},\mu')=(p',\theta'_o,\theta'_c,x')$.
Since $W=\tilde{G}||G^g$, $\tilde{\delta}(\tilde{q}_0,\mu)=(p,\theta_o,\theta_c)$ and $\tilde{\delta}(\tilde{q}_0,\mu')=(p',\theta'_o,\theta'_c)$.
By Proposition \ref{Prop1}, $p=\delta(q_0,\psi(\mu))=q$ and $p'=\delta(q_0,\psi(\mu'))=q'$.
Meanwhile, since $P_{\Delta}(\mu)=P_{\Delta}(\mu')$, by Lemma \ref{Lem2},  $\theta^{\hat{\Delta}}(\mu)=\theta^{\hat{\Delta}}(\mu')$.
By (\ref{Eq15}), $((p,\theta_o,\theta_c,x), (p',\theta'_o,\theta'_c,x')) \in \mathcal{T}_{conf,W}(\theta^{\hat{\Delta}})$.
Thus, $(p,p')\in \mathcal{T}$.
Since $p=q$ and $p'=q'$, $(q,q')\in \mathcal{T}$.
Since $(q,q')\in \mathcal{T}_{conf,G}(P_{\Delta})$ is arbitrarily given, $\mathcal{T}_{conf,G}(P_{\Delta})\subseteq \mathcal{T}$.

We next prove $\mathcal{T}_{conf,G}(P_{\Delta})\supseteq \mathcal{T}$. 
For any $(q,q')\in \mathcal{T}$, $\exists ((q,\theta_o,\theta_c,x), (q',\theta'_o,\theta'_c,x')) \in \mathcal{T}_{conf,W}(\theta^{\hat{\Delta}})$.
By (\ref{Eq15}), $\exists \mu,\mu' \in \mathcal{L}(W)=\mathcal{L}(\tilde{G})$ such that $\theta^{\hat{\Delta}}(\mu)=\theta^{\hat{\Delta}}(\mu')$, $\delta_W(q_{0,W},\mu)=(q,\theta_o,\theta_c,x)$, and $\delta_W(q_{0,W},\mu')=(q',\theta'_o,\theta'_c,x')$.
Since $W=\tilde{G}||G^g$, $\tilde{\delta}(\tilde{q}_0,\mu)=(q,\theta_o,\theta_c)$ and $\tilde{\delta}(\tilde{q}_0,\mu')=(q',\theta'_o,\theta'_c)$.
By Proposition \ref{Prop1}, $q=\delta(q_0,\psi(\mu))$ and $q'=\delta(q_0,\psi(\mu'))$.
Since $\theta^{\hat{\Delta}}(\mu)=\theta^{\hat{\Delta}}(\mu')$, by Lemma \ref{Lem2}, $P_{\Delta}(\mu)=P_{\Delta}(\mu')$.
By (\ref{Eq14}), $(q,q')\in \mathcal{T}_{conf,G}(P_{\Delta})$.
Since $(q,q')\in \mathcal{T}$ is arbitrarily given, $\mathcal{T}_{conf,G}(P_{\Delta})\supseteq \mathcal{T}$.
\end{proof}

\subsection{Proof of Proposition \ref{Proposition5}}
\begin{proof}
($\Rightarrow$) 
The proof is by contradiction.
If $(\ref{Equation14})$ does not hold, there exists $(q,\theta_o,\theta_c,x)\in Q_W$ with $\tilde{\delta}((q,\theta_o,\theta_c),\sigma)!$ such that (i)  $(q,\sigma) \in \Delta \wedge (x,\sigma) \notin \Delta$, or (ii) $(q,\sigma) \notin \Delta \wedge (x,\sigma) \in \Delta$.
Without loss of generality,  write $\delta_W(q_{0,W},\mu)=(q,\theta_o,\theta_c,x)$.
Since $W=\tilde{G}||G^g$,   $\tilde{\delta}(\tilde{q}_{0},\mu)=(q,\theta_o,\theta_c)$ and $\delta(q_{0},g^{-1}(\psi^g(\mu)))=x$.
If (i), since $\tilde{\delta}((q,\theta_o,\theta_c),\sigma)!$, $\mu\sigma \in \mathcal{L}(\tilde{G})$.
Since $(x,\sigma) \notin {\Delta}$, we have $(\delta(q_0,g^{-1}(\psi^g(\mu))),\sigma) \notin  {\Delta}$.
Moreover, since $\tilde{\delta}(\tilde{q}_{0},\mu)=(q,\theta_o,\theta_c)$, by Proposition \ref{Prop1}, $\delta(q_0,\psi(\mu))=q$.
Since $(q,\sigma)\in \Delta$,  $(\delta(q_0,\psi(\mu)),\sigma) \in \Delta$.
By $\textup{C}3$,  it contradicts that $\Delta$ is a delay feasible SAP.
If (ii), similar to the case of  (i), it contradicts that $\Delta$ is a delay feasible SAP.

If $(\ref{Equation15})$ does not hold, there exists $(q,q') \in \mathcal{T}_{conf,G}(P_{\Delta})$ with $(q,\sigma) \in \Delta$ and $(q',\sigma) \notin \Delta$.
Since $(q,q') \in \mathcal{T}_{conf,G}(P_{\Delta})$, there exist $\mu,\mu'\in \mathcal{L}(\tilde{G})$ such that $\delta(q_0,\psi(\mu))=q$, $\delta(q_0,\psi(\mu'))=q'$, and $P_{\Delta}(\mu)=P_{\Delta}(\mu')$.
Therefore, $\exists \mu,\mu'\in \mathcal{L}(\tilde{G})$ such that $\delta(q_0,\psi(\mu))=q$, $\delta(q_0,\psi(\mu'))=q'$, $P_{\Delta}(\mu)=P_{\Delta}(\mu')$, $(q,\sigma) \in \Delta$, and $(q',\sigma) \notin \Delta$.
By $\textup{C}4$,  it contradicts that $\Delta$ is a delay feasible SAP.

($\Leftarrow$)
Also by contradiction.
Suppose that $\Delta$ is not delay feasible.
Thus, $\textup{C}3$ or $\textup{C}4$ are not true for $\Delta$.
If $\neg \textup{C}3$, there exist $\sigma \in \Sigma$ and $\mu\sigma \in \mathcal{L}(\tilde{G})$ such that (i) $(\delta(q_0,\psi(\mu)),\sigma)\in \Delta \wedge (\delta(q_0,g^{-1}(\psi^g(\mu))),\sigma)\notin \Delta$ or (ii) $(\delta(q_0,\psi(\mu)),\sigma)\notin \Delta \wedge (\delta(q_0,g^{-1}(\psi^g(\mu))),\sigma) \in \Delta$.
Without loss of generality, we write $\tilde{\delta}(\tilde{q}_0,\mu)=(q,\theta_o,\theta_c)$.
By Proposition \ref{Prop1}, $q=\delta(q_0,\psi(\mu))$.
Since $\mu\sigma \in \mathcal{L}(\tilde{G})$, $\tilde{\delta}((q,\theta_o,\theta_c),\sigma)!$.
Since $W=\tilde{G}||G^g$,  ${\delta}_W({q}_{0,W},\mu)=(q,\theta_o,\theta_c,x)$, where $x=\delta(q_0,g^{-1}(\psi^g(\mu)))$.
If (i) is ture, since  $(\delta(q_0,\psi(\mu)),\sigma)\in \Delta$, we have $(q,\sigma)\in \Delta$; Since $(\delta(q_0,g^{-1}(\psi^g(\mu))),\sigma)\notin \Delta$, $(x,\sigma)\notin \Delta$.
That contradicts $(\ref{Equation14})$.
If $\neg \textup{C}4$,  $\exists\sigma \in \Sigma$ and $\mu,\mu' \in \mathcal{L}(\tilde{G})$ such that $P_{\Delta}(\mu)=P_{\Delta}(\mu')$ and (i) $(\delta(q_0,\psi(\mu)),\sigma)\in \Delta \wedge (\delta(q_0,\psi(\mu')),\sigma)\notin \Delta$ or (ii) $(\delta(q_0,\psi(\mu)),\sigma)\notin \Delta \wedge (\delta(q_0,\psi(\mu')),\sigma) \in \Delta$.
By definition, $(\delta(q_0,\psi(\mu)), \delta(q_0,\psi(\mu')))\in \mathcal{T}_{conf, G}(P_{\Delta})$.
That contradicts $(\ref{Equation15})$.
\end{proof}

\subsection{Proof of Proposition \ref{Prop7}}

\begin{proof}
Let ${\Delta}_i$ be the SAP obtained at the end of $i$th-iteration.
We first prove ${\Delta}_i \supseteq\Delta^{\uparrow}$ by induction on $i=0,1,\ldots, n$.
Since $\Delta_0=\Delta \supseteq \Delta^{\uparrow}$, the base case is true.
The induction hypothesis is that for all $\Delta_i$ with $i \le  k$, we have $\Delta_i \supseteq\Delta^{\uparrow}$.
Now, we prove $\Delta_{k+1} \supseteq \Delta^{\uparrow}$ is also true.

For all $(q,\sigma) \in \Delta_k \setminus {\Delta}_{k+1}$ that are removed by Line \ref{lin7} from $\Delta_k$ in the $k+1$st iteration, we have $\textup{D}1 \lor \textup{D}2 \lor \textup{D}3$.
If $\textup{D}1$ is true, there exists $(q,\theta_o,\theta_c,x)\in Q_W$ such that $\tilde{\delta}((q,\theta_o,\theta_c),\sigma)!$ and $(x,\sigma) \notin \Delta_k$.
Since $\Delta_{k} \supseteq \Delta^{\uparrow}$, $(x,\sigma) \notin \Delta^{\uparrow}$.
Without loss of generality,  write $\delta_W(q_{0,W},\mu)=(q,\theta_o,\theta_c,x)$.
Since $W=\tilde{G}||G^g$,  we have $\tilde{\delta}(\tilde{q}_{0},\mu)=(q,\theta_o,\theta_c)$ and $\delta(q_{0},g^{-1}(\psi^g(\mu)))=x$.
Then, we have the following results:
(i) Since $\tilde{\delta}((q,\theta_o,\theta_c),\sigma)!$, $\mu\sigma \in \mathcal{L}(\tilde{G})$;
(ii) Since $(x,\sigma) \notin {\Delta}^{\uparrow}$,  $(\delta(q_0,g^{-1}(\psi^g(\mu))),\sigma) \notin  {\Delta}^{\uparrow}$.
Since $\Omega^{\uparrow}$ is delay feasible, by $\textup{C}3$,  $(\delta(q_0,\psi(\mu)),\sigma) \notin \Delta^{\uparrow}$.
Since $\tilde{\delta}(\tilde{q}_{0},\mu)=(q,\theta_o,\theta_c)$, by Proposition \ref{Prop1}, $\delta(q_0,\psi(\mu))=q$.
Thus,  $(q,\sigma) \notin \Delta^{\uparrow}$.
If $\textup{D}2$ is true, similar to the case of  $\textup{D}1$, $(q,\sigma) \notin \Delta^{\uparrow}$.
If $\textup{D}3$ is true, $\exists (q,q') \in \mathcal{T}_{conf,G}(P_{\Delta_k})$ such that $(q',\sigma)\notin \Delta_k$.
Since $\Delta_k \supseteq {\Delta}^{\uparrow}$,  $(q',\sigma) \notin \Delta^{\uparrow}$.
Since $(q,q') \in \mathcal{T}_{conf,G}(P_{\Delta_k})$, there exist $\mu,\mu'\in \mathcal{L}(\tilde{G})$ such that $\delta(q_0,\psi(\mu))=q$, $\delta(q_0,\psi(\mu'))=q'$, and $P_{\Delta_k}(\mu)=P_{\Delta_k}(\mu')$.
Since $\Delta_k \supseteq {\Delta}^{\uparrow}$ and ${\Delta}^{\uparrow}$ is delay feasible, by Lemma \ref{Lem1}, $P_{\Delta^{\uparrow}}(\mu)=P_{\Delta^{\uparrow}}(\mu')$.
Therefore, $\exists \mu,\mu'\in \mathcal{L}(\tilde{G})$ such that $\delta(q_0,\psi(\mu))=q$, $\delta(q_0,\psi(\mu'))=q'$, $P_{\Delta^{\uparrow}}(\mu)=P_{\Delta^{\uparrow}}(\mu')$, and $(q',\sigma) \notin \Delta^{\uparrow}$.
Since ${\Delta}^{\uparrow}$ is delay feasible, by $\textup{C}4$,  $(\delta(q_0,\psi(\mu)),\sigma)=(q,\sigma) \notin \Delta^{\uparrow}$.

Therefore, for all $(q,\sigma) \in \Delta_k \setminus {\Delta}_{k+1}$, we have $(q,\sigma) \notin \Delta^{\uparrow}$, which implies $\Delta_{k+1} \supseteq \Delta^{\uparrow}$.

Let ${\Delta}_n$ be the SAP obtained in the last iteration of the repeat-until loop on Line 4.
Next, we prove $\Delta_n \subseteq \Delta^{\uparrow}$.
%Suppose that there is a $(q,\sigma) \in \Delta_n \setminus \Delta^{\uparrow}$.
Since $\Delta^{\uparrow}$ is the maximum delay feasible subpolicy of $\Delta$, to prove  $\Delta_n \subseteq \Delta^{\uparrow}$, we only need to prove that $\Delta_n$ is delay feasible.
We prove $\Delta_n$ is delay feasible by contradiction.
Suppose that $\Delta_n$ is not delay feasible.
Thus, $\textup{C}3$ or $\textup{C}4$ are not true for $\Delta_n$.

If $\neg \textup{C}3$, there exist $\sigma \in \Sigma$ and $\mu\sigma \in \mathcal{L}(\tilde{G})$ such that (i) $(\delta(q_0,\psi(\mu)),\sigma)\in \Delta_n \wedge (\delta(q_0,g^{-1}(\psi^g(\mu))),\sigma)\notin \Delta_n$ or (ii) $(\delta(q_0,\psi(\mu)),\sigma)\notin \Delta_n \wedge (\delta(q_0,g^{-1}(\psi^g(\mu))),\sigma) \in \Delta_n$.
Without loss of generality, we write $\tilde{\delta}(\tilde{q}_0,\mu)=(q,\theta_o,\theta_c)$.
By Proposition \ref{Prop1}, $q=\delta(q_0,\psi(\mu))$.
Since $\mu\sigma \in \mathcal{L}(\tilde{G})$, $\tilde{\delta}((q,\theta_o,\theta_c),\sigma)!$.
Since $W=\tilde{G}||G^g$,  ${\delta}_W({q}_{0,W},\mu)=(q,\theta_o,\theta_c,x)$, where $x=\delta(q_0,g^{-1}(\psi^g(\mu)))$.
If (i) is ture, since  $(\delta(q_0,\psi(\mu)),\sigma)\in \Delta_n$,  $(q,\sigma)\in \Delta_n$; Since $(\delta(q_0,g^{-1}(\psi^g(\mu))),\sigma)\notin \Delta_n$, $(x,\sigma)\notin \Delta_n$.
By Line 7, $(q,\sigma)$ will be removed from $\Delta_n$ in the $n+1$st iteration.
If (ii) is true, since  $(\delta(q_0,\psi(\mu)),\sigma)\notin \Delta_n$,  $(q,\sigma)\notin \Delta_n$; Since $(\delta(q_0,g^{-1}(\psi^g(\mu))),\sigma)\in \Delta_n$, $(x,\sigma)\in \Delta_n$.
Also, by Line 7, $(x,\sigma)$ will be removed from $\Delta_n$ in the $n+1$st iteration.

If $\neg \textup{C}4$,  $\exists\sigma \in \Sigma$ and $\mu,\mu' \in \mathcal{L}(\tilde{G})$ such that $P_{\Delta_n}(\mu)=P_{\Delta_n}(\mu')$ and (i) $(\delta(q_0,\psi(\mu)),\sigma)\in \Delta_n \wedge (\delta(q_0,\psi(\mu')),\sigma)\notin \Delta_n$ or (ii) $(\delta(q_0,\psi(\mu)),\sigma)\notin \Delta_n  \wedge (\delta(q_0,\psi(\mu')),\sigma) \in \Delta_n $.
By definition, $(\delta(q_0,\psi(\mu)), \delta(q_0,\psi(\mu')))\in \mathcal{T}_{conf, G}(P_{\Delta_n})$.
By Line 7, if (i) is true, $(\delta(q_0,\psi(\mu)),\sigma)$ will be removed from $\Delta_n$ in the $n+1$st iteration; If (ii) is ture, $(\delta(q_0,\psi(\mu')),\sigma)$  will be removed from $\Delta_n$ in the $n+1$st iteration.

Overall,  $\neg \textup{C}3$ or  $\neg \textup{C}4$ contradict that Algorithm 1 does not terminate until there are no
changes to $\Delta_n$.
Thus, $\Delta_n$ satisfies  $\textup{C}3$ and $\textup{C}4$, which implies $\Delta_n \subseteq \Delta^{\uparrow}$.
\end{proof}

\subsection{Proof of Proposition \ref{Prop8}}

\begin{proof}
    
    Let $\mathcal{T}=\{(x,x')\in Q \times Q: (\exists (q,q') \in \mathcal{T}_{conf,G}(P_{\Delta}))x={R}^{N_o}(q)\wedge x'={R}^{N_o}(q')\}$.
	We first prove $\mathcal{T} \subseteq \mathcal{T}_{conf,G}(\Theta_{\Delta}^{N_o})$.
	For an arbitrary state pair $(x,x')\in \mathcal{T}$,  there exist $(q,q') \in \mathcal{T}_{conf,G}(P_{\Delta})$ and $t,t'\in \Sigma^{\le N_o}$ such that $x=\delta(q,t)$ and $x'=\delta(q',t')$. Since $(q,q')\in \mathcal{T}_{conf,G}(P_{\Delta})$, by (\ref{Eq14}), there exist $\mu, \mu' \in \mathcal{L}(\tilde{G})$ such that $q=\delta(q_0,\psi(\mu)) \wedge q'=\delta(q_0,\psi(\mu')) \wedge P_{\Delta}(\mu)=P_{\Delta}(\mu')$. 
We write $\psi(\mu)=s$ and $\psi(\mu')=s'$.
Then, since $\Delta$ is delay feasible and $P_{\Delta}(\mu)=P_{\Delta}(\mu')$, by Proposition \ref{Prop3}, $\theta^{\Delta}(s)=\theta^{\Delta}(s')$.
Since $q=\delta(q_0,s)$, $q'=\delta(q_0,s')$, $x=\delta(q,t)$, and $x'=\delta(q',t')$, we have $x=\delta(q_0,st)$ and $x'=\delta(q_0,s't')$. 
Moreover, since $t,t' \in \Sigma^{\le N_o}$ and $\theta^{\Delta}(s)=\theta^{\Delta}(s')$, $\Theta_{\Delta}^{N_o}(st)\cap \Theta_{\Delta}^{N_o}(s't')\neq \emptyset$. 
	Therefore,  $\exists st,s't' \in \mathcal{L}(G)$ such that $x=\delta(q_0,st) \wedge x'=\delta(q_0,s't') \wedge  \Theta_{\Delta}^{N_o}(st)\cap \Theta_{\Delta}^{N_o}(s't')\neq \emptyset$, which implies $(x,x')\in \mathcal{T}_{conf,G}(\Theta_{\Delta}^{N_o})$.
	Since $(x,x')\in \mathcal{T}$ is arbitrarily given, $\mathcal{T} \subseteq \mathcal{T}_{conf,G}(\Theta_{\Delta}^{N_o})$.
	
	Next, we prove that $\mathcal{T}_{conf,G}(\Theta_{\Delta}^{N_o}) \subseteq \mathcal{T}$.
    For any $(x,x') \in \mathcal{T}_{conf,G}(\Theta_{\Delta}^{N_o})$, by definition, there exist $s,s' \in \mathcal{L}(G)$ such that $x=\delta(q_0,s) \wedge x'=\delta(q_0,s') \wedge \Theta_{\Delta}^{N_o}(s)\cap \Theta_{\Delta}^{N_o}(s')\neq \emptyset$. 
By the definition of $\Theta_{\Delta}^{N_o}(\cdot)$, $\exists  v,v' \in \Sigma^{\le N_o}$ such that $s=uv \wedge s'=u'v' \wedge \theta^{\Delta}(u)=\theta^{\Delta}(u')$. 
Since $u, u'\in \mathcal{L}(G)$, we write $\delta(q_0,u)=q$ and $\delta(q_0,u')=q'$.
By Proposition \ref{Prop1}, $\exists \mu,\mu'\in \mathcal{L}(\tilde{G})$ such that $\psi(\mu)=u$ and $\psi(\mu')=u'$.
Moreover, since $\theta^{\Delta}(u)=\theta^{\Delta}(u')$ and  $\Delta$ is delay feasible, by Proposition \ref{Prop3}, $P_{\Delta}(\mu)=P_{\Delta}(\mu')$.
By $P_{\Delta}(\mu)=P_{\Delta}(\mu')$, $\delta(q_0,\psi(\mu))=q$, and $\delta(q_0,\psi(\mu'))=q'$, we have $(q,q') \in \mathcal{T}_{conf,G}(P_{\Delta})$. 
Since $\delta(q_0,s)=x$, $\delta(q_0,s')=x$, $s=uv$, and $s'=u'v'$, we have $\delta(q,v)=x$ and $\delta(q',v')=x'$.
    Thus, $\exists (q,q') \in \mathcal{T}_{conf,G}(P_{\Delta})$ such that $\delta(q,v)=x$ and $\delta(q',v')=x'$ with $v,v' \in \Sigma^{\le N_o}$. 
    By the definition of $\mathcal{T}$, $(x,x')\in \mathcal{T}$.
    Since $(x,x')\in \mathcal{T}_{conf,G}(\Theta_{\Delta}^{N_o})$ is arbitrarily given, $\mathcal{T}_{conf,G}(\Theta_{\Delta}^{N_o}) \subseteq \mathcal{T}$.
\end{proof}

\subsection{Proof of Theorem \ref{Theo2}}
\begin{proof}
In Line 1 of Algorithm 2, $\Omega$ is initialized to $Q\times \Sigma_o$.
The set $D$ records the transitions that fail the test for deactivation.
Initially, $D=\emptyset$.
By Line 3 of Algorithm 2, in each iteration, the algorithm takes a transition $(q,\sigma)$ from $\Delta\setminus D$ to potentially remove. 
In Lines 4 and 5 of Algorithm 2, the set $D$ does not change. 
By the if-else loop (Line 6 of Algorithm 2), the transition $(q,\sigma)$ is either removed from $\Delta$ or saved to $D$. 
Since the number of transitions in $\Delta$ is finite, we have $\Delta=D$ in finite steps. 
By Lines 4 and 6 of Algorithm 2, $\Delta^*$ is delay feasible and satisfies $\mathcal{T}_{spec}$. 
Now, we show that the solution is minimal.

Let $\Delta_i$ be the SAP obtained at the end of the $i$th iteration.
When Algorithm 2 terminates,  $\Delta^*=D$.
Give an arbitrary $(q,\sigma)\in \Delta^*=D$. 
When we try to remove $(q,\sigma)$ from $\Delta_i$ at $i+1$st iteration of Algorithm 2,  Line 4 of Algorithm 2 returns the maximum delay feasible subpolicy $\tilde{\Delta}^{\uparrow}$ of $\tilde{\Delta}=\Delta_i \setminus \{(q,\sigma)\}$.
Since $(q,\sigma)\in D$, $\mathcal{T}_{conf,G}(\Theta_{\tilde{\Delta}^{\uparrow}}^{N_o})\cap \mathcal{T}_{spec} \neq \emptyset$.
Sicne $\Delta^* \subseteq {\Delta}_i$, $\Delta^* \setminus \{(q,\sigma)\} \subseteq {\Delta}_i\setminus \{(q,\sigma)\}=\tilde{\Delta}$.
Thus, for any delay feasible $\Delta'\subseteq \Delta^*$ such that $(q,\sigma) \notin \Delta'$,  we have $\Delta' \subseteq \tilde{\Delta}$.
Since $\Delta'$ is delay feasible and $\tilde{\Delta}^{\uparrow}$ is the maximum delay feasible subpolicy of $\tilde{\Delta}$, $\Delta' \subseteq \tilde{\Delta}^{\uparrow}$.
Since $\mathcal{T}_{conf,G}(\Theta_{\tilde{\Delta}^{\uparrow}}^{N_o})\cap \mathcal{T}_{spec} \neq \emptyset$ and $\Delta' \subseteq \tilde{\Delta}^{\uparrow}$, by Proposition \ref{Prop4},  $\mathcal{T}_{conf,G}(\Theta_{\Delta'}^{N_o})\cap \mathcal{T}_{spec} \neq \emptyset$.
Thus, removing $(q,\sigma)$ from $\Delta^*$ will violate the specification $\mathcal{T}_{spec}$. 
Since $(q,\sigma)\in \Delta^*$ is arbitrarily given, $\Delta^*$ is minimal.
\end{proof}

\subsection{Proof of Proposition \ref{Proposition8}}

\begin{proof}
($\Rightarrow$) The proof is by contradiction. 
Suppose that there exists $(q,q')\in \mathcal{T}_{conf,{G}}(\Theta_{\Omega}^{N_o}) \cap \mathcal{T}_{spec} \neq \emptyset$.
%Without loss of generality, we write $(q,q)\in \mathcal{T}_{conf,{G}}(\Theta_{\Omega}^{N_o}) \cap \mathcal{T}_{spec}$.
Then, there exist $s,s'\in \mathcal{L}(\tilde{G})$ such that $\delta(q_0,s)=q$, $\delta(q_0,s')=q'$, and $\Theta_{\Omega}^{N_{o}}(s)\cap \Theta_{\Omega}^{N_{o}}(s')\neq \emptyset$.
Without loss of generality, we write $q=(x,n_1,\ldots,n_m)$ and $q'=(x',n_1',\ldots,n'_m)$. Since $(q,q')\in \mathcal{T}_{spec}$, there exists $i\in \{1,\ldots,m\}$ such that $n_i=K$ and $n_i'=-1$.
Since $\delta(q_0,s)=q$ and $n_i=K$, by (\ref{Eq1-1}), we can write $s=s_1s_2$ such that $s_1\in \Psi(f_i)$ and $|s_2|\ge K$.
Since $\delta(q_0,s')=q'$ and $n'_i=-1$, by (\ref{Eq1-1}), we have $f_i \notin s'$.
Therefore, $(\exists i \in \mathcal{F})(\exists s_1 \in \Psi(f_i))(\exists s_2 \in \mathcal{L}(G) \setminus s_1)|s_2|\ge K  \wedge [(\exists s' \in \mathcal{L}(G))\Theta_{\Omega}^{N_{o}}(s_1s_2)\cap \Theta_{\Omega}^{N_{o}}(s') \neq \emptyset \wedge f_i \notin s'],$
which contradicts (\ref{Eq13}).

($\Leftarrow$) Also by contradiction. Suppose that (\ref{Eq13}) is not true, i.e., 
$(\exists i \in \mathcal{F})(\exists s_1 \in \Psi(f_i))(\exists s_2 \in \mathcal{L}(G) \setminus s_1)|s_2|\ge K  \wedge [(\exists s' \in \mathcal{L}(G))\Theta_{\Omega}^{N_{o}}(s_1s_2)\cap \Theta_{\Omega}^{N_{o}}(s') \neq \emptyset \wedge f_i \notin s'].$
Since $s_1 \in \Psi(f_i)$, $|s_2|\ge K$, and $f_i \notin s'$, by (\ref{Eq1-1}), there exist $s,s'\in \mathcal{L}(G)$ such that $\delta(q_0,s)=q=(x,n_1,\ldots,n_m)$ with $n_i=K$ and $\delta(q_0,s')=q'=(x',n'_1,\ldots,n'_m)$ with $n_i'=-1$.
Thus, $(q,q')\in \mathcal{T}_{spec}.$
Moreover, since $\Theta_{\Omega}^{N_{o}}(s_1s_2)\cap \Theta_{\Omega}^{N_{o}}(s') \neq \emptyset$, $(q,q')\in \mathcal{T}_{conf,{G}}(\Theta_{\Omega}^{N_o})$.
Therefore,  $(q,q')\in \mathcal{T}_{conf,{G}}(\Theta_{\Omega}^{N_o}) \cap \mathcal{T}_{spec}\neq \emptyset$, which contradicts $\mathcal{T}_{conf,{G}}(\Theta_{\Omega}^{N_o}) \cap \mathcal{T}_{spec} =\emptyset$.
\end{proof}

\subsection{Proof of Proposition \ref{Proposition9}}

\begin{proof}
The proof is by contradiction.
We suppose that  $\bar{\Delta}''$ does not satisfy $\mathcal{T}'_{spec}$, but  $\bar{\Delta}'$ satisfies $\mathcal{T}'_{spec}$.
By (\ref{Eq20}), there exists $(q,q_1,\ldots,q_n) \in \mathcal{T}'_{spec}$ such that $(q,q_1,\ldots,q_n) \in \mathcal{T}_{conf,G}(\Theta^{\bar{\Delta}''})$.
By the definition of $\mathcal{T}_{conf,G}(\Theta^{\bar{\Delta}''})$, there exist $s,s_1,\ldots,s_n$ such that $q=\delta(q_0,s)$, $q_i=\delta(q_0,s_i)$, and $\Theta_{\Delta''_i}^{N_{o,i}}(s)\cap \Theta_{\Delta''_i}^{N_{o,i}}(s_i) \neq \emptyset$ for all $i\in I$.
Thus, $(\delta(q_0,s),\delta(q_0,s_i))\in \mathcal{T}_{conf,G}(\Theta_{\Delta''_i}^{N_{o,i}})$.
Since $\bar{\Delta}' \subseteq \bar{\Delta}''$ and both $\bar{\Delta}'$ and $\bar{\Delta}''$ are delay feasible, by Proposition \ref{Prop4},  $(\delta(q_0,s),\delta(q_0,s_i))\in \mathcal{T}_{conf,G}(\Theta_{\Delta'_i}^{N_{o,i}})$.
Therefore, $\Theta_{\Delta'_i}^{N_{o,i}}(s)\cap \Theta_{\Delta'_i}^{N_{o,i}}(s_i) \neq \emptyset$ for all $i\in I$.
Moreover, since $q=\delta(q_0,s)$ and $q_i=\delta(q_0,s_i)$ for all $i\in I$, $(q,q_1,\ldots,q_n) \in \mathcal{T}_{conf,G}(\Theta^{\bar{\Delta}'})$.
Thus, $(q,q_1,\ldots,q_n) \in \mathcal{T}_{conf,G}(\Theta^{\bar{\Delta}'})\cap \mathcal{T}_{spec}'\neq \emptyset$, which contradicts that $\bar{\Delta}'$ satisfies $\mathcal{T}'_{spec}$.
\end{proof}

\subsection{Proof of Theorem \ref{Theo3}}
\begin{proof}
At each iteration, the algorithm takes an agent $i$ from $I$ and a transition from $\Delta_i$ but not in $D_i$ to remove. 
It is either removed or saved into $D_i$. 
Since $\Delta_i$ is upper bounded by $|TR(G)|$ and $I$ is upper bounded by the number of agents $n$, the algorithm
terminates within finite iterations. 
From Algorithm 3, $\bar{\Delta}^*$ is
dealy feasible and satisfies $\mathcal{T}_{spec}'$. 
Therefore, we only need to show that $\bar{\Delta}^*$ is a minimal SAP.

When Algorithm 3 terminates, we have that $\Delta_i^*=\Delta_i$ for all $i\in I$.
Let $(q,\sigma) \in \Delta_i^*$ for some $i\in I$.
At some time when $\bar{\Delta}\supseteq \bar{\Delta}^*$, we try to remove $(q,\sigma)$ from $\Delta_i$,
but cannot do so. 
By Theorem \ref{Prop2}, there exists a maximum delay SAP $\Delta_i^{\uparrow}\subseteq \Delta_i\setminus \{(q,\sigma)\}$.
Let $\bar{\Delta}^{\uparrow}=[\Delta_1,\ldots,\Delta_{i-1},\Delta_i^{\uparrow},\Delta_{i+1},\ldots,\Delta_n]$.
Then,  $\bar{\Delta}^{\uparrow}$ is delay feasible since $\Delta_i^{\uparrow}$ is delay feasible and, for all $j=1,\ldots,n$, $j\neq i$, 
$\Delta_j$ is either the maximum delay feasible subpolicy of a policy for agent $j$ in a previous iteration or is equal to $\Delta_j^{all}$.
Then, $(q,\sigma)\in D_i$ implies that $\bar{\Delta}^{\uparrow}$ does not satisfy $\mathcal{T}_{spec}'$.
Since $\bar{\Delta}^*\subseteq \bar{\Delta}$, for any delay feasible $\bar{\Delta}'\subseteq \bar{\Delta}^*$ with $(q,\sigma)\notin \bar{\Delta}'$, we have $\bar{\Delta}'\subseteq \bar{\Omega}^{\uparrow}$.
By Proposition \ref{Theo3}, $\bar{\Delta}'$ does not satisfy $\mathcal{T}_{spec}'$.
Thus $(q,\sigma)$ cannot be removed.
Since $i\in I$ and $(q,\sigma)\in \Delta_i^*$ are arbitrary, the proof is completed.
\end{proof}

\end{document}